\documentclass[]{article}
\usepackage[english]{babel}

\usepackage{amsfonts, amssymb}
\usepackage{amsmath, amsthm}
\usepackage{dsfont}
\usepackage{mathrsfs}
\usepackage{yfonts}

\usepackage{cancel}
\usepackage{graphicx}
\usepackage{physics}
\usepackage{mathtools}
\usepackage[hidelinks]{hyperref}
\usepackage{subcaption}

\newtheorem{theorem}{Theorem}

\newtheorem{lemma}[theorem]{Lemma}

\newtheorem{prop}[theorem]{Proposition}

\theoremstyle{definition}
\newtheorem{definition}[theorem]{Definition}
\newtheorem*{remark}{Remark}

\usepackage{multicol}
\usepackage[title]{appendix}

\usepackage{cite}

\usepackage{pdfpages}

\usepackage[usenames,dvipsnames]{pstricks}
\usepackage{pstricks-add}
\usepackage{epsfig}
\usepackage{pst-grad} 
\usepackage{pst-plot} 
\usepackage[space]{grffile} 
\usepackage{etoolbox} 

\usepackage{authblk}

\date{February 23, 2026}

\makeatletter
\patchcmd\Gread@eps{\@inputcheck#1 }{\@inputcheck"#1"\relax}{}{}
\makeatother

\DeclareFontEncoding{LS1}{}{}
\DeclareFontSubstitution{LS1}{stix}{m}{n}
\DeclareMathAlphabet{\mathscr}{LS1}{stixscr}{m}{n}
\makeatother

\newcommand{\txtscr}[1]{  \text{\usefont{LS1}{stixscr}{m}{n}#1}}

\newcommand{\bigGroup}{\mathcal{G}}
\newcommand{\elementofbigGroup}{\txtscr{g}}

\newcommand{\ep}[1]{\mathrm{e}^{#1}}
\newcommand{\iu}{\mathrm{i}}

\title{Local measurements and the entanglement transition in quantum spin chains}
\author[1]{Sven Bachmann} 
\author[1]{Mahsa Rahnama} 
\author[2,3]{Gabrielle Tournaire}

\affil[1]{\textit{\small Department of Mathematics, The University of British Columbia, Vancouver, Canada}}

\affil[2]{\textit{ \small Department of Physics and Astronomy, The University of British Columbia, Vancouver, Canada}}

\affil[3]{\textit{\small Stewart Blusson Quantum Matter Institute, The University of British Columbia, Vancouver, Canada}}

\begin{document}

\maketitle

\begin{abstract}
    We consider the transition between short-range entangled (SRE) and long-range entangled states of infinite quantum spin chains which is induced by local measurements. Specifically, we assume that the initial state is in a non-trivial symmetry-protected topological phase of a compact local symmetry group $\mathcal{G}$. We show that the on-site measurements of an appropriate local charge on intervals of increasing lengths transform the initial SRE state into a family of states with increasingly long-range correlations. In particular, the post-measurement states cannot be uniformly short-range entangled. In the case where the initial state is obtained from a product state using a quantum cellular automaton, we construct the infinite-volume post-measurement state and exhibit almost local observables that are maximally correlated. 
\end{abstract}

\section{Introduction}
Since the seminal work~\cite{hastings2005quasiadiabatic} the classification of topological orders of quantum spin systems has relied on the equivalence relation given by local unitaries. Because topologically ordered states are gapped, the physical motivation behind this definition comes from the adiabatic principle: If two states lie at the endpoint of a path of uniformly gapped ground states, then they can be deformed to each other adiabatically and this dynamical flow can be realized locally~\cite{bachmann2018adiabatic}: at zeroth order, this is the quasi-adiabatic continuation, also known as the local spectral flow~\cite{bachmann2012automorphic}. For mathematical purposes, the explicit construction of a path of Hamiltonians is not essential (although it is physically important), and one is left with an equivalence between states given by what is now commonly referred to as locally generated automorphisms (LGA), see for example~\cite{kapustin2022local}: A dynamics over a finite time, which is generated by a time-dependent, sufficiently local interaction.

The key property of LGAs is that they satisfy a Lieb-Robinson bound: propagation is at most ballistic, up to errors that are small outside the linear `sound cone'~\cite{Lieb_Robinson_bound_1972,Nachtergaele_2019}. This ensures in particular that long-range entanglement in the initial state cannot be created or deleted by the dynamics, which is physically the defining property of an equivalence class of topological orders. One may wish to enforce an even stronger locality by considering quantum cellular automata (QCA)~\cite{schumacher2004reversibleQCA} rather than LGAs, which are dynamics with strictly ballistic propagation. Examples thereof, which is a natural family from a quantum computational point of view, are finite depth quantum circuits (FDQC)~\cite{hastings2005quasiadiabatic, Chen_2010_LREvsSRE}. Not surprisingly, FDQCs are good approximations of LGAs by Trotter-type results~\cite{ChildsSuTranWiebeZhu2021,BachmannLange2022}. Reciprocally, in one spatial dimension at least, classifying states using QCAs is equivalent to classifying them using approximately locality preserving automorphisms~\cite{ranard2022converse}, which in turn are equivalent to Lieb-Robinson type automorphisms~\cite{bachmann2025Locality}.

The observation that local measurements, unlike the local unitaries described above, are able to create long-range entanglement dates back to the invention of the one-way quantum computer~\cite{raussendorf2001one} (but see~\cite{ogata2026noteinvariantsmixedstatetopologicalorder} for instances where quantum channels reduce the topological order). In the topological implementation of this measurement-based version of quantum computation~\cite{raussendorf2007fault,tournaire20243d}, toric code eigenstates are created by local measurements of a short-range entangled cluster state. In fact, a similar creation of long-range entanglement by local measurements extends to a large class of topological codes in two dimensions~\cite{bolt2016foliated}, see also~\cite{hoke2023measurement} for experimental challenges. This surprising observation brings up a natural question~\cite{skinner2019measurement}: what is the topological classification of states under an equivalence defined by both LGAs and local measurements? A first complete discussion including many conjectures can be found in~\cite{tantivasadakarn2023hierarchy, tantivasadakarn_long-range_2024}, where measurements are shown to be able to promote short-range entangled, but symmetry-protected topological states to full long-range entangled orders. There, the relationship is explored in detail in the one-dimensional case of matrix product states with an Abelian gauge group. This `entanglement transition' induced by projection is further explored in~\cite{cheng2024universal}.

In this paper, we analyse the entanglement transition induced by local measurements on a symmetry-protected topological (SPT) state of infinite quantum spin chains, for a general compact group $\bigGroup$. Our proof exhibits how local measurements transform the hidden string order that is typical of SPT states (as  exhibited in the AKLT chain~\cite{AKLT}, see also~\cite{den1989preroughening,KennedyTasaki1992} and~\cite{SOandSPT_PerezGarcia_2008})  into a classical long-range order. In this sense, local measurements perform an analog of a Kennedy-Tasaki transformation~\cite{KennedyTasaki1992CMP,Oshikawa1992}, although it is a distinct operation. One of the interests of our construction, which does not need the matrix product machinery, is that it emphasizes the role of the general cohomological SPT index~\cite{Chen2011Classification,Pollmann_2012_SPT,Ogata2021SplitProperty}. Here, we shall use the recent generalization~\cite{carvalho_classification_2024}. It therefore opens up the analysis to more general cases, including higher dimensions.

Specifically, we consider a pure short-range entangled (SRE) state $\omega$ of an infinite quantum spin chain which is invariant under the local action of a compact symmetry group $\mathcal{G}$. Assuming that $\omega$ is in a non-trivial SPT phase, namely is cannot be smoothly mapped to a product state while preserving both locality and symmetry, we first show that it exhibits string order, see Proposition~\ref{prop: string order} for details: There is at least one group element $\tilde g$ such that for any two sites $i<j$, there are observables $W^i, W^j$ of norm one, localized around $i,j$, and such that $\omega(W^i) = 0 = \omega(W^j)$ while
\begin{equation*}
    \vert \omega(W^i U_{\tilde g}^{[i,j)} W^j)\vert = 1
\end{equation*}
where $U_{\tilde g}^{[i,j)}$ is the unitary implementing the symmetry on the interval $[i,j)$.

With this, we can prove the main theorem, Theorem~\ref{theorem: entanglement growth}. For any Abelian group $G\ni\tilde g$, for example the group generated by $\tilde g$, there is a good notion of measuring the $G$-charge at any site $i$ of the chain: This is described by a projection $P^{(i)}_{q}$ associated with the measurement outcome labelled by~$q$. If $\omega_{n,\mathbf{q}}$ denotes the state after measurements on the interval $[-n,n]$, we prove that for $-n\leq i<j\leq n$, we again have $\omega_{n,\mathbf{q}}(W_n^i) = 0 = \omega_{n,\mathbf{q}}(W_n^j)$ but
\begin{equation*}
    \vert \omega_{n,\mathbf{q}}(W_n^{i} W_n^j) - \omega_{n,\mathbf{q}}(W_n^i)\omega_{n,\mathbf{q}}(W_n^j)\vert = 1,
\end{equation*}
without the need of the connecting string $U_{\tilde g}^{[i,j)}$. As a consequence, the family of states cannot be uniformly short-range entangled. A slightly stronger version of the theorem, Theorem~\ref{theorem: long range correlations with QCA}, can be proved whenever the initial SRE state $\omega$ arises from the action of a QCA on a pure state: In this case, the operators $W_n^{i}$ are eventually independent of $n$ and the statement extends from the sequence $(\omega_n)_{n\in\mathbb{N}}$ to any limiting state $\nu$ after the full chain has been measured.

Summarizing, local measurements trade the hidden string order of the SPT phase for a manifest long-range order.

We wish to emphasize here that the nature of the symmetry group $\bigGroup$ is not the crucial element that ensures an entanglement transition. Rather, we exhibit conditions on the initial SPT \emph{state}, and identify the appropriate \emph{measurements} to be carried out, so that an entanglement transition is induced. 

We shall describe our main results in Section~\ref{sec:Intro & Results}, where we present the simple case of the one-dimensional cluster state in parallel to the general situation. In Section~\ref{sec:SPT}, we consider symmetry-protected topological phases and recall the construction of the projective representation characterizing the phases. This is intimately related to string order. With this, we turn in Section~\ref{sec: correlations in the measured state} to the measured state and prove our first main result, Theorem~\ref{theorem: entanglement growth}. Finally, we show in Section~\ref{sec: long range correlation for QCA} that in the case where the entangler is a QCA, appropriately blocked local measurements yield an infinite-volume post-measurement state which is long-range entangled. 

\section{Setup and main result}\label{sec:Intro & Results}
We briefly recall the elementary mathematical setup of infinite quantum spin chains, in particular what we mean by local measurements. We also present a few useful lemmas following these definitions. 

\subsection{Spin chain $C^*$-algebra and states}

To each site $j\in \mathbb{Z}$ of the spin chain we associate the algebra of observables $\mathcal{A}_j\cong M_{d_j}(\mathbb{C})$ and shall assume that $d_j\leq d$ for all $j\in \mathbb{Z}$. When equipped with the usual norm, it is a $C^*$-algebra. Let $\mathcal{P}_{\rm fin}$ be the collection of finite subsets of $\mathbb{Z}$. Then for each $S\in \mathcal{P}_{\rm fin}$, we define $\mathcal{A}_S=\otimes_{j\in S}\mathcal{A}_j$ and the algebra of observables on the infinite spin chain is defined as
\begin{equation*}
    \mathcal{A}= \overline{\bigcup_{S\in\mathcal{P}_{\rm fin}}\mathcal{A}_S},
\end{equation*}
where overline denotes the closure with respect to the operator norm. Note that for any subset $X\subset \mathbb{Z}$, the algebra $\mathcal{A}_X$ defined similarly is a subalgebra of $\mathcal{A}$, and $\mathcal{A}=\mathcal{A}_X\otimes\mathcal{A}_{X^c}$. Throughout this paper, we will often use the notion of \emph{half chain}. That is, we will consider operators with support contained only on the right of some given site. Let $L_j=(-\infty,j)$ and $R_j=[j,+\infty)$, the left and right half chains at site $j\in \mathbb{Z}$. For all such $j$, the algebra $\mathcal{A}$ can be written as $\mathcal{A}=\mathcal{A}_{L_j}\otimes\mathcal{A}_{R_j}$. The set of unitary elements of these algebras shall be denoted $\mathcal{U}$, respectively $\mathcal{U}_X$ for $X\subset\mathbb{Z}$.

\paragraph{Conditional expectations.} Following \cite{Nachtergaele_2019}, for every finite \(X \subset \mathbb{Z}\), we extend the notion of the partial trace to the infinite volume setting. There exists a conditional expectation
\begin{equation*}
   \Pi_X : \mathcal{A} \longrightarrow \mathcal{A},
\end{equation*}
characterized by the properties:
\begin{enumerate}
   \item \(\Pi_X\) is unital, completely positive, and contractive;
   \item \(\Pi_X\) is idempotent with range \(\mathcal{A}_X\), and \(\Pi_X|_{\mathcal{A}_X} = \mathrm{id}_\mathcal{A}\);
   \item \(\Pi_X\) is \(\mathcal{A}_X\)-bimodular:
   \[
      \Pi_X(B_1 A B_2) = B_1 \Pi_X(A) B_2,
      \qquad B_1,B_2 \in \mathcal{A}_X,\; A \in \mathcal{A}.
   \]
\end{enumerate}
Equivalently, for each \(A \in \mathcal{A}\) there exists a unique \(A_X \in \mathcal{A}_X\) such that
\begin{equation*}%\label{eq:PiX-structure}
   \Pi_X(A) = A_X \otimes \mathbb{I}_{X^c}.
\end{equation*}

\paragraph{Almost local observables.} We now fix a class of decay functions used throughout the quasi-locality estimates. Let $\mathcal{F}$ be the class of bounded non-increasing positive functions $f:\mathbb{N} \to \mathbb{R}_+$  that vanish faster than any power, i.e.\ $\lim_{r\to\infty} r^p f(r) = 0$ for all $p>0$.
For a finite subset $X \subset \mathbb Z$ and $r \ge 0$, we define the
$r$-neighborhood of $X$ by
\begin{equation*}%\label{eq:neighbourhood}
X_r := \{ j \in \mathbb Z : \mathrm{dist}(j,X) \le r \}.
\end{equation*}
Let now $I$ be a finite interval. Given a function $f \in \mathcal F$, we say that $A$ is
\emph{$f$-close to $I$} if for all $r \ge 0$,
\begin{equation}\label{eq:f-close}
\| A - \Pi_{I_r}(A) \| \le f(r)\Vert A\Vert.
\end{equation}
We call \emph{$f$-local} an operator $A\in \mathcal{A}$ that is $f$-close to some finite interval $I$. We say that $A$ is \emph{almost local} if it is $f$-local for some $f\in\mathcal{F}$.

\paragraph{States.} A state $\omega$ on the spin chain is a normalized positive linear functional on the algebra $\mathcal{A}$. The set of states on $\mathcal{A}$ is a convex set which is compact with respect to the weak-* topology. A pure state is an extremal point of the convex set of states. A product state is a state that satisfies
\begin{equation*}
    \omega(A_iA_j)=\omega(A_i)\omega(A_j),
\end{equation*}
for all $A_i \in \mathcal{A}_i$ and $ A_j\in \mathcal{A}_j$, whenever $i\neq j$.

\subsection{Short-Range Entanglement}
While the notion of short-range entanglement introduced in~\cite{Chen_2010_LREvsSRE} and briefly described in the introduction is based on LGAs (or slight variants thereof), we shall here adopt a slightly more general notion, which however encapsulates the same locality property, namely that of a \emph{factorizable} automorphism.

\subsubsection{Local automorphisms}

For any $U\in\mathcal{U}$, we define the $^*$-automorphism of $\mathcal{A}$
 \begin{equation*}
     \mathrm{Ad}[U](A)=UAU^* 
 \end{equation*}
where $A\in \mathcal{A}$.

\begin{definition}\label{def:split automorphism}
Let $\alpha$ be a $^*$-automorphism on $\mathcal{A}$.
We say that $\alpha$ satisfies the \emph{factorization property} if for each site $j\in \mathbb{Z}$, there exist automorphisms
\[
\alpha^{L_j} \in \mathrm{Aut}(\mathcal A_{L_j}), 
\qquad 
\alpha^{R_j} \in \mathrm{Aut}(\mathcal A_{R_j}),
\]
and a unitary $U^j \in \mathcal U$ such that
\begin{equation}\label{eq: split-automorphism}
\alpha = (\alpha^{L_j} \otimes \alpha^{R_j}) \circ \mathrm{Ad}[U^j].
\end{equation}
We further require that there is $f \in \mathcal F$ such that $U^j$ is $f$-close to the singleton $\{j\}$ for all $j$.
\end{definition}
\noindent Note that the decay function $f$ characterizing the almost locality of $U^j$ is independent of the site $j$.

\begin{definition}\label{def: alpha almost local}
    We say that a $^\ast$-automorphism $\alpha$ is $f$-almost local if
for every finite interval $I \subset \mathbb Z$ and every
$A \in \mathcal A_I$, the observable $\alpha(A)$ is $f$-close to
$I$ in the sense of \eqref{eq:f-close}.
\end{definition}

The next lemma shows that the factorization property of an automorphism $\alpha$ implies an almost-locality estimate for $\alpha$.
\begin{lemma}\label{lemma: split implies quasi local}
    Let $\alpha$ be a $^*$-automorphism on $\mathcal{A}$ satisfying the factorization property, and let $f\in \mathcal{F}$ be the corresponding decay function. Then there is $f_\alpha\in \mathcal{F}$ depending only on $f$ such that $\alpha$ is $f_\alpha$-almost local.
\end{lemma}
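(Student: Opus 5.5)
The plan is to use the factorization property twice, once at a cut to the left of the support of $A$ and once at a cut to the right. Each time, $\alpha(A)$ becomes approximately an element of the algebra of one half chain. We then intersect the two half-chain localizations with conditional expectations.

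Fix a finite interval $I=[a,b]$, an observable $A\in\mathcal{A}_I$ and $r\geq 1$. Set $j=a-r$ and apply \eqref{eq: split-automorphism} at site $j$, so that $\alpha=(\alpha^{L_j}\otimes\alpha^{R_j})\circ\mathrm{Ad}[U^j]$.

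\emph{Step 1: $\mathrm{Ad}[U^j]$ barely moves $A$.} Write $V=\Pi_{\{j\}_{r-1}}(U^j)$, so that $\|U^j-V\|\leq f(r-1)$. The set $\{j\}_{r-1}=[j-r+1,j+r-1]$ is disjoint from $I$, hence $[V,A]=0$. This gives
\[
\|U^jA U^{j*}-A\|=\|[U^j,A]\|\leq 2\|U^j-V\|\,\|A\|\leq 2f(r-1)\|A\|.
\]
Since $A\in\mathcal{A}_{R_j}$, we have $(\alpha^{L_j}\otimes\alpha^{R_j})(A)=\mathbb{I}\otimes\alpha^{R_j}(A)\in\mathcal{A}_{R_j}$. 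Automorphisms are isometric, so
\[
\operatorname{dist}\bigl(\alpha(A),\mathcal{A}_{R_j}\bigr)\leq 2f(r-1)\|A\|.
\]
Applying the same argument at $k=b+r+1$ gives $\operatorname{dist}(\alpha(A),\mathcal{A}_{L_k})\leq 2f(r-1)\|A\|$, because $\mathcal{A}_{L_k}\ni A$ and $U^k$ is $f$-close to $\{k\}$. The decay function is uniform in the site, which is exactly why the resulting bound will depend only on $f$.

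\emph{Step 2: from two half-chain localizations to $I_r$.} Note that $[j,k)=[a-r,b+r]=I_r$. We need conditional expectations $\Pi_{R_j}$ and $\Pi_{L_k}$ onto the half-chain algebras. We construct them exactly like $\Pi_X$: tensor the identity on the half chain with the normalized tracial state on the complement. Concretely, $\Pi_{R_j}$ is the norm limit of $\Pi_{[j,N]}$ on local observables, extended by continuity.

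These maps are contractive projections onto their ranges. Hence, for any subalgebra $\mathcal{B}$ with conditional expectation $\Pi_\mathcal{B}$, we have $\|B-\Pi_\mathcal{B}(B)\|\leq 2\operatorname{dist}(B,\mathcal{B})$. Because they are built from product traces, they commute, and $\Pi_{R_j}\circ\Pi_{L_k}=\Pi_{I_r}$. Writing $B=\alpha(A)$, we get
\[
\|B-\Pi_{I_r}(B)\|\leq\|B-\Pi_{R_j}(B)\|+\|\Pi_{R_j}(B-\Pi_{L_k}(B))\|\leq 8f(r-1)\|A\|.
\]
Since $\|B\|=\|A\|$, this is the required estimate. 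We therefore set $f_\alpha(r)=\min\{2,8f(r-1)\}$ for $r\geq1$ and $f_\alpha(0)=2$. The value $2$ covers small $r$ trivially, since $\|B-\Pi(B)\|\leq 2\|B\|$. This $f_\alpha$ is non-increasing and decays faster than any power, so $f_\alpha\in\mathcal{F}$, and it depends only on $f$. The bound extends from $A\in\mathcal{A}_I$ to the general case by density if needed.

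The main obstacle I expect is Step 2. The paper has only introduced $\Pi_X$ for finite $X$, so it must be justified carefully that half-chain conditional expectations exist, commute, and compose to $\Pi_{I_r}$. I would first try to avoid infinite half chains altogether. The idea is to replace $\mathcal{A}_{R_j}$ by $\mathcal{A}_{[j,N]}$, approximating $\alpha^{R_j}(A)$ by local elements, and then take $N\to\infty$ using the continuity of $\Pi_{[j,N]}$ as $N$ grows. If that becomes cumbersome, I would fall back on the explicit trace-based construction described above.
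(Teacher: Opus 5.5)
Your proof is correct and follows essentially the paper's route: factorize $\alpha$ at two cuts flanking $I$ at distance about $r$, use the uniform $f$-locality of the factorizing unitaries to place $\alpha(A)$ close to both $\mathcal{A}_{R_j}$ and $\mathcal{A}_{L_k}$, and then intersect the two localizations with conditional expectations. The only difference is technical: the paper avoids half-chain expectations by approximating $\alpha^{R_{i'}}(A)$ with $\Pi_\Lambda(\alpha^{R_{i'}}(A))$ for a large finite $\Lambda\supset I_r$ and using that $\Pi_\Lambda$ acts as $\Pi_{I_r}$ on $\mathcal{A}_{L_{j'}}$, which is precisely your fallback and gives the constant $17f(r)$, whereas your composition $\Pi_{R_j}\circ\Pi_{L_k}=\Pi_{I_r}$ and your more careful choice of disjoint supports give $8f(r-1)$.
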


\begin{proof}
Let $I=[i,j]$ with $i<j$. Let $r>0$ and consider the factorization property at both sites $i'=i-r$, $j'=j+r$. Thus, $\alpha$ satisfies
\begin{equation*}
\alpha^{L_{i'}}\otimes\alpha^{R_{i'}}\circ\mathrm{Ad}[U^{i'}]
=\alpha
=\alpha^{L_{j'}}\otimes\alpha^{R_{j'}}\circ\mathrm{Ad}[U^{j'}].
\end{equation*}
For $A\in \mathcal{A}_I$, the fact that $U^{i'}$ is $f$-localized around $i'$ implies it almost commutes with $A$, namely
\begin{equation*}
    \|\mathrm{Ad}[U^{i'}](A)-A\|\leq 4\|A\|f(r),
\end{equation*}
and the same holds for $i'\leftrightarrow j'$. Therefore, and since $A$ is supported on $R^{i'}\cap L^{j'}$, 
\begin{align}
    \|\alpha(A)-\alpha^{R_{i'}}(A)\|&\leq 4\|A\|f(r), \label{eq: alpha is almost on Ri}\\
    \|\alpha(A)-\alpha^{L_{j'}}(A)\|&\leq 4\|A\|f(r), \label{eq: alpha is almost on Lj}
\end{align}
and in turn
\begin{equation}\label{eq:LiRi}
    \|\alpha^{R_{i'}}(A)-\alpha^{L_{j'}}(A)\|\leq8\|A\|f(r).
\end{equation}

Furthermore, since $\mathcal{A}_{R_{i'}}$ is a quasi-local algebra, there is a finite subset $\Lambda\subset R_{i'}$ such that
\begin{equation}\label{eq: quasi locality of alpha Ri }
    \|\alpha^{R_{i'}}(A)-\Pi_\Lambda(\alpha^{R_{i'}}(A))\|\leq \|A\|f(r). 
\end{equation}
Note that, by possibly enlarging $\Lambda$, we may assume that $[i',j']\subset\Lambda$. Using that $\Pi_\Lambda$ is a norm-one conditional expectation, we obtain from~(\ref{eq:LiRi}) that
\begin{equation}\label{eq: Pi_Lambda or Ri and Lj is f close}
    \|\Pi_\Lambda(\alpha^{R_{i'}}(A))-\Pi_\Lambda(\alpha^{L_{j'}}(A))\|\leq8\|A\|f(r)
\end{equation}
and we note that $\Pi_\Lambda(\alpha^{L_{j'}}(A))=\Pi_{I_r}(\alpha^{L_{j'}}(A))$ since $\Lambda\cap L_{j'} \subset [i',j']$ and $[i',j']\subset\Lambda$. We conclude from~\eqref{eq: alpha is almost on Ri}, \eqref{eq: alpha is almost on Lj}, \eqref{eq: quasi locality of alpha Ri } and~\eqref{eq: Pi_Lambda or Ri and Lj is f close} that
\begin{equation*}
    \|\alpha(A)-\Pi_{I_r}(\alpha(A))\| \leq 17 \|A\|f(r),
\end{equation*}
concluding the proof, with $f_\alpha = 17 f$.
\end{proof}

\begin{remark}
    The inverse of $\alpha$ is also a $^*$-automorphism satisfying the factorization property \eqref{eq: split-automorphism} as
    \begin{equation*}
        \alpha^{-1}=(\alpha^{L_j})^{-1}\otimes(\alpha^{R_j})^{-1}\circ\mathrm{Ad}[\alpha^{L_j}\otimes\alpha^{R_j}(U^{j*})], \quad \forall j\in \mathbb{Z}. 
    \end{equation*}
    Since $\alpha^{L_j}\otimes\alpha^{R_j}$ is almost local, it maps almost local observables to almost local observables. Therefore, $\alpha^{L_j}\otimes\alpha^{R_j}(U^{j*})$ is $f'$-close to $\{j\}$ with $f'$ depending only on $f_{\alpha}$. In the following, we denote $f_\alpha\in \mathcal{F}$ a function such that $\alpha$ and $\alpha^{-1}$ are $f_\alpha$-almost-local.
\end{remark}

Although Lemma~\ref{lemma: split implies quasi local} gives a Lieb-Robinson bound type of estimate on the spread of operators for factorizable automorphisms, we will also consider the case where the spread is strict.

\begin{definition}\label{def: QCA}
    Let $\alpha$ be a $^*$-automorphism on $\mathcal{A}$.
We say that $\alpha$ is a \emph{Quantum Cellular Automaton} (QCA) if there exists a finite radius \(N\ge1\) such that for every finite interval $I\in \mathcal{P}_{\rm fin}$ and every \(A\in\mathcal{A}_I\), we have that $\alpha(A)\in\mathcal{A}_{I_N}$.
\end{definition}

\subsubsection{Short-range entangled states}
We finally come to the definition of short-range entangled state, made here using $^*$-automorphisms that satisfy the factorization property.

\begin{definition}\label{def: short range entangled states}
A state $\omega$ on $\mathcal A$ is
called \emph{short-range entangled} (SRE) if there exist $f\in\mathcal{F}$, a product state $\omega_0$ on $\mathcal A$ and a strongly continuous
one-parameter family of $^*$-automorphisms $\{\alpha_s\}_{s\in[0,1]}$ such that for all $s\in[0,1]$, $\alpha_s$ satisfies the factorization property for $f$ (Definition~\ref{def:split automorphism})
with $\alpha_0 = \mathrm{id}$ and
\begin{equation*}
    \omega = \omega_0 \circ \alpha_1.
\end{equation*}
\end{definition}
\noindent A state $\omega$ which is not SRE is called long-range entangled (LRE).

An example of a pure short-range entangled state, which will be used many times in this paper, is the one-dimensional cluster state \cite{briegel2001persistent}. The on-site Hilbert space is $\mathbb{C}^2$ and the on-site algebra $\mathcal{A}_i$ is that of $2\times2$ matrices. The reference product state is $\omega_+(A_i) = \langle +_i \vert A_i\, +_i\rangle$, where $\ket{+_i}=\frac{1}{\sqrt{2}}(\ket{0_i}+\ket{1_i})$ and $A_i\in\mathcal{A}_i$. The group of automorphisms $\alpha_s$ is generated by the (formal) Hamiltonian $\sum_{j} h_j$, where
%h_j=\frac{1}{\}\ket{1_j1_{j+1}}\bra{1_j1_{j+1}}$
$h_j = \frac{\pi}{4}(\mathbb{I}_j - Z_j)(\mathbb{I}_{j+1} - Z_{j+1}) = \pi\ket{1_j1_{j+1}}\bra{1_j1_{j+1}}$ are nearest-neighbour interactions ($Z_j, X_j$ denote the Pauli matrices at the site $j$) \cite{Raussendorf_2012_QCbyLocalMeas}. This is a commuting Hamiltonian and so the automorphism 
\begin{equation}\label{eq: split automorphism cluster state}
    \alpha_s(A)=\mathrm{Ad}[U_{\rm{cluster}}^X(s)](A), \quad \forall A\in\mathcal{A}_X, ~X\in\mathcal{P}_{\rm{fin}},
\end{equation}
where $U_{\rm{cluster}}^X(s)= \prod_{j:\{j,j+1\}\cap X\neq \emptyset}\mathrm{e}^{\mathrm{i}sh_j }$, is in fact a QCA. It is in particular a factorizable automorphism and $\omega_{\rm{cluster}} = \omega_+\circ\alpha_1$ is SRE.

\subsection{Symmetry Protected Topological phases}

In this section we provide a standard definition of Symmetry Protected Topological phases on the one-dimensional quantum spin chain, see \cite{Schuch2011Classifying_SPT_MPS,Chen2011Classification, ogata2021classificationsymmetryprotectedtopological, carvalho_classification_2024}. First, we introduce the notion of global group symmetries that have an on-site action. We then consider pure short-range entangled states that are invariant under these symmetry actions, and classify them further into topological phases.

\subsubsection{On-site symmetries}
\label{sec: group symmetries}
Let $\bigGroup$ be a compact group, and let for any $j\in\mathbb{Z}$ the map $\elementofbigGroup\rightarrow U^{(j)}_\elementofbigGroup$ be a unitary linear representation. We write
\begin{equation*}%\label{eq: def U^X_g}
    U^X_\elementofbigGroup\coloneqq \bigotimes_{j\in X}U^{(j)}_\elementofbigGroup.
\end{equation*}
Then $\beta_\elementofbigGroup^{(j)}(A) = \text{Ad}[U^{(j)}_\elementofbigGroup](A)$ defines a group action on $\mathcal{A}_{\{j\}}$ similarly for
\begin{equation*}
    \beta^X_\elementofbigGroup(A) = \text{Ad}[U^X_\elementofbigGroup](A)
\end{equation*}
on $\mathcal{A}_X$ for any $X\in \mathcal{P}_{\rm fin}$. This is extended to all $A\in\mathcal{A}$ by linearity and density. Similarly, we can define $\beta_\elementofbigGroup^\Gamma$ on $\mathcal{A}_\Gamma$ for all finite or infinite subsets $\Gamma \subseteq \mathbb{Z}$. 

\begin{definition}
    A state $\omega$ is $\bigGroup$\emph{-invariant} or \emph{symmetric under $\bigGroup$ action} if
    \begin{equation*}
        \omega \circ \beta_\elementofbigGroup= \omega
    \end{equation*}
     for all $\elementofbigGroup\in \bigGroup$.
\end{definition}

\subsubsection{State equivalence and topological phases}

Let $\{\beta_\elementofbigGroup\}_{\elementofbigGroup\in \bigGroup}$ be as above. A family $\{\alpha_s\}_{s\in [0,1]}$ of $^*$-automorphisms is said to be $\bigGroup$-equivariant if
\begin{equation*}
    \alpha_s\circ \beta_\elementofbigGroup=\beta_\elementofbigGroup\circ \alpha_s,
\end{equation*}
for all $\elementofbigGroup\in \bigGroup, ~s\in [0,1]$.

\begin{definition}
    Two $\bigGroup$-invariant pure SRE states $\omega,\nu$ on $\mathcal{A}$ are \emph{$\bigGroup$-equivalent} if there exist $f\in\mathcal{F}$ and a $\bigGroup$-equivariant strongly continuous one-parameter family of $^*$-automorphisms $\{\alpha_s\}_{s\in[0,1]}$ such that for all $s\in[0,1]$, $\alpha_s$ satisfies the factorization property for $f$, and
\begin{equation*}
    \omega=\nu\circ\alpha_1. 
\end{equation*}
This defines equivalence classes on $\bigGroup$-invariant pure SRE states that are called \emph{Symmetry Protected Topological ($\bigGroup$-SPT) phases.}
\end{definition}

The cluster state defined in \eqref{eq: split automorphism cluster state} is a good example of an SPT state for the symmetry group $\mathcal{G} = \mathbb{Z}_2\times\mathbb{Z}_2$. Here, we block sites in pairs, each new site containing an `odd' and an `even' spin. Then the on-site symmetry is given by the local unitary representation
\begin{equation*}
    (0,0)\mapsto \mathbb{I}_{\mathrm{o}}\otimes\mathbb{I}_{\mathrm{e}}, \quad(0,1) \mapsto \mathbb{I}_{\mathrm{o}}\otimes X_{\mathrm{e}}, \quad (1,0) \mapsto X_{\mathrm{o}}\otimes\mathbb{I}_{\mathrm{e}}, \quad (1,1)\mapsto X_{\mathrm{o}}\otimes X_{\mathrm{e}}. 
\end{equation*}
where the subscripts $\mathrm{o}$ and $\mathrm{e}$ stand for the odd and even subsite, respectively. The symmetry does not leave the entangling interaction invariant, and so $\alpha_s\circ\beta_{\elementofbigGroup}\neq \beta_{\elementofbigGroup}\circ\alpha_s$ for all $0<s<1$ and $\elementofbigGroup=(0,1)$ or $(1,0)$. Clearly, the product state $\omega_+$ is symmetric under the $\bigGroup$ action since every spin is initially in the eigenstate of $X$. Since 
\begin{equation}\label{eq: stabilizers}
    U_\mathrm{cluster}^I(1) X_j = Z_{j-1}X_jZ_{j+1}U_\mathrm{cluster}^I(1)
\end{equation}
whenever $\{j-1,j,j+1\}\subset I$ and $Z^2 = \mathbb{I}$, we have that $\alpha_1\circ\beta_{\elementofbigGroup}=\beta_{\elementofbigGroup}\circ\alpha_1$,
see also \cite{Raussendorf_2012_QCbyLocalMeas}. We conclude that
\begin{equation*}
   \omega_{\mathrm{cluster}}\circ\beta_{\elementofbigGroup} 
   = \omega_+\circ\alpha_1\circ\beta_{\elementofbigGroup}
   = \omega_+\circ\beta_{\elementofbigGroup}\circ\alpha_1
   = \omega_{\mathrm{cluster}}
\end{equation*}
therefore the state $\omega_{\mathrm{cluster}}$ is a $\bigGroup$-invariant SRE state. 

The classification problem for SPT phases in one dimension has been completely solved, see again~\cite{Chen2011Classification, ogata2021classificationsymmetryprotectedtopological,carvalho_classification_2024} as well as the closely related~\cite{Kapustin_2021_classification_bosonicSPT}. While we will need the details later, see Section~\ref{subsec: projective representation and index}, we here only recall the result. To any pure $\bigGroup$-invariant SRE state $\omega$, one associates a projective representation of $\bigGroup$, which is characterized by a $2$-cocycle $\mu_\omega:\bigGroup\times\bigGroup\to U(1)$. The $\bigGroup$-SPT phases are then in one-to-one correspondence with the second cohomology group of $\bigGroup$, and so the cohomology class of the cocycle $\mu_\omega$ can be used as an index that completely classifies the $\bigGroup$-SPT phases.

A state $\omega$ is in the trivial phase if and only if $[\mu_\omega] = [1]$, namely the representation can be made into a linear one by a trivial redefinition of the phases of the unitaries. In the following and unlike in previous works, we shall neither assume that $\bigGroup$ is Abelian, nor that it factorizes as a direct product. For any pair of commuting $\elementofbigGroup,\elementofbigGroup'\in\bigGroup$, we define
\begin{equation}\label{eq: def sigma omega}
\sigma_\omega(\elementofbigGroup,\elementofbigGroup')=\frac{\mu_\omega(\elementofbigGroup,\elementofbigGroup')}{\mu_\omega(\elementofbigGroup',\elementofbigGroup)}.
\end{equation}
Since the two elements commute, $\rho(\elementofbigGroup\elementofbigGroup')=\rho(\elementofbigGroup'\elementofbigGroup)$ for any $\rho:\bigGroup\to U(1)$, and so $\sigma_\omega(\elementofbigGroup,\elementofbigGroup')$ does not depend on the representative of the cohomology class of $\mu_\omega$. In particular, if $\omega$ is in a trivial phase, then $\sigma_\omega(\elementofbigGroup,\elementofbigGroup')=1$ for every commuting pair. Following~\cite{tantivasadakarn_long-range_2024}, we shall assume that this is not the case. By the previous remark, this property is an invariant of the $\bigGroup$-SPT phase.

\begin{definition}\label{def: non trivial mixed SPT}
    Let $\bigGroup$ be a compact group. A $\bigGroup$-invariant pure SRE state $\omega$ is \emph{non-degenerate} if there are two commuting elements $\tilde{g},\tilde{h}\in \bigGroup$ such that
    \begin{equation}\label{eq: sigma mixed}
        \sigma_\omega(\tilde{g},\tilde{h})\neq 1.
    \end{equation}
\end{definition}

\noindent For a finite group $\bigGroup$, an element $\elementofbigGroup$ is called $\mu$-\emph{regular} if $\mu(\elementofbigGroup,\elementofbigGroup')=\mu(\elementofbigGroup',\elementofbigGroup)$ for every $\elementofbigGroup'$ that commutes with $\elementofbigGroup$. In that case, Definition~\ref{def: non trivial mixed SPT} states precisely that $\tilde g$ fails to be $\mu_\omega$-regular.

If $\bigGroup=G\times H$ is a direct product with $\tilde g\in G$ and $\tilde h\in H$, then $\tilde g$ and $\tilde h$ commute automatically and the definition reduces to the \emph{non-trivial mixed} property of~\cite{tantivasadakarn_long-range_2024}. We point out however that the present setting is strictly more general, and this in two distinct ways. First, $\bigGroup$ need not be Abelian: only the subgroup whose charge is measured will have to be, see Section~\ref{subsec:local measurements} below. Second, even for Abelian $\bigGroup$ the closed subgroup generated by $\tilde g$ and $\tilde h$ need not split as a direct product of $\overline{\langle\tilde g\rangle}$ and $\overline{\langle\tilde h\rangle}$.

In Section~\ref{subsec: projective representation and index}, we will pick up our running example of the one-dimensional cluster state and prove that $\omega_{\mathrm{cluster}}$ is non-degenerate for the group $\mathbb{Z}_2\times\mathbb{Z}_2$ in the sense of the definition above.

\subsection{Local measurements}
\label{subsec:local measurements}

Let again $\bigGroup$ be a compact group and let $\omega$ be a pure, $\bigGroup$-invariant, non-degenerate SRE state, with $\tilde g,\tilde h$ as in Definition~\ref{def: non trivial mixed SPT}. We shall measure the charge of a subgroup $G$ of $\bigGroup$ subject to the following two standing assumptions:
\begin{enumerate}
    \item[(M1)] $G$ is a closed Abelian subgroup of $\bigGroup$ with $\tilde g\in G$;
    \item[(M2)] $\tilde h\in Z_{\bigGroup}(G) = \{h\in\bigGroup:hg=gh\text{ for all } g\in G\}$, the centralizer of $G$ in $\bigGroup$.
\end{enumerate}

The canonical choice is the smallest one, $G=\overline{\langle\tilde g\rangle}$, the closed subgroup topologically generated by $\tilde g$: then (M1) is immediate and (M2) reduces to $\tilde h\in Z_{\bigGroup}(\tilde g)$, which is part of Definition~\ref{def: non trivial mixed SPT}. Note that (M2) cannot be weakened to the requirement that $\tilde h$ merely normalize $G$.

In the rest of the paper, we use the notation convention that general elements of $\bigGroup$ are written in script, $\elementofbigGroup\in\bigGroup$, while elements of the Abelian subgroup $G$ whose charge is measured are written in roman, $g\in G$, and so are elements of $Z_{\bigGroup}(\tilde g)$.

By (M1), $G$ is compact, and we let $\mu$ be its Haar measure. Then $G\ni g\mapsto U_g$ is a finite-dimensional representation of $G$. Because $G$ is Abelian, the family $\{U_g:g\in G\}$ is commuting and hence can be simultaneously diagonalized, namely
\begin{equation*}
    U_g = \sum_{q}\ep{-\iu \lambda_q(g)} P_q,
\end{equation*}
where $P_q$ are the spectral projectors, $\ep{-\iu \lambda_q(g)}$ are the eigenvalues, and $q$ labels the characters $g\mapsto \overline{\chi_q}(g) = \ep{-\iu \lambda_q(g)}$. If $\mu$ denotes the Haar measure of $G$, then
\begin{equation*}
    \int_G \chi_{q'}(g) U_g {\rm d}\mu(g) = \sum_{q} \int_G \chi_{q'}(g)\overline{\chi_{q}}(g) {\rm d} \mu(g) P_{q} = P_{q'}
\end{equation*}
by orthogonality of the characters.

\paragraph{Examples.} 
(i) For the one-dimensional cluster state, we pick $G=\mathbb{Z}_2$ with $U_0=\mathbb{I}_{\mathrm{e}}$ and $U_1=X_{\mathrm{e}}$ on the `even' spin of the site. Thus, the projectors are 
\begin{equation*}
    P_q=\frac{\mathbb{I_{\mathrm{e}}}+(-1)^qX_{\mathrm{e}}}{2},
\end{equation*}
for $q=0$ or $1$.\\
\noindent (ii) Let $G = U(1)$. Denoting $g = \ep{\iu \nu}$, we let $U_\nu$ be the two-dimensional representation given by $U_\nu = \ep{-\iu \nu Z}$. Then $\lambda_1(\nu) = \nu$ while $\lambda_2(\nu) = -\nu$ and 
\begin{equation*}
    P_{1,2} = \frac{1}{2\pi}\int_0^{2\pi}\ep{\pm \iu \nu}\ep{-\iu \nu Z}{\rm d}\nu
\end{equation*}
are the spectral projectors of $Z$.

This motivates the following definition.
\begin{definition}\label{def: local measurments}
    Let $\bigGroup$ be a compact group, let $G$ be a closed Abelian subgroup of~$\bigGroup$, and let $j\in\mathbb{Z}$. The \emph{measurement of $G$ at site $j$} is given by the family
\begin{equation}\label{eq:Projectors}
    P^{(j)}_q = \int_G \chi_{q}(g) U^{(j)}_g {\rm d} \mu(g),\qquad 1\leq q\leq d_j
\end{equation}
of projectors in $\mathcal{A}_{\{j\}}$.
\end{definition}

We further define measurements $P_{n,\mathbf{q}} = \bigotimes_{j\in [-n,n]}P^{(j)}_{q_j}$ that span all the sites between $-n$ and $n$. Every site in this interval undergoes exactly one measurement, with outcome given by the vector $(\lambda_{q_{-n}},\ldots,\lambda_{q_n})$. 

Now, if $\omega$ is a state on $\mathcal{A}$, then
\begin{equation}\label{eq: def of w_n}
    \omega_{n,\mathbf{q}}(A)=\frac{1}{\mathcal{N}_{n,\mathbf{q}}}\,\omega\big(P_{n,\mathbf{q}} A P_{n,\mathbf{q}}\big)
\end{equation}
is the state obtained after measurements on the sites in $[-n,n]$ have yielded outcomes labelled by $\mathbf{q}$. Here, the normalization constant is given by
\begin{equation*}
    \mathcal{N}_{n,\mathbf{q}}=\omega(P_{n,\mathbf{q}})
\end{equation*}
and it is the probability to obtain the measurement outcome labelled by $\mathbf{q}$. It is therefore natural to assume that $\mathcal{N}_{n,\mathbf{q}}\neq 0$ (and hence \eqref{eq: def of w_n} is well-defined) because the complementary event occurs with probability zero.

\begin{remark}
While the symmetry group $\bigGroup$ protecting the phase is arbitrary, it is essential that the measured subgroup $G$ be Abelian. The non-Abelian generalization of~\eqref{eq:Projectors} would not yield the collapse of the string operator to a scalar described in Section~\ref{subsec: computation of correlation after measurements}. The measurement would in general leave a higher-dimensional space at each measured site on which the string operator would still act non-trivially.
\end{remark}

\subsection{Main results}

We first show that a non-degenerate, $\bigGroup$-invariant pure SRE state as in Definition~\ref{def: non trivial mixed SPT} displays string order.
\begin{prop}\label{prop: string order}
Let $\bigGroup$ be a compact group. Let $\omega$ be a non-degenerate, $\bigGroup$-invariant pure SRE state with commuting group elements $\tilde{g},\tilde{h}$ satisfying~\eqref{eq: sigma mixed}. Then, there is $f\in\mathcal{F}$ such that for each site $j\in \mathbb{Z}$ there exists a unitary $W^j_{\tilde{g}}\in\mathcal{A}$ that is $f$-close to $j$ such that
    \begin{equation*}
        \omega(W^j_{\tilde{g}})=0
    \end{equation*}
    and for all pairs of sites $i<j$, 
    \begin{equation*}
        \vert \omega(W^i_{\tilde{g}}U^{[i,j)}_{\tilde{g}}W^j_{\tilde{g}})\vert = 1.
    \end{equation*}
\end{prop}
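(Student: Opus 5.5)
The plan is to use the construction of the SPT index via the split property and to read the string operator off the restricted symmetry. Since $\omega$ is pure, SRE and $\bigGroup$-invariant, the standard argument (Ogata; Carvalho et al.) shows the following. For each cut $j$, the restricted symmetry $\beta^{R_j}_{\elementofbigGroup}$ leaves $\omega$ invariant up to an almost local unitary near $j$. Concretely, there are unitaries $V^j_{\elementofbigGroup}$, $f$-close to $\{j\}$ with $f$ uniform in $j$, such that
\begin{equation*}
\omega\circ\beta^{R_j}_{\elementofbigGroup} = \omega\circ\mathrm{Ad}[V^j_{\elementofbigGroup}],
\end{equation*}
or equivalently $\omega(U^{R_j}_{\elementofbigGroup}A U^{R_j*}_{\elementofbigGroup}) = \omega(V^j_{\elementofbigGroup}AV^{j*}_{\elementofbigGroup})$. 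Equivalently, in the GNS representation, $U^{R_j}_{\elementofbigGroup}$ acts on $\Omega$ as $V^j_{\elementofbigGroup}$ up to a phase. The $V^j_{\elementofbigGroup}$ form a projective representation with cocycle $\mu_\omega$ modulo coboundaries. I would build them as follows: write $\omega=\omega_0\circ\alpha_1$, use the factorization $\alpha_1=(\alpha^{L}\otimes\alpha^{R})\circ\mathrm{Ad}[U^j]$, and conjugate the product state's half-chain symmetry. For the interval $[i,j)$, take the product $U^{R_i}U^{R_j*}$. This shows that $U^{[i,j)}_{\tilde g}\Omega$ equals, up to a phase, $V^i_{\tilde g}\,(V^j_{\tilde g})^{*}\Omega$, modulo the ordering and commutation of distant almost local unitaries. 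The resulting error decays in $|i-j|$, but I want exact equalities; I come back to this below.

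Next I would choose $W^j_{\tilde g}$. Since $\tilde g,\tilde h$ commute, $V^j_{\tilde h}V^j_{\tilde g}=\sigma_\omega(\tilde h,\tilde g)\,V^j_{\tilde g}V^j_{\tilde h}$ up to global symmetry corrections that act trivially on $\Omega$. Combining this with $\tilde h$-invariance of $\omega$ gives
\begin{equation*}
\omega(V^j_{\tilde g}) = \sigma\,\omega(V^j_{\tilde g}),
\end{equation*}
hence $\omega(V^j_{\tilde g})=0$ because $\sigma\neq1$. So I would set $W^j_{\tilde g}:=(V^j_{\tilde g})^{*}$, or an appropriately phased version. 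For the second factor I would set $W^i_{\tilde g}:=V^i_{\tilde g}$ adjusted so that
\begin{equation*}
W^i U^{[i,j)} W^j\,\Omega = c\,\Omega,\qquad |c|=1.
\end{equation*}
The statement wants a single family $W^j$ used in both slots. I would arrange this by defining the operator through the left-cut implementer rather than the right one: with $U^{[i,j)}=U^{R_i}(U^{R_j})^*$, the operator at $i$ cancels $U^{R_i}$ and the one at $j$ cancels $(U^{R_j})^*$. Then $|\omega(\cdot)|=1$ follows from Cauchy--Schwarz equality. Indeed, $\omega(X)$ for unitary $X$ has modulus $1$ exactly when $X\Omega\propto\Omega$. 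Vanishing of $\omega(W^j)$ holds for either $V^j_{\tilde g}$ or its adjoint by the same argument.

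The main obstacle is exactness: making $W^iU^{[i,j)}W^j\Omega\propto\Omega$ hold exactly for all $i<j$, not just asymptotically. My plan here is to define things intrinsically through purity of $\omega$. Since $\omega\circ\beta^{R_j}_{\tilde g}$ is a pure state equivalent to $\omega$, I would take the implementer in $\pi_\omega(\mathcal A)''=\mathcal B(\mathcal H)$. Almost locality would then come from the factorization structure, which places the implementer in the form $\alpha_1^{-1}$ of local data. On the infinite chain $U^{R_j}$ is not in $\mathcal A$, so I would work with the automorphism $\beta^{R_j}$ and its GNS implementer. The relation $\beta^{[i,j)}=\beta^{R_i}\circ(\beta^{R_j})^{-1}$ then yields $\pi(U^{[i,j)})=\lambda\,\pi(V^i)\pi(V^j)^{*}$ exactly on $\Omega$. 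This holds because both sides implement the same automorphism of an irreducible representation while fixing the vector state, so they agree up to a scalar on $\Omega$. I would verify these phase and ordering issues carefully.
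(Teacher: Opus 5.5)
Your outline has the paper's ingredients: almost-local implementers of $\beta^{R_j}_{\tilde g}$, the projective representation on the GNS space, $\tilde h$-invariance plus $\sigma_\omega(\tilde g,\tilde h)\neq 1$ for the one-point function, and irreducibility for exactness. The step you flag as the main obstacle goes wrong, however, because you conflate two objects. One is the almost-local unitaries $V^j_g\in\mathcal{A}$ with $\omega\circ\beta^{R_j}_g=\omega\circ\mathrm{Ad}[V^j_g]$ (the paper's $W^{R_j}_g$). The other is the GNS implementers $V^{R_j}_g\in\mathcal{B}(\mathcal{H})$ with $\pi\circ\beta^{R_j}_g=\mathrm{Ad}[V^{R_j}_g]\circ\pi$, which are not of the form $\pi(A)$ with $A$ local. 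Irreducibility yields the exact operator identity $\pi(U^{[i,j)}_{\tilde g})=e^{-i\gamma}V^{R_i}_{\tilde g}V^{R_j*}_{\tilde g}$ for the latter. For the former, $\pi(U^{[i,j)}_{\tilde g})\Omega\propto\pi(V^i_{\tilde g})\pi(V^j_{\tilde g})^*\Omega$ is not true in general, and your justification does not apply: $\mathrm{Ad}[V^iV^{j*}]\neq\beta^{[i,j)}$, and $U^{[i,j)}$ does not in general leave $\omega$ invariant. Local operators enter only through the intertwining relation $V^{R_k*}_g\Omega=\pi(W^{R_k*}_g)\Omega$, which you state backwards: ``$U^{R_j}_g$ acts on $\Omega$ as $V^j_g$'' is not equivalent to $\omega\circ\beta^{R_j}_g=\omega\circ\mathrm{Ad}[V^j_g]$. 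Since $V^{R_j}_{\tilde g}\propto V^{R_j*}_{\tilde g^{-1}}$, the right endpoint must be $W^{R_j*}_{\tilde g^{-1}}$. The left endpoint must reproduce $V^{R_i}_{\tilde g}\Omega$; the paper uses $W^{L_i}_{\tilde g}$ together with $V^{L_i}_{\tilde g}V^{R_i}_{\tilde g}\Omega=c^i_{\tilde g}\Omega$ (Lemma~\ref{lemma: invariance under V^L V^R}). Your choice $W^j=(V^j_{\tilde g})^*=W^{R_j*}_{\tilde g}$ does not close the string. With $W^i=V^i_{\tilde g}$ the correlator equals $e^{i\gamma}\omega(U^{[i,j)}_{\tilde g^2})$, and with a left-cut operator at $i$ it is a multiple of $\langle\Omega,V^{R_j*}_{\tilde g^2}\Omega\rangle$. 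If $\sigma_\omega(\tilde g,\tilde h)^2\neq 1$ (e.g.\ a $\mathbb{Z}_3\times\mathbb{Z}_3$ SPT), the first tends to $0$ as $|i-j|\to\infty$ and the second vanishes identically. This is not a phase adjustment: your choice works only when $\tilde g^2=e$, as in the cluster state.

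The vanishing argument needs the same repair. The relation $V^j_{\tilde h}V^j_{\tilde g}=\sigma\,V^j_{\tilde g}V^j_{\tilde h}$ does not hold for the almost-local unitaries, since the cocycle involves the twisted products in~\eqref{eq: index as function of W^R}; it holds only for the GNS implementers. Inserting $V^{L_j}_{\tilde h}V^{R_j}_{\tilde h}$ into $\langle\Omega,V^{R_j}_{\tilde g}\Omega\rangle$ then requires $V^{L_j}_{\tilde h}$ to commute \emph{exactly} with $V^{R_j}_{\tilde g}$ and $V^{R_j}_{\tilde h}$. Irreducibility only gives commutation up to a phase, which could a priori cancel $\sigma_\omega(\tilde g,\tilde h)$. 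The paper obtains~\eqref{eq: V^L and V^R commutation} from the tensor-product form $\mathcal{H}_{L_j}\otimes\mathcal{H}_{R_j}$ of the GNS representation supplied by the factorization property, and your plan has no substitute for this.

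Two smaller points. First, building the implementers by conjugating the product state's half-chain symmetry is not available in general, because $\omega_0$ is not assumed $\bigGroup$-invariant and $\alpha_1$ is not assumed equivariant. The paper instead compares $\omega_0\circ\alpha\circ\beta^{R}_g\circ\alpha^{-1}$ with $\omega_0$ away from the cut and invokes a local-perturbation result. Second, the paper does not produce a single family either. It takes $(W^i,W^j)=(W^{L_i}_{\tilde g},W^{R_j*}_{\tilde g^{-1}})$, which is what your left-cut remark amounts to once the endpoint at $j$ is corrected.
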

\noindent As we already mentioned, this string order in SPT phases plays a crucial role in the following. In fact, it is the string order in the pre-measured state $\omega$ that is responsible for the long-range order in the post-measurement state.

In the example of the cluster state, the string order parameters are given by 
\begin{equation}\label{eq: cluster state string parameter}
    Z_{\mathrm{o}}^{i} X_{\mathrm{e}}^{[i,j)} Z^j_{\mathrm{o}} \quad \mathrm{and}
    \quad Z_{\mathrm{e}}^{i-1} X_{\mathrm{o}}^{[i,j)} Z^{j-1}_{\mathrm{e}}.
\end{equation}
This follows from the observation that $Z_{\mathrm{o}}^{i} X_{\mathrm{e}}^{[i,j)} Z^j_{\mathrm{o}} = \alpha^{-1}_{\rm cluster}(X_{\mathrm{e}}^{[i,j)})$, see~\eqref{eq: stabilizers}, which yields
\begin{equation*}
    \omega_{\rm cluster}(Z_{\mathrm{o}}^{i} X_{\mathrm{e}}^{[i,j)} Z^j_{\mathrm{o}}) = \omega_+(X_{\mathrm{e}}^{[i,j)}) = 1,
\end{equation*}
and similarly for the other one.

In the general case, our main result shows that, although each post-measurement state
$\omega_n$ \eqref{eq: def of w_n} is short-range entangled (see Lemma~\ref{lemma: post meas SRE}), the
almost-locality bounds necessarily deteriorate with the size of the
measurement region and cannot be chosen uniformly in $n$.

\begin{definition}
    A family $(\alpha_n)_{n\in\mathbb{N}}$ of factorizable $^*$-automorphisms of $\mathcal{A}$ is \emph{uniformly almost-local} if there exists $f\in\mathcal{F}$ such that every $\alpha_n$ is $f$-almost local as in Definition~\ref{def: alpha almost local}.
\end{definition}

\begin{theorem}\label{theorem: entanglement growth}
Let $\bigGroup$ be a compact group. Let $\omega$ be a non-degenerate, $\bigGroup$-invariant pure SRE state with commuting group elements $\tilde{g},\tilde{h}$ satisfying~\eqref{eq: sigma mixed}. Let $G$ be a subgroup of $\bigGroup$ satisfying (M1) and (M2). Let $(\omega_n)_{n\in\mathbb N}$ be the family of post-measurement states defined in \eqref{eq: def of w_n}.

Then there is no uniformly almost-local family $(\alpha_n)_{n\in\mathbb N}$ of factorizable $^*$-automorphisms such that $\omega_n=\omega_0\circ\alpha_n$ for some product state $\omega_0$ and for all $n\in \mathbb{N}$.  
\end{theorem}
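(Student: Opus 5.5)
The argument is by contradiction: we pair a long-range correlation bound for the post-measurement states with the decay of correlations that any uniformly almost-local family of SRE states must have.

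\emph{Step 1 (long-range order after measurement).} Take the string-order operators $W^i_{\tilde g}, W^j_{\tilde g}$ from Proposition~\ref{prop: string order}. Since the measured sites each carry a definite $G$-charge, the string can be absorbed. Concretely, for $-n\le i<j\le n$ we have
\[
P_{n,\mathbf q}U^{[i,j)}_{\tilde g}=\ep{-\iu\sum_{k\in[i,j)}\lambda_{q_k}(\tilde g)}P_{n,\mathbf q},
\]
which follows from (M1) and the spectral decomposition. The operators $W^i_{\tilde g}$ are not supported inside $[-n,n]$ and need not commute with $P_{n,\mathbf q}$. We therefore modify them into $W^i_n$, obtained by conjugating or averaging so that they commute with the $G$-charge projectors on $[-n,n]$. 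The aim is to show that they can be chosen $f'$-close to $i$, with $f'$ independent of $n$ and of $i,j$, and such that
\[
\omega_{n,\mathbf q}(W^i_n)=0=\omega_{n,\mathbf q}(W^j_n),\qquad |\omega_{n,\mathbf q}(W^i_nW^j_n)|=1 .
\]
The vanishing one-point functions should come from $\tilde h$. By (M2), $U_{\tilde h}$ commutes with every $P^{(k)}_q$, so $\omega_{n,\mathbf q}$ remains $\tilde h$-invariant. By construction of the string operator, $W^i$ carries $\tilde h$-charge $\sigma_\omega(\tilde g,\tilde h)\neq1$, meaning $\beta_{\tilde h}(W^i)=\sigma W^i$. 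Hence $\omega_{n,\mathbf q}(W^i_n)=\sigma\,\omega_{n,\mathbf q}(W^i_n)$, which forces it to vanish. Making the modification of $W^i$ uniform in $n$ is the main obstacle. The first thing I would try is to use the factorized form $W^i=\alpha^{-1}(\cdot)$ coming from the SRE structure together with the $\mathcal{G}$-equivariance of the projective representation.

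\emph{Step 2 (uniform clustering for uniformly SRE states).} Suppose $\omega_n=\omega_0\circ\alpha_n$ with every $\alpha_n$ being $f$-almost local, and by the Remark also $\alpha_n^{-1}$. For $A,B$ that are $f'$-close to $i$ and $j$ respectively, write $\alpha_n(A)\approx\Pi_{I_r}\alpha_n(A)$ and $\alpha_n(B)\approx\Pi_{J_r}\alpha_n(B)$ with $r=|i-j|/3$. The localized pieces have disjoint supports, so they factorize in the product state $\omega_0$. This gives
\[
|\omega_n(AB)-\omega_n(A)\omega_n(B)|\le C\,\bigl(f''(|i-j|/3)\bigr)\|A\|\|B\|,
\]
where $f''\in\mathcal F$ depends only on $f$ and $f'$ and is independent of $n$. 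A product state factorizes only over single sites, but it also factorizes over disjoint finite sets by taking limits of tensor-product observables. The approximation error $\Pi$ acting on a composition of two almost-local maps is controlled exactly as in Lemma~\ref{lemma: split implies quasi local}.

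\emph{Step 3 (contradiction).} Apply Step 2 to $A=W^i_n$ and $B=W^j_n$ from Step 1, with $j-i=L$ and $n\ge L$. The connected correlation then has modulus $1$, yet it is bounded by $Cf''(L/3)$. Choosing $L$ large, followed by $n\ge L$, gives $1\le Cf''(L/3)<1$, a contradiction.
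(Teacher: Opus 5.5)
Your Steps~2 and~3 are fine and match the paper's Lemma~\ref{lemma:Decay of corr n} and final contradiction. Step~1, however, is a genuine gap, which you flag yourself without closing it.

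\textbf{The gap.} You need, for every $n$, observables $W^i_n,W^j_n$ that are $f'$-close to $i,j$ with $f'$ independent of $n$, with vanishing one-point functions and $|\omega_{n,\mathbf q}(W^i_nW^j_n)|=1$. Your suggested route is to start from the $n$-independent edge unitaries of $\omega$ and conjugate or average them into the commutant of the $P^{(k)}_q$. This does not deliver that.

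On-site averaging over $G$ does preserve locality. But the modulus-one computation rests on the identity $\pi(W^{i*}_n)\Omega_n=V^{L_i*}_{\tilde g}\Omega_n$. Since $V^{L_i}_{\tilde g}$ commutes with $\pi(P_{n,\mathbf q})$ (the $L$-analogue of \eqref{eq: V and U commutation}), one has $V^{L_i*}_{\tilde g}\Omega_n=\mathcal N_{n,\mathbf q}^{-1/2}\pi(P_{n,\mathbf q}W^{i*})\Omega$, with $W^i=W^{L_i}_{\tilde g}$ the edge unitary of $\omega$. The averaged operator only reproduces the diagonal part $\mathcal N_{n,\mathbf q}^{-1/2}\pi(P_{n,\mathbf q}W^{i*}P_{n,\mathbf q})\Omega$. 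The off-diagonal remainder is uncontrolled, and any approximate-commutation error is multiplied by $\mathcal N_{n,\mathbf q}^{-1/2}$, which is typically exponentially large. This is exactly the `small denominator' obstruction in the remark of Section~\ref{sec: long range correlation for QCA}.

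In fact, Step~1 as stated is stronger than what you need. Uniformly local operators have norm-convergent subsequences, so Step~1 would give long-range entanglement of any weak-$*$ limit of $(\omega_n)$ for an arbitrary factorizable entangler. The paper obtains this only for QCAs with block measurements.

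Separately, $\beta_{\tilde h}(W^i)=\sigma_\omega(\tilde g,\tilde h)W^i$ does not hold `by construction'. The edge unitaries are only determined through their action on the GNS vector. The vanishing of one-point functions is a vector-level identity, using $V^{L_j}_{\tilde h}V^{R_j}_{\tilde h}\Omega=c\,\Omega$ and $V^{R_j}_{\tilde g}V^{R_j}_{\tilde h}=\sigma_\omega(\tilde g,\tilde h)V^{R_j}_{\tilde h}V^{R_j}_{\tilde g}$.

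\textbf{The missing idea.} Uniformity in $n$ need not be produced by hand; it comes for free from the contradiction hypothesis.

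Assume $\omega_n=\omega_0\circ\alpha_n$ with $\alpha_n$ and $\alpha_n^{-1}$ uniformly almost local. By Lemma~\ref{lemma: post meas SRE}, $\omega_n$ is invariant under $\beta_g$ for all $g\in Z_{\mathcal G}(G)$, which contains $\tilde g$ and $\tilde h$. So Proposition~\ref{prop: existence of W^R_g} applied to $\omega_n$ with entangler $\alpha_n$ gives edge unitaries $W^{R_j}_{g,n}$ that are $F(f_{\alpha_n})$-close to $j$. Here $F$ is the increasing map of~\eqref{eq: F of f almost locality of WR}, so these unitaries are uniformly local.

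Since $\omega_n$ is the vector state $\Omega_n$ in the GNS representation of $\omega$, the same intertwiners $V^{R_j}_g$ serve. After fixing a phase, $V^{R_j*}_g\Omega_n=\pi(W^{R_j*}_{g,n})\Omega_n$, see~\eqref{eq:VWn}. The proof of Proposition~\ref{prop: string order} then runs verbatim for $\Omega_n$, with $W^i_n=W^{L_i}_{\tilde g,n}$ and $W^j_n=W^{R_j*}_{\tilde g^{-1},n}$. For $-n\le i<j\le n$, the string $\pi(U^{[i,j)}_{\tilde g})$ commutes with $V^{R_j}_{\tilde g}$ by~\eqref{eq: V and U commutation}. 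It acts on $\Omega_n$ as the phase $\prod_{k=i}^{j-1}\overline{\chi_{q_k}(\tilde g)}$, giving $|\omega_n(W^i_nW^j_n)|=1$.

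With these operators your Steps~2 and~3 close the argument: $1\le 4f_{\alpha_n}(n/3)+4F(f_{\alpha_n})(n/3)$ for all $n$. This is impossible if $f_{\alpha_n}\le f\in\mathcal F$ uniformly.
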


In fact, for each $n$, we consider any factorizable $^*$-automorphism $\alpha_n$ such that $\omega_n=\omega_0\circ\alpha_n$ and the corresponding decay function  $f_{\alpha_n}\in \mathcal{F}$ as in Lemma~\ref{lemma: split implies quasi local}. We show in Section~\ref{sec: correlations in the measured state} that these decay functions cannot be uniformly bounded in $\mathcal{F}$, namely
\begin{equation*}
   \sup_{n\in\mathbb N} f_{\alpha_n} \notin \mathcal F.
\end{equation*}

A stronger result holds when assuming that the entangler $\alpha$ associated with $\omega$ strictly preserves the locality of the observables, namely it is a QCA in the sense of Definition~\ref{def: QCA}. By blocking finitely many sites for measurements, arbitrarily long-range correlations can be obtained. In this case, we further assume that
\begin{equation}\label{eq: M3}
    \sigma_\omega(\tilde g, g) = 1\qquad\text{for all } g\in G,
\end{equation}
with $\sigma_\omega$ defined in~\eqref{eq: def sigma omega}. We shall prove below, see Lemma~\ref{lemma: sigma bicharacter}, that \eqref{eq: M3} holds automatically for the canonical choice $G=\overline{\langle\tilde g\rangle}$. For any other Abelian $G\ni \tilde g$ it can be recovered by passing to the closed subgroup $G\to \ker\sigma_\omega(\tilde g,\cdot)\cap G$, which still contains $\tilde g$ and still satisfies (M1) and (M2). It is therefore a choice of measurement rather than a restriction on the state.

We will show that for each pair of blocks of sites $i$ and $j$, there are two local operators $\widetilde{W}^{i}_{\tilde{g}}$ and $\widetilde{W}^{j}_{\tilde{g}}$, such that for all $n$ with $i$ and $j$ in the measurement region of the state $\tilde\omega_n$, 
    \begin{equation}\label{eq: LR correlation QCA section 2}
        |\tilde\omega_n(\widetilde{W}^{i}_{\tilde{g}}\widetilde{W}^{j}_{\tilde{g}})-\tilde\omega_n(\widetilde{W}^{i}_{\tilde{g}})\tilde\omega_n(\widetilde{W}^{j}_{\tilde{g}})|=1. 
    \end{equation}
Since the set of states on $\mathcal{A}$ is compact with respect to the weak-$*$ topology, the sequence $(\tilde\omega_n)_{n\in\mathbb{N}}$ admits a weakly converging subsequence, along which correlation functions such as~\eqref{eq: LR correlation QCA section 2} converge. In turn, this implies the emergence of long-range entanglement in any limiting state. We discuss the blocking protocol and prove the following theorem in Section~\ref{sec: long range correlation for QCA}.

\begin{theorem}\label{theorem: long range correlations with QCA}
    Let $\omega = \omega_0\circ\alpha$ be as in Theorem~\ref{theorem: entanglement growth} with the additional assumptions that $\alpha$ is a QCA and that~\eqref{eq: M3} holds. 
    Let $\nu$ be a limiting state of the sequence $(\tilde\omega_n)_{n\in\mathbb{N}}$ of block-measured states. Then $\nu$ is long-range entangled. 
\end{theorem}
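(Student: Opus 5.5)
The argument combines three ingredients: the long-range correlation identity~\eqref{eq: LR correlation QCA section 2} for the block-measured states $\tilde\omega_n$, its passage to the weak-$*$ limit, and a clustering property that every SRE state must satisfy.

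\textbf{Step 1: the correlation identity.} Because $\alpha$ is a QCA of radius $N$, the string-order operators of Proposition~\ref{prop: string order} can be chosen strictly local. The plan is to take $W^j_{\tilde g} = \alpha^{-1}(\cdot)$ applied to an edge operator of the symmetry string, which is supported in a window of size $O(N)$ around $j$. After grouping sites into blocks of length of order $N$, we measure the charge of $G$ on all blocks of $[-n,n]$. On the measured region the string $U^{[i,j)}_{\tilde g}$ then acts as the scalar $\prod_k \overline{\chi_{q_k}}(\tilde g)$. This is where we use the character formula $U_g=\sum_q \ep{-\iu\lambda_q(g)}P_q$, together with the fact that $P_{n,\mathbf q}$ commutes with $U_{\tilde g}$ since $G$ is Abelian.

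The end operators do not commute with the projectors. We therefore replace them by dressed operators $\widetilde W^{i}_{\tilde g}$, $\widetilde W^{j}_{\tilde g}$, obtained by averaging $W^i_{\tilde g}$ over the local $G$-action on the blocks surrounding $i$. The goal is that these dressed operators commute with $P_{n,\mathbf q}$ and are independent of $n$ once $i,j$ lie well inside $[-n,n]$. Assumption~\eqref{eq: M3} is what should make this averaging nonzero: the charge carried by the end operators under $G$ must be compatible with $\tilde g$. Assumption (M2), with $\sigma_\omega(\tilde g,\tilde h)\ne1$, guarantees $\tilde\omega_n(\widetilde W^{i}_{\tilde g})=0$. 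The mechanism is that $\beta^{R}_{\tilde h}$ acts on the dressed end operator by the phase $\sigma_\omega(\tilde g,\tilde h)$, and it commutes with the measurement by (M2). With these choices, \eqref{eq: LR correlation QCA section 2} holds for all such $n$. I expect this step to be the main obstacle; the rest follows quickly from it.

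\textbf{Step 2: passage to the limit.} Let $\nu$ be the weak-$*$ limit of a subsequence $(\tilde\omega_{n_k})$. The operators $\widetilde W^{i}_{\tilde g}$ are fixed local elements of $\mathcal{A}$, independent of $n$, so the three expectation values converge. Hence for all blocks $i<j$,
\[
|\nu(\widetilde W^{i}_{\tilde g}\widetilde W^{j}_{\tilde g})-\nu(\widetilde W^{i}_{\tilde g})\nu(\widetilde W^{j}_{\tilde g})|=1,
\]
where each $\widetilde W$ has norm at most one and is supported in a window of fixed size around its block.

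\textbf{Step 3: SRE states cluster.} Suppose $\nu=\omega_0'\circ\beta$ with $\omega_0'$ a product state and $\beta$ factorizable, hence $f_\beta$-almost local by Lemma~\ref{lemma: split implies quasi local}. Write $A=\widetilde W^{i}_{\tilde g}$ and $B=\widetilde W^{j}_{\tilde g}$, with $r=|j-i|/3$. Replace $\beta(A)$ by $\Pi_{I_r}(\beta(A))$ and $\beta(B)$ by $\Pi_{J_r}(\beta(B))$; these have disjoint supports and errors at most $f_\beta(r)$. Since $\omega_0'$ is a product state, it factorizes on the two localized operators. This gives
\[
|\nu(AB)-\nu(A)\nu(B)|\le C f_\beta(r)\to0 \quad\text{as } |j-i|\to\infty.
\]
This contradicts the identity from Step 2, so $\nu$ is long-range entangled.
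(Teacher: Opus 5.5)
Your Steps 2 and 3 are exactly the paper's concluding argument: weak-$*$ compactness, $n$-independence of the end operators, and decay of correlations via Lemma~\ref{lemma: decay of correlation}. The gap is Step 1. It carries the whole content of the theorem (Lemmas~\ref{lemma: W^R acts the same on measured state} and~\ref{lemma: finite correlation measured QCA} in the paper), and you only outline it.

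Making the end operators commute with $\tilde P_{n,\mathbf q}$ by averaging over the block $G$-action does not control the modulus of the post-measurement correlation. For that you need the averaged operator to still implement the half-chain symmetry on the GNS vector, i.e.\ to act on it exactly like the original. ``(M3) makes the average nonzero'' is not that statement. The missing ingredient is the exact transformation law of the strictly local edge operator under the block action,
\[
\pi\big(\tilde U^{(j)*}_g W^{R_j*}_{\tilde g}\tilde U^{(j)}_g\big)\Omega=\pi\big(W^{R_j*}_{\tilde g}\big)\Omega,\qquad g\in G,
\]
and this is precisely where \eqref{eq: M3} enters. It is proved with the half-chain intertwiners:
\begin{itemize}
\item $\Omega$ is invariant up to a phase under $V^{L_j}_gV^{R_j}_g$ (Lemma~\ref{lemma: invariance under V^L V^R}).
\item $V^{R_j}_{\tilde g}$ commutes with $V^{L_j}_g$, and, by \eqref{eq: M3}, with $V^{R_j}_g$.
\item Moving the cuts with \eqref{eq: from V^R_i to V^R_j} turns $V^{L_j}_gV^{R_j}_g$ into $V^{L_{j-N}}_g\pi(\tilde U^{(j)}_g)V^{R_{j+N+1}}_g$ up to a phase.
\item The two outer factors commute with $W^{R_j}_{\tilde g}\in\mathcal A_{[j-N,j+N]}$ by \eqref{eq: V and op commutating}.
\end{itemize}
Integrating against $\chi_q$ then shows that $W^{R_j*}_{\tilde g}$ itself, with no averaging, commutes with $\tilde P_{n,\mathbf q}$ on $\Omega$. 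Since $V^{R_j}_{\tilde g}$ commutes with $\pi(\tilde P_{n,\mathbf q})$ by \eqref{eq: V and U commutation}, this gives $\pi(W^{R_j*}_{\tilde g})\tilde\Omega_n=V^{R_j*}_{\tilde g}\tilde\Omega_n$. Your heuristic is right: without \eqref{eq: M3} the same computation produces the character $\sigma_\omega(\tilde g,\cdot)^{\pm1}$ and the average vanishes. But this transformation law is what must be proved, and once it is, averaging is superfluous.

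The rest of Step 1 is a computation with the intertwiners, not an operator identity in $\mathcal A$. In general $\beta^{R_j}_{\tilde h}$ does not simply multiply the end operator by $\sigma_\omega(\tilde g,\tilde h)$. The one-point function vanishes for two reasons. First, $V^{L_j}_{\tilde h}V^{R_j}_{\tilde h}\tilde\Omega_n\propto\tilde\Omega_n$, because $\tilde h\in Z_{\bigGroup}(G)$ preserves $\tilde P_{n,\mathbf q}$. Second, $V^{R_j}_{\tilde g}V^{R_j}_{\tilde h}=\sigma_\omega(\tilde g,\tilde h)V^{R_j}_{\tilde h}V^{R_j}_{\tilde g}$. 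The two-point function follows from $V^{R_j}_{\tilde g}\propto\pi(U^{[i,j)*}_{\tilde g})V^{R_i}_{\tilde g}$ and $V^{L_i}_{\tilde g}V^{R_i}_{\tilde g}\tilde\Omega_n\propto\tilde\Omega_n$.

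Finally, a string reduces to a scalar on $\tilde\Omega_n$ only if it is a union of complete measurement blocks, so you must say how the cut points sit relative to the blocks. The paper centres the blocks at $i-1$ and $j$, matching the supports of $W^{L_i}_{\tilde g}$ and $W^{R_j}_{\tilde g}$ from Lemma~\ref{lma:Local Ws}. Then $U^{[i,j)}_{\tilde g}$ covers the two end blocks only partially. The paper's dressing absorbs exactly the missing pieces, $\widetilde W^{i}_{\tilde g}=W^{L_i}_{\tilde g}U^{[i-N-1,i)*}_{\tilde g}$ and $\widetilde W^{j}_{\tilde g}=U^{[j,j+N]*}_{\tilde g}W^{R_j*}_{\tilde g^{-1}}$, so that the total string $U^{[i-N-1,j+N]*}_{\tilde g}$ is a product of whole blocks. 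Averaging does not address this alignment.
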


We conclude with the example of the cluster state, see also~\cite{briegel2001persistent,tantivasadakarn_long-range_2024}. We pick $\tilde g = (0,1), \tilde h = (1,0)$ and $G = \langle \tilde g\rangle = \{(0,0),(0,1)\}$, corresponding to measuring $X$'s on the `even' sites. The resulting measured state is so that
\begin{equation*}
    \tilde \omega_n(Z_{\rm o}^iZ_{\rm o}^j) = \pm 1,\qquad \tilde \omega_n(Z_{\rm o}^i) = 0 = \tilde \omega_n(Z_{\rm o}^j),
\end{equation*}
where $\pm 1$ is the product of the measurement outcomes between $i$ and $j$, for any $i,j$ in the measured region, see also Section~\ref{sec:QCA post correlations}. On the `odd' sub-lattice of the measured region, this is the highly entangled GHZ state $\vert{\rm GHZ}\rangle = \frac{1}{\sqrt 2}(\vert 1\cdots 1\rangle + \vert 0\cdots 0\rangle)$, up to local unitaries that depend on the measurement outcomes. In the infinite volume limit, we obtain $\nu = \frac{1}{2}(\nu_+ + \nu_-)$, the equal mixture of the two pure ferromagnetic states.

As we will see in Section~\ref{sec: long range correlation for QCA}, Theorem~\ref{theorem: long range correlations with QCA} holds in fact for any, even non-QCA, factorizable automorphism under the additional assumption that 
\begin{equation*}
    \omega(P_{n,\textbf{q} }\mathrm{Ad}[W^j_{\tilde{g}}](A)P_{n,\textbf{q} })= \omega\circ \mathrm{Ad}[W^j_{\tilde{g}}](P_{n,\textbf{q} }AP_{n,\textbf{q} }), \quad \forall A \in \mathcal{A},
\end{equation*}
where $W^j_{\tilde{g}}$ is part of the string order parameter in Proposition~\ref{prop: string order}. In fact, this commutation relation implies that the states $\omega, ~\omega_n$ have the same string order, but the unitaries $U^{[i,j)}_{\tilde{g}}$ act trivially after the measurements. 

Finally, we point out that while the measured states $\omega_n$ are all pure, the weak-* limiting state $\nu$ is mixed, as exhibited in the cluster state example. This follows from~(\ref{eq: LR correlation QCA section 2}) using the general property that pure states are clustering, see Section~\ref{Section: nu is mixed}.

\section{On SPT phases}\label{sec:SPT}

We start by exploring the consequence of Definition~\ref{def: short range entangled states}, which relies on the minimal locality assumption of a factorizable automorphism. We then turn to the role of the local symmetry and briefly recall the construction of the SPT index, which arises from the action of symmetry on the half-infinite chain, see~\cite{ogata2021classificationsymmetryprotectedtopological, carvalho_classification_2024}. This will allow us to describe precisely what we mean by string order and prove Proposition~\ref{prop: string order}.

\subsection{Factorization property and short-range entanglement}

Definition~\ref{def: short range entangled states} generalizes the idea that short-range entangled states are obtained from trivial
product states by finite-time quasi-local evolutions, as first formulated in \cite{Chen_2010_LREvsSRE}. Indeed, short-range entanglement is often defined using locally generated automorphisms (LGAs) arising from a time–dependent
local interaction \(K(s)\), see \cite{Kapustin_2021_classification_bosonicSPT,ogata2021classificationsymmetryprotectedtopological,bachmann_classification_2024,carvalho_classification_2024,bachmann_many-body_2024}. From the point of view of this paper, LGAs
form an important subclass
of automorphisms and satisfy the factorization property \eqref{eq: split-automorphism}. Indeed, if \(\alpha_s^K\) denotes the dynamics generated by
the time-dependent interaction \(K(s)\), then it satisfies the factorization property
\begin{equation*}
\alpha_s^K= (\alpha_s^{K_L}\otimes\alpha^{K_R}_s)\circ\mathrm{Ad}[U_s]
\end{equation*}
by the Lieb-Robinson bound~\cite{Lieb_Robinson_bound_1972,Nachtergaele_2006ExponentialClustering}. Here, $K_L$ and $K_R$ denote the interaction strictly on the left and on the right, respectively, and the almost local unitary $U_s$ solves the differential equation
\begin{equation*}
    \frac{d}{ds}\mathrm{Ad}[U_s](A)=\mathrm{Ad}[U_s]\big(\mathrm{i}[\alpha_s^{K_L}\otimes\alpha^{K_R}_s(K_M(s)), A]\big),
\end{equation*}
with $K_M$ the interactions between the left and right chains, see~\cite{bachmann2017local}. Lemma~\ref{lemma: split implies quasi local} then implies that for any observable $A$ localized on a finite interval, $\alpha_s^K(A)$ is almost local.

\subsection{Decay of correlations}

While the SRE condition translates the physical intuition that the state $\omega$ does not exhibit long-range entanglement, it implies in particular short-range correlations. This holds again by locality of the dynamics relating $\omega$ to the product state~\cite{Bravyi_2006_Topo_order_is_LRE, Hastings_2006ExponentialDecay, Nachtergaele_2006Propagation_Correlation}. A similar bound continues to hold with our Definition~\ref{def: short range entangled states} which does not explicitly depend on a local Hamiltonian.

\begin{lemma}\label{lemma: decay of correlation}
Let $\omega$ be an SRE state $\omega = \omega_0\circ\alpha$. Then for any intervals $I,J$ with $l= \mathrm{dist}(I,J)$ and any $A \in \mathcal{A}_I$, $B \in \mathcal{A}_J$, 
\begin{equation*}
    |\omega(AB) - \omega(A)\omega(B)|
   \leq 4\|A\| \|B\| f_\alpha\left(\frac{l}{3}\right)
\end{equation*}
 where $f_\alpha$ is given by Lemma~\ref{lemma: split implies quasi local}.
\end{lemma}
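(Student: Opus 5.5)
Write $\omega(AB)-\omega(A)\omega(B) = \omega_0(\alpha(A)\alpha(B)) - \omega_0(\alpha(A))\omega_0(\alpha(B))$ and use that $\omega_0$ is a product state, so it factorizes exactly across any cut. The plan is to replace $\alpha(A)$ and $\alpha(B)$ by localized approximants with disjoint supports, and to account for the errors in this replacement.

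Assume without loss of generality that $I$ lies to the left of $J$, and put $r = \lfloor l/3\rfloor$. By Lemma~\ref{lemma: split implies quasi local}, $\alpha$ is $f_\alpha$-almost local. Set
\[
A' = \Pi_{I_r}(\alpha(A)), \qquad B' = \Pi_{J_r}(\alpha(B)).
\]
Then $\|\alpha(A)-A'\|\le f_\alpha(r)\|A\|$ and $\|\alpha(B)-B'\|\le f_\alpha(r)\|B\|$. Since the conditional expectations are contractive, $\|A'\|\le\|A\|$ and $\|B'\|\le \|B\|$. Because $\mathrm{dist}(I,J)=l$ and $2r<l$, the neighbourhoods $I_r$ and $J_r$ are disjoint finite sets, so $A'\in\mathcal{A}_{I_r}$ and $B'\in\mathcal{A}_{J_r}$ have disjoint supports.

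Next I need exact factorization of $\omega_0$ across disjoint finite sets. The definition only gives factorization for single-site observables, but it extends to $\omega_0(A'B')=\omega_0(A')\omega_0(B')$ by linearity and the product structure. Concretely, a product state is the infinite tensor product of its single-site restrictions, which one checks on elementary tensors. If needed, I would note that "product state" is meant in this standard sense.

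The error is then estimated by a telescoping argument:
\begin{align*}
|\omega_0(\alpha(A)\alpha(B)) - \omega_0(A'B')| &\le \|\alpha(A)-A'\|\,\|B\| + \|A'\|\,\|\alpha(B)-B'\| \le 2f_\alpha(r)\|A\|\|B\|,\\
|\omega_0(\alpha(A))\omega_0(\alpha(B)) - \omega_0(A')\omega_0(B')| &\le 2 f_\alpha(r)\|A\|\|B\|.
\end{align*}
Summing the two bounds gives $4\|A\|\|B\|f_\alpha(r)$.

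The step I expect to be the main obstacle, though it is minor, is matching $f_\alpha(r)$ with $f_\alpha(l/3)$, since $f_\alpha$ is defined on $\mathbb{N}$ and is non-increasing. If $r$ could be chosen as $\lceil l/3\rceil$ with the supports still disjoint, then $f_\alpha(\lceil l/3\rceil)\le f_\alpha(l/3)$ under the natural interpretation. Otherwise I would use $r=\lfloor l/3\rfloor$ and read $f_\alpha(l/3)$ as $f_\alpha(\lfloor l/3\rfloor)$, which is harmless up to redefining $f_\alpha$. The factor $1/3$ leaves ample slack for disjointness of $I_r$ and $J_r$ in either case.
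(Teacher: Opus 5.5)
Your proposal is correct and is essentially the paper's proof: the same approximants $\Pi_{I_r}(\alpha(A))$ and $\Pi_{J_r}(\alpha(B))$ with $r\le l/3$, exact factorization of $\omega_0$ across the disjoint supports, and the same error split, namely $2f_\alpha(r)\|A\|\|B\|$ for the product term plus $2f_\alpha(r)\|A\|\|B\|$ for the product of expectations. One small correction: the single-site pairwise condition in the paper's definition does not by itself yield $\omega_0(A'B')=\omega_0(A')\omega_0(B')$ for multi-site $A',B'$, so your claim that it ``extends by linearity'' fails as stated (the three-qubit diagonal state uniform on even-parity strings is pairwise product, yet $\omega_0(Z_1Z_2\cdot Z_3)=1\neq 0=\omega_0(Z_1Z_2)\,\omega_0(Z_3)$); your fallback of reading ``product state'' as the infinite tensor product of one-site states is exactly what the paper uses implicitly, and with it the argument is complete.
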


\begin{proof}
By Lemma~\ref{lemma: split implies quasi local},
\begin{equation}\label{bound on R}
    \|\alpha(A) - \Pi_{I_r}(\alpha(A))\|
      \le \|A\| f_\alpha(r), 
\qquad 
   \|\alpha(B) - \Pi_{J_r}(\alpha(B))\|
      \le \|B\| f_\alpha(r).
\end{equation}
We define $A_r := \Pi_{I_r}(\alpha(A)), B_r := \Pi_{J_r}(\alpha(B))$. For any $r\leq \frac{l}{3}$, $\mathrm{dist}(I_r,J_r) = l - 2r \geq l/3$, so $A_r$ and $B_r$ have disjoint supports, and thus
\[
   \omega_0(A_r B_r) = \omega_0(A_r)\, \omega_0(B_r),
\]
because $\omega_0$ is a product state. Next, let
\begin{equation*}
   \alpha(A) = A_r + R_r(A), 
   \qquad
   \alpha(B) = B_r + R_r(B),
\end{equation*}
%where $R_r(A)= \alpha(A) - A_r$. 
Since $\omega(AB)
   = \omega_0(\alpha(A)\alpha(B))$, 
\begin{equation*}
   \omega(AB)
   = \omega_0(A_r B_r) 
     + \omega_0(R_r(A) \alpha(B)) 
     + \omega_0(A_r R_r(B))
     %+ \omega_0(R_r(A)R_r(B)),
\end{equation*}
and hence
\begin{equation}\label{corr 1}
    \vert \omega(AB) - \omega_0(A_r B_r) \vert
    \leq  2\|A\|\|B\| f_\alpha(r),
\end{equation}
by the locality bound~(\ref{bound on R}). By the same bound,
\begin{equation}
    |\omega_0(A_r) - \omega(A)| 
    = |\omega_0(R_r(A))|
    \le \|A\| f_\alpha(r),
\end{equation}
and similarly with $A\leftrightarrow B$. Thus
\begin{equation}\label{corr 2}
   |\omega_0(A_r)\omega_0(B_r) - \omega(A)\omega(B)|
      \le 2\|A\|\|B\| f_\alpha(r).
\end{equation}
Combining (\ref{corr 1},\ref{corr 2}), we conclude that
\begin{equation*}
    |\omega(AB) - \omega(A)\omega(B)| \leq 4\|A\|\|B\| f_\alpha(r).
\end{equation*}
The claim follows since $f_\alpha$ is decreasing and the bound holds for all $r\leq \frac{l}{3}$.
\end{proof}

\subsection{Action of the half-chain symmetry on a short-range entangled state}

 For any site $j$ and $\elementofbigGroup\in\bigGroup$, consider the automorphism $\beta^{R_j}_\elementofbigGroup$ and $\beta^{L_j}_\elementofbigGroup$. On the left from the cut $j$, the automorphism $\beta^{R_j}_\elementofbigGroup$ acts as the identity. While far on the right, it behaves like the full symmetry. Since the state is symmetric and short-range entangled, $\beta^{R_j}_\elementofbigGroup$ acts as an almost local operator on $\omega$, see~\cite{Kapustin_2021_classification_bosonicSPT, bachmann_classification_2024}.

\begin{prop}
\label{prop: existence of W^R_g}
Let $\omega$ be a $\bigGroup$-invariant SRE state. Then for each $j\in\mathbb{Z}$ and each $\elementofbigGroup \in \mathcal{G}$, there exists a unitary $W^{R_j}_\elementofbigGroup \in \mathcal{A}$ such that
\begin{equation}
\omega \circ \beta^{R_j}_\elementofbigGroup  = \omega \circ \mathrm{Ad}[W^{R_j}_\elementofbigGroup].
\label{eq: existence of W^R_g}
\end{equation}
Furthermore, $W^{R_j}_\elementofbigGroup$ is $f_W$-close to the singleton $\{j\}$, where $f_W\in \mathcal{F}$ depends only on~$f_\alpha$ and neither on $j$ nor on $\elementofbigGroup\in\bigGroup$.
\end{prop}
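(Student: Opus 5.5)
Write $\omega=\omega_0\circ\alpha$ with $\omega_0$ a product state and $\alpha=\alpha_1$ factorizable. The plan is to transport the half-chain symmetry to the product state, observe that there it acts like an automorphism localized near the cut, and transport back. Set $\gamma\coloneqq\alpha\circ\beta^{R_j}_\elementofbigGroup\circ\alpha^{-1}$. Then
\[
\omega\circ\beta^{R_j}_\elementofbigGroup=\omega_0\circ\gamma\circ\alpha .
\]
It therefore suffices to find a unitary $V$, almost localized near $j$ uniformly in $j$ and $\elementofbigGroup$, with $\omega_0\circ\gamma=\omega_0\circ\mathrm{Ad}[V]$. We would then set $W^{R_j}_\elementofbigGroup\coloneqq\alpha^{-1}(V)$, because $\mathrm{Ad}[V]\circ\alpha=\alpha\circ\mathrm{Ad}[\alpha^{-1}(V)]$. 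Almost locality of $W$ then follows from that of $V$ and $\alpha^{-1}$ (Lemma~\ref{lemma: split implies quasi local} and the Remark after it), at the cost of composing decay functions, which stays in $\mathcal F$.

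To analyze $\gamma$, use the factorization property at $j$, namely $\alpha=(\alpha^{L_j}\otimes\alpha^{R_j})\circ\mathrm{Ad}[U^j]$. Since $\beta^{R_j}_\elementofbigGroup$ acts as the identity on $\mathcal A_{L_j}$, it commutes with $\alpha^{L_j}\otimes\mathrm{id}$. Also $\mathrm{Ad}[U^j]\circ\beta^{R_j}_\elementofbigGroup=\beta^{R_j}_\elementofbigGroup\circ\mathrm{Ad}[\beta^{R_j}_{\elementofbigGroup^{-1}}(U^j)]$, and $\beta^{R_j}_{\elementofbigGroup^{-1}}(U^j)$ is still $f$-close to $j$ because $\beta$ is on-site. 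This gives
\[
\gamma=\mathrm{Ad}[U']\circ\big(\mathrm{id}_{L_j}\otimes\alpha^{R_j}\beta^{R_j}_\elementofbigGroup(\alpha^{R_j})^{-1}\big),
\]
with $U'$ a unitary almost localized at $j$. Next, symmetry of $\omega$ gives $\omega_0\circ\alpha\beta_\elementofbigGroup\alpha^{-1}=\omega_0$, so the global automorphism $\alpha\beta_\elementofbigGroup\alpha^{-1}$ leaves the product state invariant. Factorizing it in the same way at $j$ shows that, modulo an almost local unitary at $j$, it splits as $\gamma_L\otimes\gamma_R$ with $\gamma_R=\alpha^{R_j}\beta^{R_j}_\elementofbigGroup(\alpha^{R_j})^{-1}$. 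Invariance of $\omega_0=\omega_0^{L}\otimes\omega_0^{R}$ then gives
\[
(\omega_0^L\circ\gamma_L)\otimes(\omega_0^R\circ\gamma_R)=\omega_0\circ\mathrm{Ad}[U''] .
\]

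The main step is to conclude from this that $\omega_0^R\circ\gamma_R=\omega_0^R\circ\mathrm{Ad}[V_R]$ for some unitary $V_R\in\mathcal A_{R_j}$ almost localized near $j$. Both sides are pure and are related by an almost local unitary. By the split property of product states and Kadison transitivity, pure states that are quasi-equivalent and agree up to small errors away from $j$ differ by a unitary localized near $j$. The first thing to try is to truncate $U''$ to $\Pi_{[j-r,j+r]}$ and apply a quantitative version of the standard lemma (as in Kapustin--Sopenko--Yang or Ogata): if two pure states on a half-chain agree outside $[j,j+r]$ up to an error $\epsilon(r)$, then they are related by a unitary that is close to $\mathcal A_{[j,j+2r]}$ with error $O(\sqrt{\epsilon(r)})$. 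Summing these errors over dyadic $r$ gives a decay function in $\mathcal F$ that depends only on $f_\alpha$. This step is the main obstacle. Uniformity in $j$ comes from the site-independence of $f$ in Definition~\ref{def:split automorphism}, and uniformity in $\elementofbigGroup$ comes from the fact that $\beta$ is on-site and norm-preserving, so no estimate depends on $\elementofbigGroup$.

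Finally, set $V\coloneqq U'(\mathbb I\otimes V_R)$. This gives $\omega_0\circ\gamma=\omega_0\circ\mathrm{Ad}[V]$ and, by the first paragraph, the claimed $W^{R_j}_\elementofbigGroup=\alpha^{-1}(V)$.
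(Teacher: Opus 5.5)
Your strategy is the paper's. You transport to $\gamma=\alpha\circ\beta^{R_j}_{\elementofbigGroup}\circ\alpha^{-1}$, implement $\omega_0\circ\gamma$ by a unitary almost localized at $j$ using the quantitative lemma on pure states that agree away from a point, and pull back with $\alpha^{-1}$. For exactly the step you flag as the main obstacle, the paper cites Lemma~A.1 and Proposition~A.2 of \cite{bachmann_many-body_2024}.

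The paper gets the far-field agreement more directly, without factorizing $\gamma$. By $\bigGroup$-invariance, $\omega_0\circ\gamma=\omega\circ\beta^{L_j}_{\elementofbigGroup^{-1}}\circ\alpha^{-1}$, which agrees with $\omega_0$ up to $2f_\alpha(r)\Vert A\Vert$ for $A$ supported to the right of $j+r$. The same holds without symmetry for $A$ supported to the left of $j-r$, and the lemma is then applied on the full chain. This uses only almost locality of $\alpha^{\pm1}$. Your reduction to the half-chain states $\omega_0^R\circ\gamma_R$, via the factorization at $j$ and the invariance of $\omega_0$ under $\alpha\beta_{\elementofbigGroup}\alpha^{-1}$, is valid but unnecessary.

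There is, however, an error in your final assembly. You know $\omega_0\circ(\mathrm{id}\otimes\gamma_R)=\omega_0\circ\mathrm{Ad}[\mathbb{I}\otimes V_R]$ only as an identity of states. From $\gamma=\mathrm{Ad}[U']\circ(\mathrm{id}\otimes\gamma_R)$ you get $\omega_0\circ\gamma=(\omega_0\circ\mathrm{Ad}[U'])\circ(\mathrm{id}\otimes\gamma_R)$, and here the state composed with $\mathrm{id}\otimes\gamma_R$ is $\omega_0\circ\mathrm{Ad}[U']\neq\omega_0$. So $V=U'(\mathbb{I}\otimes V_R)$ does not give $\omega_0\circ\gamma=\omega_0\circ\mathrm{Ad}[V]$.

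The fix is to keep the inner unitary on the right, which is what your commutation computation actually produces: $\gamma=(\mathrm{id}\otimes\gamma_R)\circ\mathrm{Ad}[X']$ with $X'=(\alpha^{L_j}\otimes\alpha^{R_j})\bigl(\beta^{R_j}_{\elementofbigGroup^{-1}}(U^j)U^{j*}\bigr)$. Then set $V=(\mathbb{I}\otimes V_R)X'$. This gives $\omega_0\circ\gamma=\omega_0\circ\mathrm{Ad}[V]$, and $V$ is almost localized at $j$ uniformly in $j$ and $\elementofbigGroup$, because $\alpha^{L_j}\otimes\alpha^{R_j}=\alpha\circ\mathrm{Ad}[U^{j*}]$ preserves almost locality.
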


\begin{remark}
Of course, the same result holds with $R_j$ replaced with $L_j$, yielding an almost-local operator $W^{L_j}_\elementofbigGroup$. As the proof below shows, the function $f_W$ can be explicitly written as $f_W(r)=F(f_\alpha)(r)$, where $F:\mathcal{F}\rightarrow\mathcal{F}$ is independent of $\alpha$. Moreover, the map $F$ is increasing, a property that we will use in Section~\ref{sec:Proof of Main 1}.
\end{remark}

\begin{proof}
Without loss of generality, let us consider $j=0$ and we note $R\coloneqq R_0$. Denoting $\omega = \omega_0\circ \alpha$ and using its $\bigGroup$-invariance, 
\begin{equation*}
    \omega_0 \circ \alpha \circ \beta^{R}_\elementofbigGroup \circ \alpha^{-1}=\omega\circ\beta_{\elementofbigGroup^{-1}}\circ\beta_\elementofbigGroup^{R}\circ\alpha^{-1}
    =\omega\circ\beta^L_{\elementofbigGroup^{-1}}\circ\alpha^{-1},
\end{equation*}
because $\beta_{\elementofbigGroup^{-1}}$ is a group action. Let $A\in\mathcal{A}_{\rm loc}$ be supported in $(r,+\infty)$. Using the factorization property of $\alpha^{-1}$, see Lemma~\ref{lemma: split implies quasi local}, we have that $\alpha^{-1}(A)$ is almost localized on $R$, namely there is $\tilde{A} \in \mathcal{A}_{R}$ such that:
\begin{equation*}
    \|\alpha^{-1}(A) - \tilde{A}\| \leq f_\alpha(r) \|A\|.
\end{equation*}
Since $\tilde A$ is invariant under $\beta_{\elementofbigGroup^{-1}}^L$, we conclude that $\alpha^{-1}(A)$ is almost invariant
\begin{equation*}
   \|\beta_{\elementofbigGroup^{-1}}^L(\alpha^{-1}(A)) - \alpha^{-1}(A)\| \leq 2f_\alpha(r)\|A\|. 
\end{equation*}
Hence,
\[
\left|\omega_0 \circ \alpha \circ \beta_\elementofbigGroup^R \circ \alpha^{-1}(A) - \omega_0(A) \right| 
\leq \|\alpha\circ\beta_\elementofbigGroup^R \circ \alpha^{-1}(A) - A\| 
\leq 2f_\alpha(r)\|A\|
\]
for all $A \in \mathcal{A}_{(r,+\infty)}$ with finite support.

A similar argument (which does not require the symmetry of $\omega$) yields the same bound for all $A \in \mathcal{A}_{(-\infty,-r)}$ with finite support. Therefore,
\[
|\omega_0\circ\alpha\circ\beta^{R}_\elementofbigGroup \circ \alpha^{-1}(A) - \omega_0(A)| \leq 2f_\alpha(r)\|A\|.
\]

This implies, see e.g.\ Lemma~A.1 of~\cite{bachmann_many-body_2024}, that there is $f\in\mathcal{F}$, depending only on $f_\alpha$, such that
\begin{equation*}
    |\omega_0 \circ \alpha \circ \beta^{R}_\elementofbigGroup \circ \alpha^{-1}(A) - \omega_0(A)| \leq f(r)\|A\|.
\end{equation*}
In turn, by Proposition A.2 of~\cite{bachmann_many-body_2024}, there exists a unitary $\tilde{W}^{R}_{\elementofbigGroup}\in \mathcal{A}$ that is almost localized at $\{0\}$ such that
\begin{equation*}
    \omega_0 \circ \alpha \circ \beta^{R}_\elementofbigGroup \circ \alpha^{-1} = \omega_0 \circ \mathrm{Ad}[\tilde{W}^{R}_{\elementofbigGroup}].
\end{equation*}
 The almost locality of $\tilde{W}^R_\elementofbigGroup$ depends again only on $f_\alpha$. Pulling this back to $\omega$ yields
\begin{equation*}
\omega \circ \beta^{R_i}_\elementofbigGroup(B) 
= \omega \circ \alpha^{-1} \circ \mathrm{Ad}[\tilde{W}^{R}_{\elementofbigGroup}](\alpha(B))
= \omega \circ \mathrm{Ad}[\alpha^{-1}(\tilde{W}^{R}_{\elementofbigGroup})](B),
\end{equation*}
so we can set
\begin{equation*}
    W^{R}_{\elementofbigGroup} = \alpha^{-1}(\tilde{W}^{R}_{\elementofbigGroup}).
\end{equation*}
The unitary $W^{R}_{\elementofbigGroup}$ is again almost localized since $\alpha^{-1}$ is an almost-local automorphism, and the corresponding localization function $f_W$ depends only on~$f_\alpha$. In fact, using the estimates of the reference above, one finds that $W^{R}_{\elementofbigGroup}$ is $f_W$ almost localized with
\begin{equation}\label{eq: F of f almost locality of WR}
   f_W(r)= f_\alpha(\frac{r}{2})+ 36\sum^\infty_{l=\frac{r}{2}} \sqrt{ 2f_\alpha(l) +2\sum_{k=l}^\infty  \sqrt{df_\alpha(k)}},
\end{equation}
concluding the proof.
\end{proof}

\subsection{Projective representation of the group $\bigGroup$}
\label{subsec: projective representation and index}

Let again $\omega$ be a $\bigGroup$-invariant pure SRE state. Following the classification results in the literature \cite{ogata2021classificationsymmetryprotectedtopological, Kapustin_2021_classification_bosonicSPT, carvalho_classification_2024}, we show the existence of a projective representation of the group $\bigGroup$ in the GNS representation $(\mathcal{H},\pi,\Omega)$ of $\omega$, and discuss its relation with the local operators ${W^{R_j}_\elementofbigGroup}$. Since $\omega$ is pure, $\pi$ is irreducible.

\begin{prop}
\label{prop: existence of VRg + properties}
   Let $(\mathcal{H}, \pi, \Omega)$ be a GNS representation of the $\bigGroup$-invariant pure SRE state $\omega$. Let $j\in\mathbb{Z}$. There exists a unitary projective representation $\elementofbigGroup\mapsto V_{\elementofbigGroup}^{R_j}$ of $\bigGroup$ such that
    \begin{align}
    \pi \circ \beta_{\elementofbigGroup}^{R_j} &=   \mathrm{Ad}[V_{\elementofbigGroup}^{R_j}]\circ \pi,  \label{eq: V^R_g properties}\\
         V_{\elementofbigGroup}^{R_j*} \Omega &= \pi(W_{\elementofbigGroup}^{R_j*}) \Omega \label{eq: V^R_g properties2}.
    \end{align}
\end{prop}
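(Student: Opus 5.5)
The plan is to construct $V^{R_j}_{\elementofbigGroup}$ by implementing the automorphism $\beta^{R_j}_{\elementofbigGroup}$ in the GNS representation, using Proposition~\ref{prop: existence of W^R_g} to define it on the dense subspace $\pi(\mathcal{A})\Omega$. Fix $j$ and write $R=R_j$, $W_{\elementofbigGroup}=W^{R}_{\elementofbigGroup}$. For $A\in\mathcal{A}$ we set
\begin{equation*}
  V_{\elementofbigGroup}\,\pi(A)\Omega \coloneqq \pi\big(\beta^{R}_{\elementofbigGroup}(A)\, W_{\elementofbigGroup}^{*}\big)\Omega .
\end{equation*}
To show this is well defined and isometric, we compute
\begin{equation*}
  \big\|\pi(\beta^R_{\elementofbigGroup}(A)W^*_{\elementofbigGroup})\Omega\big\|^2=\omega\big(W_{\elementofbigGroup}\beta^R_{\elementofbigGroup}(A^*A)W^*_{\elementofbigGroup}\big).
\end{equation*}
By~\eqref{eq: existence of W^R_g}, $\omega\circ\beta^R_{\elementofbigGroup}=\omega\circ\mathrm{Ad}[W_{\elementofbigGroup}]$. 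Applying this identity with $\beta^{R}_{\elementofbigGroup^{-1}}$ in place of $\beta^R_{\elementofbigGroup}$ and using that $\beta^R$ is a group action, we need $\omega\circ\mathrm{Ad}[W_{\elementofbigGroup}]\circ\beta^R_{\elementofbigGroup}=\omega$. This follows directly from rewriting~\eqref{eq: existence of W^R_g} as $\omega\circ\mathrm{Ad}[W^*_{\elementofbigGroup}]=\omega\circ\beta^R_{\elementofbigGroup}\circ\mathrm{Ad}[\ldots]$. The cleanest route is the identity $\omega(B)=\omega\circ\beta^R_{\elementofbigGroup}(\beta^R_{\elementofbigGroup^{-1}}B)=\omega(W_{\elementofbigGroup}\beta^R_{\elementofbigGroup^{-1}}(B)W^*_{\elementofbigGroup})$, applied to $B=\beta^R_{\elementofbigGroup}(A^*A)$ after swapping the roles of $\elementofbigGroup$ and $\elementofbigGroup^{-1}$. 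So the norm equals $\omega(A^*A)=\|\pi(A)\Omega\|^2$. Hence $V_{\elementofbigGroup}$ extends to an isometry on $\mathcal{H}$ (by cyclicity of $\Omega$). Its range contains $\pi(\beta^R_{\elementofbigGroup}(A)W^*_{\elementofbigGroup})\Omega$ for all $A$, which is all of $\pi(\mathcal{A})\Omega$ since $A\mapsto\beta^R_{\elementofbigGroup}(A)W^*_{\elementofbigGroup}$ is a bijection of $\mathcal{A}$. Therefore $V_{\elementofbigGroup}$ is unitary.

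Next we check the intertwining relation~\eqref{eq: V^R_g properties}. On the dense set we have
\begin{equation*}
  V_{\elementofbigGroup}\pi(B)\pi(A)\Omega=\pi(\beta^R_{\elementofbigGroup}(B))\pi(\beta^R_{\elementofbigGroup}(A)W^*_{\elementofbigGroup})\Omega=\pi(\beta^R_{\elementofbigGroup}(B))V_{\elementofbigGroup}\pi(A)\Omega,
\end{equation*}
giving $V_{\elementofbigGroup}\pi(B)V_{\elementofbigGroup}^*=\pi\circ\beta^R_{\elementofbigGroup}(B)$. Taking $A=\mathbb{I}$ yields $V_{\elementofbigGroup}\Omega=\pi(W^*_{\elementofbigGroup})\Omega$. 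Relation~\eqref{eq: V^R_g properties2} concerns $V^*_{\elementofbigGroup}\Omega$, so it is obtained by applying the construction to $\elementofbigGroup^{-1}$, or by redefining $V_{\elementofbigGroup}$ through its adjoint. Concretely, define $V^*_{\elementofbigGroup}\pi(A)\Omega=\pi(\beta^R_{\elementofbigGroup^{-1}}(A)W^*_{\elementofbigGroup})\Omega$, which is the same isometry computation with $\mathrm{Ad}[W_{\elementofbigGroup}]$ implementing $\beta^R_{\elementofbigGroup}$ on $\omega$. I expect this bookkeeping of which of $\elementofbigGroup$ and $\elementofbigGroup^{-1}$ appears to be the only fiddly point.

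Finally we establish the projective representation property. Since $\mathrm{Ad}[V_{\elementofbigGroup}V_{\elementofbigGroup'}]\circ\pi=\pi\circ\beta^R_{\elementofbigGroup\elementofbigGroup'}=\mathrm{Ad}[V_{\elementofbigGroup\elementofbigGroup'}]\circ\pi$, the operator $V_{\elementofbigGroup\elementofbigGroup'}^*V_{\elementofbigGroup}V_{\elementofbigGroup'}$ commutes with $\pi(\mathcal{A})$. Because $\omega$ is pure, $\pi$ is irreducible, so this operator is a scalar $\mu(\elementofbigGroup,\elementofbigGroup')\in U(1)$ by Schur's lemma. This is the main conceptual step, and it rests on purity.
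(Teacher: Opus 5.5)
Your strategy (build the intertwiner by hand on $\pi(\mathcal{A})\Omega$, then use irreducibility for the cocycle) is the explicit form of the paper's argument. The paper simply invokes uniqueness of the GNS representation for the two cyclic representations $(\mathcal{H},\pi\circ\beta^{R_j}_{\elementofbigGroup},\Omega)$ and $(\mathcal{H},\pi,\pi(W^{R_j*}_{\elementofbigGroup})\Omega)$ of $\omega\circ\beta^{R_j}_{\elementofbigGroup}$, and your Schur step matches its cocycle step exactly.

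However, the first definition you write down is wrong, and the argument offered for its isometry does not work. By \eqref{eq: existence of W^R_g},
\[
\omega\bigl(W_{\elementofbigGroup}\,\beta^R_{\elementofbigGroup}(A^*A)\,W^*_{\elementofbigGroup}\bigr)=\omega\circ\beta^R_{\elementofbigGroup}\circ\beta^R_{\elementofbigGroup}(A^*A)=\omega\circ\beta^R_{\elementofbigGroup^2}(A^*A).
\]
So isometry of $\pi(A)\Omega\mapsto\pi(\beta^R_{\elementofbigGroup}(A)W^*_{\elementofbigGroup})\Omega$ would require $\omega\circ\beta^R_{\elementofbigGroup^2}=\omega$. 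This fails in general. It does hold when $\elementofbigGroup^2$ is the neutral element, as in the $\mathbb{Z}_2\times\mathbb{Z}_2$ cluster example. Your ``cleanest route'' with $\elementofbigGroup\leftrightarrow\elementofbigGroup^{-1}$ swapped actually yields $\omega(A^*A)=\omega(W_{\elementofbigGroup^{-1}}\beta^R_{\elementofbigGroup}(A^*A)W^*_{\elementofbigGroup^{-1}})$. That is isometry with $W_{\elementofbigGroup^{-1}}$, not $W_{\elementofbigGroup}$. These are independent choices in Proposition~\ref{prop: existence of W^R_g} and cannot be interchanged. The intertwining computation built on the first definition is therefore moot.

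The definition at the end of your second paragraph, $V^*_{\elementofbigGroup}\pi(A)\Omega=\pi(\beta^R_{\elementofbigGroup^{-1}}(A)W^*_{\elementofbigGroup})\Omega$, is the correct one, and the whole proof should be run with it:
\begin{itemize}
\item Isometry is immediate: $\omega(W_{\elementofbigGroup}\beta^R_{\elementofbigGroup^{-1}}(A^*A)W^*_{\elementofbigGroup})=\omega\circ\beta^R_{\elementofbigGroup}\circ\beta^R_{\elementofbigGroup^{-1}}(A^*A)=\omega(A^*A)$.
\item Surjectivity follows exactly as you argued.
\item On the dense set, $V^*_{\elementofbigGroup}\pi(B)=\pi(\beta^R_{\elementofbigGroup^{-1}}(B))V^*_{\elementofbigGroup}$. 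Substituting $B=\beta^R_{\elementofbigGroup}(C)$, this is equivalent to \eqref{eq: V^R_g properties}.
\item Taking $A=\mathbb{I}$ gives \eqref{eq: V^R_g properties2}.
\end{itemize}
Equivalently, $V_{\elementofbigGroup}\pi(A)\Omega=\pi(\beta^R_{\elementofbigGroup}(AW_{\elementofbigGroup}))\Omega$, which is your first formula with $W^*_{\elementofbigGroup}$ replaced by $\beta^R_{\elementofbigGroup}(W_{\elementofbigGroup})$. This is precisely the unitary produced by GNS uniqueness; it is the relation \eqref{eq: definition of V^R_g} recorded right after the proposition. With this correction your proof coincides with the paper's, and yours is self-contained at the cost of a few lines of bookkeeping.
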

\noindent In particular, we have that
\begin{equation}
        V_{\elementofbigGroup}^{R_j*} \left(\pi \circ \beta_{\elementofbigGroup}^{R_j}\right) (A) \Omega = \pi\left(A W_{\elementofbigGroup}^{R_j*}\right) \Omega 
        \label{eq: definition of V^R_g}
    \end{equation}
    for all $A\in\mathcal{A}$.
    
\begin{proof}
  Let $\elementofbigGroup\in\bigGroup$. The state $\omega_\elementofbigGroup \coloneqq \omega\circ\beta^{R_j}_\elementofbigGroup$ has two GNS representations on $\mathcal H$, namely 
  \begin{equation*}
     (\mathcal{H}, \pi \circ \beta_{\elementofbigGroup}^{R_j}, \Omega)\quad\text{and}\quad
     (\mathcal{H}, \pi, \pi(W_{\elementofbigGroup}^{R_j*}) \Omega),
  \end{equation*}
where the second one is by Proposition~\ref{prop: existence of W^R_g}. Uniqueness of the GNS representation implies that there exists a unitary intertwiner $V_{\elementofbigGroup}^{R_j}: \mathcal{H} \to \mathcal{H}$ satisfying (\ref{eq: V^R_g properties},\ref{eq: V^R_g properties2}).

It remains to prove that $\elementofbigGroup\to V_{\elementofbigGroup}^{R_j}$ is in general a projective representation. Since $\beta_{\elementofbigGroup}^{R_j} \circ \beta_{\elementofbigGroup'}^{R_j} =\beta_{\elementofbigGroup \elementofbigGroup'}^{R_j} $, we have by~\eqref{eq: V^R_g properties} that 
   \begin{equation*}
       \text{Ad}[V_{\elementofbigGroup\elementofbigGroup'}^{R_j}]\circ \pi
       = \pi\circ\beta^{R_j}_\elementofbigGroup\circ\beta^{R_j}_{\elementofbigGroup'}
       =  \text{Ad}[V_{\elementofbigGroup}^{R_j}V_{\elementofbigGroup'}^{R_j}]\circ \pi.
   \end{equation*}
   Since $\omega$ is pure, the representation $\pi$ is irreducible and hence
\begin{equation*}
    V_{\elementofbigGroup}^{R_j} V_{\elementofbigGroup'}^{R_j} = \mu_\omega^{(j)}(\elementofbigGroup,\elementofbigGroup') V_{\elementofbigGroup\elementofbigGroup'}^{R_j}
\end{equation*}
for some $\mu_\omega^{(j)}(\elementofbigGroup,\elementofbigGroup')\in U(1)$.
\end{proof}

\begin{remark}
Importantly, the cocycle $\mu_\omega^{(j)}(\elementofbigGroup,\elementofbigGroup')$ can also be computed using the almost local $W_{\elementofbigGroup}^{R_j}\in\mathcal{A}$. Indeed,
\begin{align*}
    \mu_\omega^{(j)}(\elementofbigGroup,\elementofbigGroup')&= \langle \Omega, V_{\elementofbigGroup}^{R_j} V_{\elementofbigGroup'}^{R_j} V_{\elementofbigGroup\elementofbigGroup'}^{R_j*}\Omega \rangle\\
    &= \langle \Omega, \pi(W_{\elementofbigGroup}^{R_j}) V_{\elementofbigGroup'}^{R_j} \pi(W_{\elementofbigGroup\elementofbigGroup'}^{R_j*})\Omega \rangle \\
    &= \langle \Omega, \pi(W_{\elementofbigGroup'}^{R_j}\beta_{(\elementofbigGroup')^{-1}}^{R_j}(W_{\elementofbigGroup}^{R_j}) )\pi(W_{\elementofbigGroup\elementofbigGroup'}^{R_j*})\Omega \rangle
\end{align*}
by~(\ref{eq: V^R_g properties2},\ref{eq: definition of V^R_g}), and so
\begin{equation}\label{eq: index as function of W^R}
    \mu_\omega^{(j)}(\elementofbigGroup,\elementofbigGroup') 
    = \omega \left( W_{\elementofbigGroup'}^{R_j}\beta_{(\elementofbigGroup')^{-1}}^{R_j}(W_{\elementofbigGroup}^{R_j})W_{\elementofbigGroup\elementofbigGroup'}^{R_j*}\right).
\end{equation}
\end{remark}

Moving the cut from site $i$ to site $j$ amounts to acting with a local unitary and yields a transformation of the cocycle by a coboundary, namely, $\mu_\omega^{(i)}$ and $\mu_\omega^{(j)}$ belong to the same cohomology class.  It follows that if $\elementofbigGroup,\elementofbigGroup'\in\bigGroup$ commute, the definition~\eqref{eq: def sigma omega} of the index
\begin{equation*}
\sigma_\omega(\elementofbigGroup,\elementofbigGroup') = \frac{\mu_\omega^{(j)}(\elementofbigGroup,\elementofbigGroup')}{\mu_\omega^{(j)}(\elementofbigGroup',\elementofbigGroup)},
\end{equation*}
is independent of the site $j$. We further note that it is in fact nothing other than a commutator (temporarily dropping the superscript $R_j$)
\begin{equation*}
\sigma_\omega(\elementofbigGroup,\elementofbigGroup') = V_\elementofbigGroup V_{\elementofbigGroup'} V_{\elementofbigGroup\elementofbigGroup'}^*V_{\elementofbigGroup'\elementofbigGroup} V_\elementofbigGroup^* V_{\elementofbigGroup'}^* = V_\elementofbigGroup V_{\elementofbigGroup'} V_\elementofbigGroup^* V_{\elementofbigGroup'}^*
\end{equation*}
because $\elementofbigGroup$ and $\elementofbigGroup'$ commute.

\begin{lemma}\label{lemma: sigma bicharacter}
Let $G$ be a closed Abelian subgroup of $\bigGroup$. Then the restriction of $\sigma_\omega$ to $G\times G$ satisfies $\sigma_\omega(a,a) = 1$ and $ \sigma_\omega(a,b) = \overline{\sigma_\omega(b,a)}$, and
\begin{equation*}
    \sigma_\omega(a,bc) = \sigma_\omega(a,b)\sigma_\omega(a,c)
\end{equation*}
for all $a,b,c\in G$.

Moreover, for any $a\in G$, $\ker\sigma_\omega(a,\cdot)$ is a closed subgroup of $G$ containing $a$. If $G=\overline{\langle a\rangle}$ is generated by a single element, then $\sigma_\omega\upharpoonright_{G\times G}\equiv 1$.
\end{lemma}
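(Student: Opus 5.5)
We will work with the commutator formula $\sigma_\omega(a,b) = V_a V_b V_a^* V_b^*$, valid for commuting $a,b$ and derived just above (dropping the superscript $R_j$). Since $G$ is Abelian, every pair in $G\times G$ commutes, so the formula applies throughout.

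\textbf{Algebraic identities.} Setting $a=b$ gives $\sigma_\omega(a,a)=V_aV_aV_a^*V_a^*=1$. For antisymmetry, $\sigma_\omega(b,a)=V_bV_aV_b^*V_a^*$ is the adjoint, hence the inverse, of the unitary $V_aV_bV_a^*V_b^*$; as both are scalars of modulus one, $\sigma_\omega(b,a)=\overline{\sigma_\omega(a,b)}$. For multiplicativity in the second slot we write $V_{bc} = \mu_\omega^{(j)}(b,c)^{-1}V_bV_c$; the phase cancels in the commutator, giving
\begin{equation*}
\sigma_\omega(a,bc) = V_a V_b V_c V_a^* V_c^* V_b^* .
\end{equation*}
Inserting $V_a^*V_a$ between $V_b$ and $V_c$ turns this into $V_aV_bV_a^*\,(V_aV_cV_a^*V_c^*)\,V_b^*$. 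The middle factor is the scalar $\sigma_\omega(a,c)$ and can be pulled out, leaving $V_aV_bV_a^*V_b^*=\sigma_\omega(a,b)$. So $\sigma_\omega(a,\cdot)$ is a homomorphism $G\to U(1)$.

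\textbf{Kernel.} Because $\sigma_\omega(a,\cdot)$ is a homomorphism, $\ker\sigma_\omega(a,\cdot)$ is a subgroup, and it contains $a$ since $\sigma_\omega(a,a)=1$. Closedness is the only delicate step, because $\mathfrak{g}\mapsto V_\mathfrak{g}$ is only defined up to phases, with no continuity assumed. We avoid this by using \eqref{eq: V^R_g properties}: $V_\mathfrak{g}$ implements $\pi\circ\beta^{R_j}_\mathfrak{g}$. Then $\sigma_\omega(a,b)$ is determined, independently of phase choices, by $\mathrm{Ad}[V_aV_bV_a^*V_b^*]\circ\pi = \pi\circ\beta^{R_j}_{a}\beta^{R_j}_b\beta^{R_j}_{a^{-1}}\beta^{R_j}_{b^{-1}}$. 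That equation alone is trivially the identity for commuting $a,b$, so we need the cocycle formula \eqref{eq: index as function of W^R} instead. There we would like to choose $W^{R_j}_\mathfrak{g}$ depending continuously on $\mathfrak{g}$; the construction via Proposition A.2 of the cited reference need not give this directly. A cleaner route is to show that $b\mapsto\sigma_\omega(a,b)$ is a measurable character, since measurable homomorphisms of compact groups into $U(1)$ are automatically continuous, and continuity makes the kernel closed. Alternatively, a finite-group setting or an explicit continuous choice of $V$ on $G$, obtained from the continuity of $\mathfrak{g}\mapsto\beta_\mathfrak{g}$ and the almost locality estimates, would suffice. Establishing this continuity is the main obstacle I expect.

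\textbf{Cyclic case.} If $G=\overline{\langle a\rangle}$, then $\ker\sigma_\omega(a,\cdot)$ is a closed subgroup containing $a$, hence contains $G$, so $\sigma_\omega(a,b)=1$ for all $b\in G$. By antisymmetry $\sigma_\omega(b,a)=1$, so $a\in\ker\sigma_\omega(b,\cdot)$ for every $b\in G$. Applying closedness again, $\ker\sigma_\omega(b,\cdot)\supseteq G$. Hence $\sigma_\omega\equiv1$ on $G\times G$.
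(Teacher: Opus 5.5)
Your algebraic identities and your cyclic-case argument are correct and match the paper. Inserting $V_a^*V_a$ is just a rearrangement of the paper's use of $V_cV_a^*V_c^*=\sigma_\omega(a,c)V_a^*$. Your second application of closedness, to $\ker\sigma_\omega(b,\cdot)$, spells out what the paper compresses into ``conjugate symmetry''. The gap is the closedness of $\ker\sigma_\omega(a,\cdot)$. You name the paper's strategy --- a measurable character of a compact group is automatically continuous --- but you never establish that $\sigma_\omega(a,\cdot)$ is measurable, and you leave the step as an ``obstacle''. The closedness claim and the cyclic case both rest on this step, so as written the proposal does not prove the lemma.

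The missing point is that you need no continuous (or even measurable) choice of $W^{R_j}_\elementofbigGroup$ or $V^{R_j}_\elementofbigGroup$ as functions of the group element. The paper argues in one line: $\sigma_\omega(a,b)=\mu_\omega(a,b)/\mu_\omega(b,a)$ inherits measurability from the $2$-cocycle $\mu_\omega$. It takes measurability of $\mu_\omega$ as given, as is standard when $H^2$ of a compact group is formed with measurable cochains. Because $\sigma_\omega$ does not depend on the representative of $[\mu_\omega]$, any measurable representative suffices; Weil's theorem then gives continuity and hence a closed kernel. If you would rather not rely on that framework, you can get measurability from relations you already have. Your commutator formula gives $V_bV_a^*V_b^*=\sigma_\omega(a,b)V_a^*$. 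By Kaplansky's density theorem ($\mathcal{H}$ is separable, since $\mathcal{A}$ is), there is a bounded sequence $\pi(A_k)\to V_a^*$ strongly. Using $\pi\circ\beta^{R_j}_b=\mathrm{Ad}[V_b]\circ\pi$, this yields $\langle\xi,\pi(\beta^{R_j}_b(A_k))\eta\rangle\to\sigma_\omega(a,b)\langle\xi,V_a^*\eta\rangle$ for every $b$. Each term is continuous in $b$, because the on-site representations are continuous. Choosing $\xi,\eta$ with $\langle\xi,V_a^*\eta\rangle\neq0$ exhibits $\sigma_\omega(a,\cdot)$ as a pointwise limit of continuous functions, hence Borel, and automatic continuity finishes the argument.
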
  
\begin{proof}
Dropping the superscript $R_j$ again, all elements of $G$ commute, so that $\sigma_\omega(a,b)=V_aV_bV_a^*V_b^*$ by the above and the first two properties are immediate. Moreover,
\begin{align*}
    \sigma_\omega(a,bc) &= V_aV_{bc}V_a^*V_{bc}^* = V_aV_bV_cV_a^*V_c^*V_b^*
    = \sigma_\omega(a,c)\, V_aV_bV_a^*V_b^* \\
    &= \sigma_\omega(a,c)\sigma_\omega(a,b),
\end{align*}
where we used $V_cV_a^*V_c^* = (V_cV_aV_c^*)^* = \overline{\sigma_\omega(c,a)}V_a^* = \sigma_\omega(a,c)V_a^*$. It follows that $\sigma_\omega(a,\cdot)$ is a character. It is measurable because $\mu_\omega$ is, and a measurable character of a compact group is continuous by Weil's theorem, so that $\ker\sigma_\omega(a,\cdot)$ is closed. It contains $a$ because $\sigma_\omega(a,a)=1$. If $G=\overline{\langle a\rangle}$, then $\sigma_\omega(a,\cdot)\equiv 1$ on $\langle a\rangle$ by multiplicativity, hence on $G$ by continuity, and conjugate symmetry yields $\sigma_\omega\upharpoonright_{G\times G}\equiv 1$.
\end{proof}

We turn again to the example of the cluster state. Recall from~\eqref{eq: cluster state string parameter} that $W^{R_j}_{(0,1)}=Z^{(j)}_\mathrm{o}$, $W^{R_j}_{(1,0)}=Z^{{(j-1)}}_\mathrm{e}$ and $W^{R_j}_{(1,1)}=Z^{(j-1)}_\mathrm{e} Z^{(j)}_\mathrm{o}$. Thus, from~\eqref{eq: index as function of W^R},
\begin{equation*}
    \mu^{(j)}((0,1),(1,0))=\omega_{\mathrm{cluster}}(Z^{(j-1)}_\mathrm{e} (X^{(j)}_\mathrm{o} Z^{(j)}_\mathrm{o}X^{(j)}_\mathrm{o})  Z^{(j)}_\mathrm{o} Z^{(j-1)}_\mathrm{e})=-1
\end{equation*}
because $X^{(j)}_\mathrm{o}$ and  $Z^{(j)}_\mathrm{o}$ anti-commute. On the other hand, 
\begin{equation*}
    \mu^{(j)}((1,0),(0,1))=\omega_{\mathrm{cluster}}(Z^{(j)}_\mathrm{o}Z^{(j-1)}_\mathrm{e}Z^{(j)}_\mathrm{o} Z^{(j-1)}_\mathrm{e})=1
\end{equation*}
since $\beta_{(0,1)}^{R_j}$ acts trivially on the site $(j-1)$. Hence, 
\begin{equation}\label{eq: index for cluster state}
    \sigma_{\mathrm{cluster}}((0,1),(1,0))=-1,
\end{equation}
and so $\omega_{\mathrm{cluster}}$ is non-degenerate indeed.

\begin{lemma}\label{lem:translating Vs}
Let $\elementofbigGroup \in \bigGroup$. For any $A \in \mathcal{A}_{L_i}$,
\begin{equation}\label{eq: V and op commutating}
    [\pi(A), V_{\elementofbigGroup}^{R_i}] = 0. 
\end{equation}
There is $\gamma(\elementofbigGroup,i,j)\in[0,2\pi)$ such that
\begin{equation}\label{eq: from V^R_i to V^R_j}
    V^{R_i}_\elementofbigGroup = e^{i\gamma(\elementofbigGroup,i,j)} \pi(U^{[i,j)}_\elementofbigGroup) V^{R_j}_\elementofbigGroup.
\end{equation}
Furthermore,
\begin{equation}\label{eq: V^L and V^R commutation}
    [V^{L_i}_\elementofbigGroup, V^{R_i}_{\elementofbigGroup'}]=0
\end{equation}
for all $\elementofbigGroup, \elementofbigGroup'\in\bigGroup$. Finally, if $\elementofbigGroup\elementofbigGroup' = \elementofbigGroup'\elementofbigGroup$, then
\begin{equation}\label{eq: V and U commutation}
    [\pi(U_{\elementofbigGroup'}^{(k)}), V_{\elementofbigGroup}^{R_i}] = 0
\end{equation}
for all $k \in \mathbb{Z}$.
\end{lemma}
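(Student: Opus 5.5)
The whole argument rests on the intertwining relation \eqref{eq: V^R_g properties} together with irreducibility of $\pi$: two unitaries that implement the same automorphism in an irreducible representation differ by a phase. We prove the four claims in order.

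\emph{Claim \eqref{eq: V and op commutating}.} For $A\in\mathcal{A}_{L_i}$ we have $\beta^{R_i}_{\elementofbigGroup}(A)=A$. Then \eqref{eq: V^R_g properties} gives $V^{R_i}_{\elementofbigGroup}\pi(A)V^{R_i*}_{\elementofbigGroup}=\pi(A)$, which is the commutation.

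\emph{Claim \eqref{eq: from V^R_i to V^R_j}.} Take $i<j$; the case $i>j$ is symmetric, with the inverse string. On $\mathcal{A}$ one has $\beta^{R_i}_\elementofbigGroup=\mathrm{Ad}[U^{[i,j)}_\elementofbigGroup]\circ\beta^{R_j}_\elementofbigGroup$. This holds on product observables by the on-site tensor structure, and then on all of $\mathcal{A}$ by density. Applying $\pi$ yields
\[
\mathrm{Ad}[V^{R_i}_\elementofbigGroup]\circ\pi=\mathrm{Ad}[\pi(U^{[i,j)}_\elementofbigGroup)V^{R_j}_\elementofbigGroup]\circ\pi .
\]
Hence $\bigl(\pi(U^{[i,j)}_\elementofbigGroup)V^{R_j}_\elementofbigGroup\bigr)^*V^{R_i}_\elementofbigGroup$ commutes with $\pi(\mathcal{A})$. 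Since $\omega$ is pure, $\pi$ is irreducible, so this unitary is a scalar $e^{i\gamma}$ by Schur's lemma.

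\emph{Claim \eqref{eq: V^L and V^R commutation}.} We first note that $\beta^{L_i}_\elementofbigGroup$ and $\beta^{R_i}_{\elementofbigGroup'}$ commute as automorphisms, since they act on complementary tensor factors. Consequently $V^{L_i}_\elementofbigGroup V^{R_i}_{\elementofbigGroup'}$ and $V^{R_i}_{\elementofbigGroup'}V^{L_i}_\elementofbigGroup$ implement the same automorphism, so by irreducibility
\[
V^{L_i}_\elementofbigGroup V^{R_i}_{\elementofbigGroup'}=c\,V^{R_i}_{\elementofbigGroup'}V^{L_i}_\elementofbigGroup,\qquad c\in U(1).
\]
Showing $c=1$ is the main obstacle, since a phase could a priori survive. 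Instead of computing $c$, we exploit the freedom in choosing $V^{L_i}$: it is only defined up to a phase and by the GNS relation with $W^{L_i}$. We build it by cutting. By Proposition~\ref{prop: existence of W^R_g} and its left version, $\beta^{L_i}_\elementofbigGroup$ is implemented on $\omega$ by an almost local $W^{L_i}_\elementofbigGroup$. For a far left cut $k\ll i$, $\beta^{L_i}_\elementofbigGroup=\mathrm{Ad}[U^{[k,i)}_\elementofbigGroup]\circ\beta^{L_k}_\elementofbigGroup$. By \eqref{eq: V and op commutating} applied to the left version, the implementer $V^{L_k}_\elementofbigGroup$ commutes with $\pi(\mathcal{A}_{R_k})$.

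We then aim to show that $V^{L_k}_\elementofbigGroup$ commutes with $V^{R_i}_{\elementofbigGroup'}$ up to an error that vanishes as $k\to-\infty$. The key input is \eqref{eq: from V^R_i to V^R_j} read leftwards: $V^{R_i}_{\elementofbigGroup'}$ is, up to phase, $\pi(U^{[k,i)}_{\elementofbigGroup'})^*V^{R_k}_{\elementofbigGroup'}$. Alternatively, one can use that $c$ is a continuous character-like quantity that is independent of the cut, and evaluate it using \eqref{eq: V^R_g properties2} on $\Omega$ with the almost local $W$'s, whose supports can be separated so that the commutator of the local parts is trivial. Concretely, we compute
\[
c=\langle\Omega,V^{L_i}V^{R_i}V^{L_i*}V^{R_i*}\Omega\rangle
\]
and replace each $V^*\Omega$ by $\pi(W^*)\Omega$ via \eqref{eq: V^R_g properties2} and \eqref{eq: definition of V^R_g}. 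Because $\beta^{L}$ acts trivially on $R$ and vice versa, the expression reduces to $\omega$ of a product of $W$'s with the symmetry actions stripped off. That product should equal $1$, which pins down $c=1$.

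\emph{Claim \eqref{eq: V and U commutation}.} When $\elementofbigGroup\elementofbigGroup'=\elementofbigGroup'\elementofbigGroup$, the automorphisms $\beta^{(k)}_{\elementofbigGroup'}$ and $\beta^{R_i}_\elementofbigGroup$ commute. Therefore
\[
\mathrm{Ad}[V^{R_i}_\elementofbigGroup](\pi(U^{(k)}_{\elementofbigGroup'}))=\pi\bigl(\beta^{R_i}_\elementofbigGroup(U^{(k)}_{\elementofbigGroup'})\bigr)=\pi\bigl(U^{(k)}_\elementofbigGroup U^{(k)}_{\elementofbigGroup'}U^{(k)*}_\elementofbigGroup\bigr)
\]
for $k\ge i$, and the result is trivial for $k<i$. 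For $k\ge i$ we use that $U_\elementofbigGroup U_{\elementofbigGroup'}=U_{\elementofbigGroup\elementofbigGroup'}=U_{\elementofbigGroup'\elementofbigGroup}=U_{\elementofbigGroup'}U_\elementofbigGroup$, because the on-site representation is linear. This gives the commutation exactly, with no phase.
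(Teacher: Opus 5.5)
Your arguments for \eqref{eq: V and op commutating}, \eqref{eq: from V^R_i to V^R_j} and \eqref{eq: V and U commutation} are correct. For the first and the last you apply \eqref{eq: V^R_g properties} directly, which is shorter than the paper's detour through cyclicity of $\pi(W^*)\Omega$ and through the translation identity.

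The gap is in \eqref{eq: V^L and V^R commutation}. You correctly reduce it to showing that the scalar $c=V^{L_i}_{\elementofbigGroup}V^{R_i}_{\elementofbigGroup'}V^{L_i*}_{\elementofbigGroup}V^{R_i*}_{\elementofbigGroup'}$ equals $1$, but none of your three routes proves this.
Rephasing $V^{L_i}$ cannot help, since $c$ is invariant under $V^{L_i}\to e^{\iu\theta}V^{L_i}$.
Knowing that $V^{L_k}$ commutes with $\pi(\mathcal{A}_{R_k})$ says nothing about $V^{R_i}$, which implements a non-inner automorphism and does not belong to $\pi(\mathcal{A})$. Rewriting $V^{R_i}$ through $V^{R_k}$ merely moves the same commutator to the cut $k$.
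Your concrete computation at a single cut also fails. $W^{L_i}$ and $W^{R_i}$ are both localized around the same cut and are not supported in $L_i$, resp.\ $R_i$; in the cluster example $W^{R_j}_{(1,0)}=Z^{(j-1)}_{\mathrm{e}}$ sits entirely to the left of the cut $j$. So $\beta^{L_i}$ does not act trivially on $W^{R_i}$, nothing is stripped off, and the assertion that the resulting product of $W$'s ``should equal $1$'' is precisely what has to be shown.
Irreducibility alone only yields $c\in U(1)$.

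The missing input is the product structure across the cut provided by the factorization property. Since $\alpha=(\alpha^{L_i}\otimes\alpha^{R_i})\circ\mathrm{Ad}[U^i]$, the state $\omega$ has a GNS representation on $\mathcal{H}_{L_i}\otimes\mathcal{H}_{R_i}$ with $\pi=\pi^{L_i}\otimes\pi^{R_i}$ and both factors irreducible. Your first claim then places $V^{R_i}_{\elementofbigGroup'}$ in $(\pi^{L_i}(\mathcal{A}_{L_i})\otimes\mathbb{I})'=\mathbb{I}\otimes\mathcal{B}(\mathcal{H}_{R_i})$. Its left analogue places $V^{L_i}_{\elementofbigGroup}$ in $\mathcal{B}(\mathcal{H}_{L_i})\otimes\mathbb{I}$, so the two commute exactly; this is the paper's argument.

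Alternatively, your far-cut idea can be completed:
\begin{enumerate}
\item The left version of \eqref{eq: from V^R_i to V^R_j}, together with \eqref{eq: V and op commutating}, shows that $c$ is also the commutator phase of $V^{L_k}_{\elementofbigGroup}$ and $V^{R_i}_{\elementofbigGroup'}$ for every $k<i$.
\item Write $c=\langle V^{R_i*}V^{L_k*}\Omega, V^{L_k*}V^{R_i*}\Omega\rangle$, and use $V^{L_k*}\pi(B)\Omega=\pi(\beta^{L_k}_{\elementofbigGroup^{-1}}(B)W^{L_k*})\Omega$ and its right analogue.
\item The uniform $f_W$-locality of $W^{L_k}$ near $k$ and of $W^{R_i}$ near $i$ then gives $|c-1|\leq C f_W((i-k)/3)$. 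This tends to zero as $k\to-\infty$, so $c=1$.
\end{enumerate}
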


\begin{proof}
Let $A \in \mathcal{A}_{L_i}$ and $B \in \mathcal{A}$. Then by~\eqref{eq: definition of V^R_g},
\begin{align*}
    \pi(A)\pi(B)\pi(W_{\elementofbigGroup}^{R_j*})\Omega
    &=V^{R_i*}_\elementofbigGroup(\pi\circ\beta^{R_i}_\elementofbigGroup)(AB)\Omega \\
    &=V^{R_i*}_\elementofbigGroup\pi(A)(\pi\circ\beta^{R_i}_\elementofbigGroup)(B)\Omega \\
    &=V^{R_i*}_\elementofbigGroup\pi(A)V^{R_i}_\elementofbigGroup\pi(B)\pi(W_{\elementofbigGroup}^{R_j*})\Omega.
\end{align*}
The vector $\Omega$ being cyclic, so is $\pi(W_{\elementofbigGroup}^{R_j*})\Omega$ and we conclude that $\pi(A) = V^{R_i*}_\elementofbigGroup\pi(A)V^{R_i}_\elementofbigGroup$, namely~\eqref{eq: V and op commutating}.

As $\beta^{R_i}_\elementofbigGroup = \text{Ad}[U^{[i,j)}_\elementofbigGroup]\circ \beta^{R_j}_\elementofbigGroup$ for $i<j$, we have that
\begin{align*}
    \text{Ad}[V^{R_i}_\elementofbigGroup]\circ\pi
    &= \pi\circ \beta^{R_i}_\elementofbigGroup \\
    &=\pi\circ\text{Ad}[U^{[i,j)}_\elementofbigGroup]\circ \beta^{R_j}_\elementofbigGroup =\text{Ad}[\pi(U^{[i,j)}_\elementofbigGroup) V^{R_j}_\elementofbigGroup]\circ\pi.
\end{align*}
With this, \eqref{eq: from V^R_i to V^R_j} follows by the irreducibility of the GNS representation.

Since $\alpha$ satisfies the factorization property, the state $\omega$ has a GNS representation of the form
\begin{equation*}
    \begin{split}
    &(\mathcal{H}_{L_i}\otimes\mathcal{H}_{R_i}, \pi^{L_i}\otimes\pi^{R_i},\Omega)\\ 
    &= \left(\mathcal{H}_{L_i}\otimes\mathcal{H}_{R_i}, (\pi^{L_i}_{0} \circ\alpha^{L_i})\otimes(\pi^{R_i}_{0}\circ\alpha^{R_i}) , \pi^{L_i}\otimes\pi^{R_i}(U^{i*})\Omega^{L_i}_{0}\otimes\Omega^{R_i}_{0}\right) 
    \end{split}
\end{equation*}
with $(\mathcal{H}_{\Gamma_i}, \pi^{\Gamma_i}_{0}, \Omega^{\Gamma_i}_{0})$ the GNS representation of the product state $\omega_0^{\Gamma_i}$ restricted to the half chain $\Gamma_i$, and $U^i$ is the factorizing unitary of~\eqref{eq: split-automorphism}. In particular, $\pi^{\Gamma_i}$ is an irreducible representation of $\mathcal{A}_{\Gamma_i}$ in $\mathcal{H}_{\Gamma_i}$.
From Proposition~\ref{prop: existence of VRg + properties}, we have
\begin{equation*}
\pi_{L_i}\otimes(\pi_{R_i}\circ\beta^{R_i}_\elementofbigGroup)= \text{Ad}[V^{R_i}_\elementofbigGroup]\circ\pi_{L_i}\otimes\pi_{R_i}.
\end{equation*}
Moreover, the invariance of $\omega_0^{R_i}$ yields a unitary $v^{R_i}_\elementofbigGroup\in \mathcal{B}(\mathcal{H}_{R_i})$ such that $\pi^{R_i}\circ\beta^{R_i}_\elementofbigGroup=\text{Ad}[v^{R_i}_\elementofbigGroup]\circ\pi_{R_i}$. We conclude that 
\begin{equation*}
    V^{R_i}_\elementofbigGroup
    =r(\elementofbigGroup,i)\,\mathbb{I}_{\mathcal{H}_{L_i}}\otimes v^{R_i}_\elementofbigGroup,
\end{equation*}
and similarly, 
\begin{equation*}
V^{L_i}_{\elementofbigGroup}=l(\elementofbigGroup,i)\, v^{L_i}_{\elementofbigGroup}\otimes\mathbb{I}_{\mathcal{H}_{R_i}}
\end{equation*}
for $r(\elementofbigGroup,i),l(\elementofbigGroup,i)\in\mathbb{C}$ of modulus $1$. Since these operators act on separate tensor factors, they commute. Finally, we note that this commutation relation holds in all GNS representations since they are all unitarily equivalent, yielding~\eqref{eq: V^L and V^R commutation}. 

Finally, let $i,k\in\mathbb{Z}$. Then \eqref{eq: V and U commutation} follows from~\eqref{eq: V and op commutating} if $k\leq i$. If $i<k$, then by \eqref{eq: from V^R_i to V^R_j},
\begin{align*}
    \pi(U_{\elementofbigGroup'}^{(k)}) V_{\elementofbigGroup}^{R_i}
    &= \pi(U_{\elementofbigGroup'}^{(k)})e^{i\gamma(\elementofbigGroup,i,k+1)} \pi(U^{[i,k]}_\elementofbigGroup) V^{R_{k+1}}_\elementofbigGroup \\
    &= e^{i\gamma(\elementofbigGroup,i,k+1)} \pi(U^{[i,k)}_\elementofbigGroup) \pi(U^{(k)}_{\elementofbigGroup'\elementofbigGroup(\elementofbigGroup')^{-1}} )\pi(U^{(k)}_{\elementofbigGroup'})V^{R_{k+1}}_\elementofbigGroup,
\end{align*}
since $U^{(k)}_\elementofbigGroup$ is a linear representation. The last two terms commute as above, and the proof is concluded by noting that $\elementofbigGroup'\elementofbigGroup(\elementofbigGroup')^{-1} = \elementofbigGroup$ by assumption, and~\eqref{eq: from V^R_i to V^R_j} again. 
\end{proof}

As a final step, we demonstrate the complementarity of left and right half-chain representations.
\begin{lemma}
\label{lemma: invariance under V^L V^R}
For all $i\in \mathbb{Z}$ and $\elementofbigGroup\in\bigGroup$, there is $c^i_\elementofbigGroup\in U(1)$ such that
    \begin{equation}
V_{\elementofbigGroup}^{L_i} V_{\elementofbigGroup}^{R_i} \Omega = c^i_\elementofbigGroup \Omega \label{eq: invariance under V^L V^R}
\end{equation}
\end{lemma}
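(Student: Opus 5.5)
The plan is to show that $V^{L_i}_{\elementofbigGroup} V^{R_i}_{\elementofbigGroup}$ implements the full symmetry $\beta_{\elementofbigGroup}$ in the GNS representation. Since $\omega$ is $\bigGroup$-invariant, this unitary must then fix the GNS vector up to a phase.

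\emph{Step 1.} Because $\mathcal{A}=\mathcal{A}_{L_i}\otimes\mathcal{A}_{R_i}$ and the on-site action factorizes, we have $\beta_{\elementofbigGroup}=\beta^{L_i}_{\elementofbigGroup}\circ\beta^{R_i}_{\elementofbigGroup}$, and the two factors commute. Applying~\eqref{eq: V^R_g properties} and its left analogue therefore gives
\begin{equation*}
\pi\circ\beta_{\elementofbigGroup}=\mathrm{Ad}[V^{L_i}_{\elementofbigGroup}]\circ\pi\circ\beta^{R_i}_{\elementofbigGroup}=\mathrm{Ad}[V^{L_i}_{\elementofbigGroup}V^{R_i}_{\elementofbigGroup}]\circ\pi .
\end{equation*}
Set $T:=V^{L_i}_{\elementofbigGroup}V^{R_i}_{\elementofbigGroup}$. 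By~\eqref{eq: V^L and V^R commutation} the order of the two factors in $T$ is irrelevant.

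\emph{Step 2.} Invariance $\omega\circ\beta_{\elementofbigGroup}=\omega$ means that $(\mathcal{H},\pi\circ\beta_{\elementofbigGroup},\Omega)$ is also a GNS triple for $\omega$. Uniqueness of the GNS representation yields a unitary $S$ with $S\Omega=\Omega$ and $\pi\circ\beta_{\elementofbigGroup}=\mathrm{Ad}[S]\circ\pi$; explicitly, $S\pi(A)\Omega=\pi(\beta_{\elementofbigGroup}(A))\Omega$, which is well defined and isometric by invariance. Comparing with Step 1, $S^*T$ commutes with $\pi(\mathcal{A})$. Since $\omega$ is pure, $\pi$ is irreducible, so Schur's lemma gives $T=c\,S$ for some $c\in U(1)$. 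Hence $T\Omega=c\,\Omega$, and we set $c^i_{\elementofbigGroup}:=c$.

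\emph{Main obstacle.} The delicate point is Step 1. The equality $\pi\circ\beta_{\elementofbigGroup}=\mathrm{Ad}[V^{L_i}_{\elementofbigGroup}]\circ\pi\circ\beta^{R_i}_{\elementofbigGroup}$ requires evaluating the left analogue of~\eqref{eq: V^R_g properties} on the elements $\beta^{R_i}_{\elementofbigGroup}(A)$, and this is legitimate because that identity holds for all of $\mathcal{A}$. The remaining argument is the standard uniqueness-and-irreducibility reasoning already used in Proposition~\ref{prop: existence of VRg + properties}.
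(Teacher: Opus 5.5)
Your proof is correct and is essentially the paper's argument: both rest on $\bigGroup$-invariance of $\omega$, the fact that $T=V^{L_i}_{\elementofbigGroup}V^{R_i}_{\elementofbigGroup}$ implements $\beta_{\elementofbigGroup}$ in $\pi$, uniqueness of the GNS representation, and irreducibility of $\pi$. The only difference is packaging. The paper observes that $\Omega$ and $T\Omega$ are two vector representatives of $\omega$ and obtains a unitary commuting with $\pi(\mathcal{A})$, whereas you compare $T$ with the canonical implementer $S$ fixing $\Omega$; your Step~1 also makes explicit the identity $\pi\circ\beta_{\elementofbigGroup}=\mathrm{Ad}[T]\circ\pi$, which the paper uses without stating it.
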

Redefining $V_{\elementofbigGroup}^{L_i}\to \overline{c^i_\elementofbigGroup} V_{\elementofbigGroup}^{L_i}$ does not change the cohomology class of the projective representations, and so we may assume that $c^i_\elementofbigGroup=1$ for all $i,\elementofbigGroup$ from now on.
\begin{proof}
    The state $\omega$ is invariant under the symmetry $\beta_\elementofbigGroup$
    \begin{equation*}
        \omega
        =\omega\circ\beta_{\elementofbigGroup^{-1}}
        =\omega\circ\beta^{L_i}_{\elementofbigGroup^{-1}} \otimes\beta^{R_i}_{\elementofbigGroup^{-1}}. 
    \end{equation*}
    In the GNS representation, that translates to
    \begin{equation*}
        \langle{\Omega},{\pi(A)\Omega}\rangle=\langle {\Omega},{ \text{Ad}[V^{L_i}_{\elementofbigGroup^{-1}} V^{R_i}_{\elementofbigGroup^{-1}}]( \pi(A))\Omega}\rangle,
    \end{equation*}
    for all $A\in \mathcal{A}$. In other words, the vectors $\Omega$ and $\Omega' = V^{L_i}_\elementofbigGroup V^{R_i}_\elementofbigGroup\Omega$ are representatives of the same state $\omega$ in the representation $(\mathcal{H},\pi)$. By uniqueness of the GNS representation, there is a unitary $U$ such that $\Omega' = U\Omega$ and $U^*\pi(A) U = \pi(A)$ for all $A\in \mathcal{A}$. Here again, the claim now follows from the irreducibility of the representation.
\end{proof}

\subsection{String order for a non-degenerate SPT state}\label{sec: string order}
With this preparation, we can now prove the existence of string order. 
\begin{proof}[Proof of Proposition~\ref{prop: string order}] To match the notation of Proposition~\ref{prop: string order}, we define $(W^i_{\tilde{g}}, W^j_{\tilde{g}}) = (W^{L_i}_{\tilde{g}}, W^{R_j*}_{\tilde{g}^{-1}})$ whenever $i<j$. By Proposition~\ref{prop: existence of VRg + properties}, they satisfy
\begin{equation*}
    \pi(W^{i*}_{\tilde{g}})\Omega=V^{L_i*}_{\tilde{g}}\Omega,\qquad \pi(W^j_{\tilde{g}})\Omega=V^{R_j*}_{\tilde{g}^{-1}}\Omega
    =\overline{\mu_\omega(\tilde{g},\tilde{g}^{-1})}V^{R_j}_{\tilde{g}}\Omega.
\end{equation*}
Then, by Lemma~\ref{lemma: invariance under V^L V^R}, 
\begin{align*}
    \mu_\omega(\tilde{g},\tilde{g}^{-1}) \omega(W^{j}_{\tilde{g}})
    &
    =\langle{\Omega},{V^{R_j}_{\tilde{g}}\Omega}\rangle\\
    &=\langle{\Omega},{V^{L_j*}_{\tilde{h}}V^{R_j*}_{\tilde{h}}V^{R_j}_{\tilde{g}}V^{R_j}_{\tilde{h}}V^{L_j}_{\tilde{h}}\Omega}\rangle
\end{align*}
Now, $V^{L_j}_{\tilde{h}}$ commutes through by~\eqref{eq: V^L and V^R commutation} and cancels with its adjoint. Since $\omega$ is non-degenerate for the commuting pair $\tilde g,\tilde h$, then
\begin{equation*}
    V^{R_j}_{\tilde g} V^{R_j}_{\tilde h} = \sigma_\omega(\tilde g,\tilde h)\, V^{R_j}_{\tilde h}V^{R_j}_{\tilde g},
\end{equation*}
and so
\begin{equation*}
    \langle{\Omega},{V^{L_j*}_{\tilde{h}}V^{R_j*}_{\tilde{h}}V^{R_j}_{\tilde{g}}V^{R_j}_{\tilde{h}}V^{L_j}_{\tilde{h}}\Omega}\rangle
    = \sigma_\omega(\tilde{g},\tilde{h})\langle{\Omega},{V^{R_j}_{\tilde{g}}\Omega}\rangle
    =\sigma_\omega(\tilde{g},\tilde{h})\mu_\omega(\tilde{g},\tilde{g}^{-1})\omega(W^j_{\tilde{g}}).
\end{equation*}
Since $\sigma_\omega(\tilde{g},\tilde{h})\neq 1$, see~\eqref{eq: sigma mixed}, we conclude that $\omega(W^j_{\tilde{g}})=0$.

What is more, using Proposition~\ref{prop: existence of VRg + properties} and Lemma~\ref{lem:translating Vs} yield
\begin{align*}
        \mu_\omega(\tilde{g},\tilde{g}^{-1})\omega(W^{i}_{\tilde{g}} U^{[i,j)}_{\tilde{g}} W^{j}_{\tilde{g}})
        &=\langle{\Omega},{V^{L_i}_{\tilde{g}}\pi(U^{[i,j)}_{\tilde{g}}) V^{R_j}_{\tilde{g}} \Omega}\rangle\\
        &=e^{-i\gamma({\tilde{g}},i,j)}\langle{\Omega},{V^{L_i}_{\tilde{g}}  V^{R_i}_{\tilde{g}} \Omega}\rangle
        =e^{-i\gamma({\tilde{g}},i,j)}c^i_{\tilde{g}},
    \end{align*}
where the last step is by Lemma~\ref{lemma: invariance under V^L V^R}. This concludes the proof.
\end{proof}

\noindent It is worth pointing out here that the non-degeneracy of $\omega$ is thus responsible only for the vanishing of the endpoint expectation values.

\section{Correlations in the measured state}
\label{sec: correlations in the measured state}

In this section, we always have that $G$ is a closed Abelian subgroup of $\bigGroup$. 

\subsection{Local measurements}

Consider the projectors that define the local measurements as in Definition~\ref{def: local measurments}. As discussed there, the projectors $P_q$ are spectral projectors associated with the unitaries $U_g$. In particular, we have the following.

\begin{lemma}\label{lemma: U P on site}
Let $g \in G$ and let $P^{(j)}_q$ be as in~\eqref{eq:Projectors}. Then ${U}_g^{(j)} P^{(j)}_q = \overline{\chi_{q}(g)} P^{(j)}_q$. 
\end{lemma}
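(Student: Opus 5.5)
The plan is to use the spectral decomposition of $U^{(j)}_g$ together with the fact that the $P^{(j)}_{q'}$ form an orthogonal family of projections. Since $G$ is Abelian and $g\mapsto U^{(j)}_g$ is a finite-dimensional unitary representation, we have, as recalled just before Definition~\ref{def: local measurments},
\begin{equation*}
U^{(j)}_g=\sum_{q'}\overline{\chi_{q'}(g)}\,P^{(j)}_{q'} .
\end{equation*}
Here the $P^{(j)}_{q'}$ are mutually orthogonal spectral projections, indexed by distinct characters. Multiplying on the right by $P^{(j)}_q$ and using $P^{(j)}_{q'}P^{(j)}_q=\delta_{q'q}P^{(j)}_q$ then gives the claim immediately.

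To keep the argument self-contained, I would also verify directly that the integral in~\eqref{eq:Projectors} coincides with this spectral projector. Insert the decomposition into~\eqref{eq:Projectors} and apply Schur orthogonality of characters with respect to the normalized Haar measure, $\int_G\chi_q\overline{\chi_{q'}}\,{\rm d}\mu=\delta_{qq'}$; this is exactly the computation already displayed in the paper.

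There is an alternative route that avoids naming the spectral decomposition. It uses only left invariance of Haar measure and the fact that $\chi_q$ is a character. Write
\begin{equation*}
U^{(j)}_gP^{(j)}_q=\int_G\chi_q(k)\,U^{(j)}_{gk}\,{\rm d}\mu(k),
\end{equation*}
substitute $k\mapsto g^{-1}k$, and use $\chi_q(g^{-1}k)=\overline{\chi_q(g)}\chi_q(k)$ to obtain $\overline{\chi_q(g)}P^{(j)}_q$.

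I expect no genuine obstacle. The only point of care is bookkeeping of conventions: the eigenvalue attached to $P_q$ is $\ep{-\iu\lambda_q(g)}=\overline{\chi_q(g)}$, so the result carries the conjugate character, consistent with the sign choice in~\eqref{eq:Projectors}. I would also note that characters of a compact group are unimodular, which is what justifies $\chi_q(g^{-1})=\overline{\chi_q(g)}$ in the second argument.
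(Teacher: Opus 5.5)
Your proof is correct, and your ``alternative route'' is exactly the paper's argument. The paper writes $U_gP_q=\int_G\chi_q(g')U_gU_{g'}\,{\rm d}\mu(g')$, substitutes $g'=g^{-1}g''$, and uses left invariance of the Haar measure together with $\chi_q(g^{-1}g'')=\overline{\chi_q(g)}\chi_q(g'')$; it never uses the spectral decomposition. Your primary route, via $U_g=\sum_{q'}\overline{\chi_{q'}(g)}P_{q'}$ and $P_{q'}P_q=\delta_{q'q}P_q$, is also valid, but it is logically downstream of that computation. It needs simultaneous diagonalization with joint eigenspaces grouped by distinct characters (which you note). It then needs the spectral projectors to be identified with the integrals~\eqref{eq:Projectors}, which rests on the orthogonality relations $\int_G\chi_q\overline{\chi_{q'}}\,{\rm d}\mu=\delta_{qq'}$---themselves proved by the same Haar-invariance trick. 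In exchange, the spectral route gives you at once that the $P_q$ are mutually orthogonal projections summing to $\mathbb{I}$, i.e.\ a genuine projective measurement, which the paper uses implicitly when it treats them as measurement outcomes. The paper's route is more economical and more general. It applies verbatim to any character $\chi$ of $G$ (giving $0$ if $\chi$ does not occur in $U^{(j)}$), and it uses neither finite-dimensionality nor the fact that $P_q$ is a projection.
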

\begin{proof}
Dropping the index $j$ for simplicity, we compute
\begin{align*}
{U}_g P_q 
&= \int_G \chi_{q}(g') U_g U_{g'} \mathrm{d}\mu(g')
= \int_G \chi_{q}(g^{-1} g'') U_{g''} \mathrm{d}\mu(g^{-1} g'') \\
&= {\chi_{q}(g^{-1})} \int_G \chi_{q}(g'') U_{g''} \mathrm{d}\mu(g'')
= \overline{\chi_{q}(g)} \cdot P_q. 
\end{align*}
where we used that ${U}_g$ and $\chi_{q}(g)$ are a linear representation of $G$, and the left invariance of the Haar measure in the second line.
\end{proof}

Since the measured state $\omega_{n,\mathbf{q}}$ defined in~\eqref{eq: def of w_n} is only a local perturbation of the initial $\omega$, it is represented in the GNS space $\mathcal{H}$ of $\omega$ by the vector
\begin{equation}
    \Omega_{n,\mathbf{q}}=(\mathcal{N}_{n,\mathbf{q}})^{-\frac{1}{2}}\pi(P_{n,\mathbf{q}})\Omega.
    \label{eq: definition Omega_n}
\end{equation} 
It immediately follows from Lemma~\ref{lemma: U P on site} that $\Omega_{n,\mathbf{q}}$ is an eigenvector of the subgroup $G$ symmetry on each measured site (as it should, after measurement):
\begin{equation*}
    \pi(U^{(k)}_g) \Omega_{n,\mathbf{q}} = \overline{\chi_{q_k}(g)} \Omega_{n,\mathbf{q}}, 
\end{equation*}
for all $g\in G$ and $k\in[-n,n]$.

\subsection{Correlations in the post-measurement state}
\label{subsec: computation of correlation after measurements}
The proof of Theorem~\ref{theorem: entanglement growth} relies on a lower bound on the correlations in the state $\omega_{n,\mathbf{q}}$. In order to simplify notation, we now drop the index $\mathbf{q}$ labeling the measurement outcomes. 

\begin{lemma}\label{lemma: post meas SRE}
    The state $\omega_n$ is unitarily equivalent to $\omega$. Moreover, it is a $\tilde \bigGroup$-invariant SRE state, where $\tilde \bigGroup = Z_\bigGroup(G)$.
\end{lemma}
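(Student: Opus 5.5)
The plan is to realize $\omega_n$ as a unitary deformation of $\omega$ inside the GNS representation, then absorb that unitary into the entangler, and finally check invariance directly from the definition of the measurement projectors.

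\emph{Unitary equivalence.} By \eqref{eq: definition Omega_n}, $\omega_n$ is the vector state of the unit vector $\Omega_n=\mathcal N_n^{-1/2}\pi(P_{n})\Omega$ in the GNS representation $(\mathcal H,\pi,\Omega)$ of $\omega$. Since $\omega$ is pure, $\pi$ is irreducible. Kadison's transitivity theorem then gives a unitary $U_n\in\mathcal A$ with $\pi(U_n)\Omega=\Omega_n$, hence
\[
\omega_n=\omega\circ\mathrm{Ad}[U_n^*].
\]
I want $U_n$ to be strictly local, say $U_n\in\mathcal A_\Lambda$ with $\Lambda\supset[-n,n]$ finite; this makes the next step clean. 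I would get it as follows.
\begin{itemize}
\item Use the factorization of $\alpha$ at sites $\pm(n+m)$: these show that $\Omega$ is, up to an almost local unitary, a product across the cuts.
\item Approximate $U_n$ by an element of $\mathcal A_\Lambda$ for $\Lambda$ large.
\item Correct the residual error by a second application of transitivity, exploiting that a unitary moving a vector by a small amount can be chosen close to $\mathbb I$.
\end{itemize}
If this refinement turns out to be messy, an almost local $U_n$ would still suffice, at the price of extra error terms below.

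\emph{SRE property.} Write $\omega=\omega_0\circ\alpha_1$ with the path $\alpha_s$, and choose a continuous path of unitaries $U_n(s)\in\mathcal A_\Lambda$ with $U_n(0)=\mathbb I$ and $U_n(1)=U_n$; this is possible because the unitary group of a matrix algebra is connected. Set
\[
\alpha_{n,s}=\alpha_s\circ\mathrm{Ad}[U_n(s)^*],
\]
which is strongly continuous and satisfies $\omega_n=\omega_0\circ\alpha_{n,1}$. It remains to check the factorization property with a decay function that may depend on $n$ but not on $j$. There are two cases.
\begin{itemize}
\item If $\Lambda\subset L_j$ (and symmetrically if $\Lambda\subset R_j$), then $\mathrm{Ad}[U^*]=\mathrm{Ad}[U^*]\otimes\mathrm{id}$, and
\[
\alpha_s\circ\mathrm{Ad}[U^*]=\bigl((\alpha_s^{L_j}\circ\mathrm{Ad}[U^*])\otimes\alpha_s^{R_j}\bigr)\circ\mathrm{Ad}[U\,U^j U^*].
\]
The unitary $U U^j U^*$ is close to $\{j\}$ with error at most $f(r)+2\|[U,\Pi_{\{j\}_r}(U^j)]\|$. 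This error term vanishes once $\{j\}_r\cap\Lambda=\emptyset$, and otherwise it is bounded by $2f(\mathrm{dist}(j,\Lambda))$.
\item For the finitely many $j$ near $\Lambda$, use $\mathrm{Ad}[U^jU^*]$ directly. Its localization near $j$ fails only for $r\lesssim|\Lambda|$, so a bounded correction $f_n(r)=\max\{f(r),2\cdot\mathbb 1_{r\le C(n)}\}+2f(r)$ still lies in $\mathcal F$.
\end{itemize}
Hence each $\omega_n$ is SRE, although not uniformly in $n$, consistently with Theorem~\ref{theorem: entanglement growth}. Purity of $\omega_n$ is immediate since it is a vector state in an irreducible representation.

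\emph{Invariance under $\tilde\bigGroup=Z_\bigGroup(G)$.} Let $h\in Z_\bigGroup(G)$. Then $U_h^{(j)}$ commutes with every $U_g^{(j)}$, $g\in G$, hence with each $P^{(j)}_q$ by \eqref{eq:Projectors}. So $\beta_h(P_n)=P_n$ and
\[
\omega_n\circ\beta_h(A)=\mathcal N_n^{-1}\,\omega\bigl(\beta_h(P_nAP_n)\bigr)=\omega_n(A),
\]
using $\bigGroup$-invariance of $\omega$. The equivariance of the path is not required here, since Definition~\ref{def: short range entangled states} does not ask for it.

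The main obstacle is the first step: producing a strictly local (or at least almost local) implementing unitary $U_n$. This is what makes the factorization of $\alpha\circ\mathrm{Ad}[U_n^*]$ work with $j$-independent decay. I would try first the route via the product structure of $\Omega$ across the two cuts at $\pm(n+m)$, provided by the factorization property of $\alpha$.
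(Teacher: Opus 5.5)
Your treatment of unitary equivalence is correct; Kadison transitivity is a fine way to make the paper's ``vector state in the GNS representation'' remark explicit. Your argument for $Z_{\bigGroup}(G)$-invariance is also correct and matches the paper.

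The gap is in the SRE part, at the step you flag as the main obstacle, and it is not a technicality: a strictly local implementing unitary does not exist in general. If $U_n\in\mathcal{A}_\Lambda$, then $\omega\circ\mathrm{Ad}[U_n^*]$ coincides with $\omega$ on $\mathcal{A}_{\Lambda^c}$. But for $A\in\mathcal{A}_{\Lambda^c}$ with $\Lambda\supset[-n,n]$ you have $\omega_n(A)=\omega(P_nA)/\omega(P_n)$. This equals $\omega(A)$ only if $P_n$ and $A$ are exactly uncorrelated in $\omega$. For a general factorizable entangler (e.g.\ an LGA with exponential tails), correlations decay fast but are generically nonzero at every distance, so no finite $\Lambda$ works. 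A strictly local implementer is available only in the QCA setting of Section~\ref{sec: long range correlation for QCA}.

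Your sketch cannot get around this. The factorization of $\alpha$ at $\pm(n+m)$ only yields almost local unitaries whose tails reach into $[-n,n]$. The corrections provided by Kadison transitivity are close to $\mathbb{I}$ in norm but carry no locality information. The same holds for the first unitary $U_n$ itself, which is some element of $\mathcal{A}$ that need not be $f$-local for any $f\in\mathcal{F}$.

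Your fallback to an almost local $U_n$ is what is actually needed, but it requires an input you never use. The paper obtains it from decay of correlations, Lemma~\ref{lemma: decay of correlation}: for $A$ supported in $[n+r,\infty)$ or in $(-\infty,-n-r]$, one has $|\omega_n(A)-\omega(A)|\leq f(r)\Vert A\Vert/\omega(P_n)$. The paper then repeats the proof of Proposition~\ref{prop: existence of W^R_g}, pulling back to $\omega_0$ and applying Lemma~A.1 and Proposition~A.2 of~\cite{bachmann_many-body_2024}. This turns the asymptotic agreement into an almost local $T_n$ with $\omega_n=\omega\circ\mathrm{Ad}[T_n]$, with an $n$-dependent decay function.

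Once $T_n$ is only almost local, your second step must be redone too. Connectedness of the unitary group of a matrix algebra no longer applies. A continuous path from $\mathbb{I}$ to $T_n$ in $\mathcal{U}$ need not be $f$-local uniformly in $s$, which Definition~\ref{def: short range entangled states} demands. The paper handles this with Lemma~\ref{lemma: interpolation}, which joins strictly local approximants $S_k$ by logarithms of $S_k^*S_{k+1}$. Lemmas~\ref{lemma: inner implies split} and~\ref{lemma: composition split} then replace your case analysis, which as written covers only strictly local $U$.
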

\noindent Note that $G\subset Z_\bigGroup(G)$ since $G$ is Abelian. Hence $\tilde \bigGroup$ is a closed subgroup containing $G$. In particular $\omega_n$ is invariant under both $\beta_{\tilde g}$ and $\beta_{\tilde h}$. Let us also emphasize that $\omega_n$ is in general not $\bigGroup$-invariant. The invariance just established is all that will be used below.
\begin{proof}
    Unitary equivalence follows abstractly from the fact that
    
   \begin{equation*}
       \omega_n(A) =\omega(P_n)^{-1}\omega(P_n AP_n)
   \end{equation*}
   is a vector state in the GNS representation of $\omega$, with vector representative given by~\eqref{eq: definition Omega_n}. $G$-invariance follows from the invariance of $\omega$ and from
    \begin{equation*}
        \beta_g(P_n) = \mathrm{Ad}[U^{[-n,n]}_g](P_n) = P_n,
    \end{equation*}
by Lemma~\ref{lemma: U P on site}. For the invariance under $Z_\bigGroup(G)$, it suffices to note that $U^{[-n,n]}_{h}$ commutes with $U^{(j)}_g$ for every $j\in\mathbb{Z}$ and every $g\in G$ by definition of the centralizer, and the fact that $U_g^{(k)}$ are representations, which similarly yields $\beta_{ h}(P_n)=P_n$ for all $h\in Z_\bigGroup(G)$. 

It remains to show that $\omega_n$ is SRE. For any $A\in\mathcal{A}_{[n+r,\infty)}\cup \mathcal{A}_{(-\infty,-n-r]}$, 
\begin{equation*}
    \vert \omega_n(A) - \omega(A)\vert 
    = \frac{1}{\omega(P_n)}\vert \omega(P_n A) - \omega(P_n)\omega(A)\vert
    \leq \frac{f(r)}{\omega(P_n)}\Vert A\Vert 
\end{equation*}
by decay of correlations, Lemma~\ref{lemma: decay of correlation}, where $f\in\mathcal{F}$ depends only on $\alpha$. Repeating the arguments of the proof of Proposition~\ref{prop: existence of W^R_g},
we obtain an almost local unitary $T_n$ such that
\begin{equation*}
    \omega_n = \omega\circ\mathrm{Ad}[T_n] = \omega_0\circ\alpha_1\circ\mathrm{Ad}[T_n].
\end{equation*}
Let $\{T_{n,s}\}_{s\in[0,1]}$ be the family interpolating between $\mathbb{I}$ and $T_n$
provided by Lemma~\ref{lemma: interpolation}, and set
\begin{equation*}
    \alpha_{n,s} := \alpha_s\circ\mathrm{Ad}[T_{n,s}],\qquad s\in[0,1].
\end{equation*}
Then $\alpha_{n,0}=\alpha_0\circ\mathrm{Ad}[\mathbb{I}]=\mathrm{id}$ and
$\omega_0\circ\alpha_{n,1}=\omega_n$, while
$s\mapsto\alpha_{n,s}$ is strongly continuous because $s\mapsto\alpha_s$ is and
$s\mapsto T_{n,s}$ is norm-continuous. Finally, each $\mathrm{Ad}[T_{n,s}]$ satisfies the
factorization property for a decay function depending only on $\hat f_n$, see
Lemma~\ref{lemma: inner implies split} and Lemma~\ref{lemma: interpolation}, and each $\alpha_s$
does so for $f$. Hence, for any $n\in\mathbb{N}$, Lemma~\ref{lemma: composition split} implies that there is $\bar f_n\in\mathcal{F}$ such that every $\{\alpha_{n,s}:s\in[0,1]\}$ satisfies the factorization property for $\bar f_n$. Therefore $\omega_n$ is
short-range entangled in the sense of Definition~\ref{def: short range entangled states}.
\end{proof}

\begin{lemma}\label{lemma: inner implies split}
    Let $T\in\mathcal{U}$ be $f$-localized. Then the automorphism $\mathrm{Ad}[T]$
    satisfies the factorization property, Definition~\ref{def:split automorphism},
    with a decay function depending only on $f$.
\end{lemma}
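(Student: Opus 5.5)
The idea is to split $T$ at each cut $j$ by hand. Let $T$ be $f$-close to a finite interval $I$. We distinguish cuts $j$ that are close to $I$ from cuts that are far from it.

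\textbf{Cuts near $I$.} If $\mathrm{dist}(j,I)\le r_0$ for a fixed threshold $r_0$ (say $r_0=|I|$), we take $\alpha^{L_j}=\alpha^{R_j}=\mathrm{id}$ and $U^j=T$. Since $\{j\}_s\supset I_{s-r_0-|I|}$, we get $\|T-\Pi_{\{j\}_s}(T)\|\le 2f(s-2|I|)$, with the convention $f(t)=f(0)$ for $t<0$. This is a legitimate decay function in $s$. It depends on $|I|$, which we regard as a fixed parameter of the localization; in the applications $T$ is localized around a finite interval.

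\textbf{Cuts to the right of $I$.} Suppose $j$ lies to the right of $I$ at distance $r>r_0$; the left case is symmetric. Since $I\subset L_j$ and $L_j\cap I_{r-1}$ is all of $I_{r-1}$ on the left side, we have $X:=\Pi_{L_j}(T)\in\mathcal{A}_{L_j}$ with $\|X-T\|\le 2f(r-1)$. For $r$ large enough that $2f(r-1)<1/2$, $X$ is invertible and we define the unitary $T_L:=X(X^*X)^{-1/2}\in\mathcal{A}_{L_j}$. Standard polar-decomposition estimates give $\|T_L-T\|\le c\,f(r-1)$. We then set
\begin{equation*}
\alpha^{L_j}=\mathrm{Ad}[T_L],\qquad \alpha^{R_j}=\mathrm{id},\qquad U^j=T_L^*T .
\end{equation*}
This gives exactly $\mathrm{Ad}[T]=(\alpha^{L_j}\otimes\alpha^{R_j})\circ\mathrm{Ad}[U^j]$, and $\|U^j-\mathbb{I}\|\le c f(r-1)$. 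The finitely many intermediate values of $r$ for which $2f(r-1)\ge 1/2$ are absorbed into the "near" case by enlarging $r_0$; this depends only on $f$.

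\textbf{Uniform locality of $U^j$ around $\{j\}$.} We need a single $f'$ with $\|U^j-\Pi_{\{j\}_s}(U^j)\|\le f'(s)$ for all $s$ and all $j$.
\begin{itemize}
\item For $s\le 2(r+|I|)$, we use $\|U^j-\mathbb{I}\|\le cf(r-1)$ together with the contractivity of $\Pi$. This gives the bound $2cf(r-1)\le 2cf(s/2-|I|-1)$.
\item For $s>2(r+|I|)$, the box $\{j\}_s$ contains $I_{s/2}$. We then show that both $T$ and $T_L$ are $\tilde f$-close to $I$, and the product estimate gives the bound $C\tilde f(s/2)$.
\end{itemize}
Taking $f'(s)=C\max\{f,\tilde f\}(s/2-|I|-1)$ then works for all $j$ simultaneously, since $r$ has dropped out.

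\textbf{Main obstacle.} The step I expect to be hardest is the approximate locality of $T_L$ around $I$. For $t\ge 0$, the operator $\Pi_{I_t\cap L_j}(T)$ lies in $\mathcal{A}_{I_t}$ and is $2f(t)$-close to $X$, using $\Pi_{I_t}\Pi_{L_j}=\Pi_{I_t\cap L_j}$. I would write $(X^*X)^{-1/2}$ as the norm-convergent binomial series in $X^*X-\mathbb{I}$, whose norm is at most $1/2$. Replacing $X$ by this local approximant in the whole functional calculus changes the result by at most $C f(t)$, because the map $X\mapsto X(X^*X)^{-1/2}$ is Lipschitz on $\{\|X-\text{unitary}\|<1/2\}$. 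The result is an element of $\mathcal{A}_{I_t}$ within $Cf(t)$ of $T_L$. Since $\Pi_{I_t}$ is a norm-one projection onto $\mathcal{A}_{I_t}$, this yields $\|T_L-\Pi_{I_t}(T_L)\|\le 2Cf(t)$, so $\tilde f=2Cf$.
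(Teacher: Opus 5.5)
Your proof is correct and has the same skeleton as the paper's. Near cuts get $U^j=T$ with trivial $\alpha^{L_j},\alpha^{R_j}$. Far cuts absorb a polar-decomposition unitary approximant of $T$ into $\alpha^{L_j}=\mathrm{Ad}[\cdot]$, so that $U^j$ is $O(f)$-close to $\mathbb{I}$. The locality of $U^j$ is then checked separately below the scale $\mathrm{dist}(j,I)$ (via $\|U^j-\mathbb{I}\|$) and above it (via localization around $I$).

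The real difference is the choice of approximant. The paper, normalizing to $I=\{0\}$, takes the polar unitary $T_{j-1}$ of the box truncation $\Pi_{[-(j-1),j-1]}(T)$, which is \emph{strictly} local. Its large-scale bound is then a one-line comparison with another strictly local unitary built from $T_{s-j}$, and no continuity of the polar decomposition is ever needed. You truncate to the half-chain, $X=\Pi_{L_j}(T)$, and that choice is exactly what creates your ``main obstacle''. It requires a conditional expectation onto the infinite algebra $\mathcal{A}_{L_j}$ with $\Pi_{I_t}\Pi_{L_j}=\Pi_{I_t\cap L_j}$; the tracial one works, though the paper only introduces $\Pi_X$ for finite $X$. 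It also requires a Lipschitz estimate for $X\mapsto X(X^*X)^{-1/2}$.

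That estimate holds, but two details need repair. First, $\|X-T\|=\epsilon<1/2$ only gives $\|X^*X-\mathbb{I}\|\le 2\epsilon+\epsilon^2<5/4$, so the binomial series needs a smaller threshold (e.g.\ $2f(r-1)<1/5$). Second, for small $t$ the truncation $\Pi_{I_t\cap L_j}(T)$ need not be close to any unitary, so there you must fall back on the trivial bound $\|T_L-\Pi_{I_t}(T_L)\|\le 2$. All of this disappears if you take $X=\Pi_{I_{r-1}}(T)\in\mathcal{A}_{I_{r-1}}\subset\mathcal{A}_{L_j}$ instead. Then $T_L$ is strictly local, and for $s>2(r+|I|)$ bimodularity gives $\|U^j-\Pi_{\{j\}_s}(U^j)\|=\|T-\Pi_{\{j\}_s}(T)\|\le 2f(s/2)$ at once.

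Two points are in your favour. Your order $U^j=T_L^*T$ is the one compatible with $(\alpha^{L_j}\otimes\alpha^{R_j})\circ\mathrm{Ad}[U^j]$; the paper writes $TT_{j-1}^*$, with the factors reversed, which is harmless for the estimates. Your explicit dependence on $|I|$ is also unavoidable: the swap of the sites $j-L$ and $j+L$ is strictly local, but it admits no factorization at $j$ with $U^j$ localized at a scale uniform in $L$. The paper silently absorbs this into its w.l.o.g.\ reduction to a singleton.
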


\begin{proof}
W.l.o.g. we assume that $T$ is $f$-close to $\{0\}$ for some $f\in\mathcal{F}$, namely
\begin{equation*}
    \|T-\Pi_{[-r,r]}(T)\|\leq f(r), \quad \forall r>0.
\end{equation*}
We first show that $T$ can be approximated by strictly local unitaries. Indeed, the polar decomposition of $\Pi_{[-r,r]}(T)\in \mathcal{A}_{[-r,r]}$ reads 
\begin{equation*}
    \Pi_{[-r,r]}(T)=T_rM_r,
\end{equation*}
with $T_r$ a unitary and $M_r$ a positive semi-definite self-adjoint operator. Let $\{m_{r,k}\}_{k}$ be the eigenvalues of $M_r$. Note that
\begin{equation*}
    \|M_r^2-\mathbb{I}_{\mathcal{A}_{[-r,r]}}\|=\|\Pi_{[-r,r]}(T)\Pi_{[-r,r]}(T)^*-TT^*\|\leq 2f(r),
\end{equation*}
therefore, $\max_k|m_{r,k}^2-1|\leq 2f(r)$. Let $r_0$ be such that $2f(r_0)<1$. Then the same inequality holds for all $r\geq r_0$, which implies that $m_{r,k}>0$ for all $k$. Furthermore, 
\begin{equation*}
    \|\Pi_{[-r,r]}(T) -T_r\|=\|M_r-\mathbb{I}_{\mathcal{A}_{[-r,r]}}\|=\max_k|m_{r,k}-1|\leq 2f(r),
\end{equation*}
and so
\begin{equation*}
    \|T_r-T\|\leq 3 f(r). 
\end{equation*}

Now, the factorization of the automorphism $\mathrm{Ad}[T]=\alpha^{L_j}\otimes\alpha^{R_j}\circ \mathrm{Ad}[U^j]$ will depend on the location of the cut $j\in \mathbb{Z}$. Let $r_0$ be as above.

\noindent Case (i): $j\in [-r_0, r_0]$. We pick $U^j\coloneqq T$ and $\alpha^{L_j}=\alpha^{R_j}\coloneqq\mathrm{id}$. If $r>r_0$, then $[-r+|j|,r-|j|]\subset [j-r, j+r]$, leading to 
    \begin{align*}
        \|T-\Pi_{[j-r,j+r]}(T)\|%&= \|T - \Pi_{[-r+|j|,r-|j|]}(T) + \Pi_{[-r+|j|,r-|j|]}(T) -\Pi_{[j-r,j+r]}(T)\|\\
        &\leq \|T - \Pi_{[-r+|j|,r-|j|]}(T)\|\\ &\quad+\|\Pi_{[j-r,j+r]}( \Pi_{[-r+|j|,r-|j|]}(T) -T)\| \\
        %&\leq 2 \|T - \Pi_{[-r+|j|,r-|j|]}(T)\|\\
        &\leq 2f(r-|j|)\leq 2f(r-r_0). 
    \end{align*}

\noindent Case (ii): $j>r_0$. Since $T_{j-1}\in \mathcal{A}_{[-j+1, j-1]}$, define
    \begin{equation*}
        \alpha^{L_j}\coloneqq\mathrm{Ad}[T_{j-1}], \quad\alpha^{R_j}\coloneqq\mathrm{id}, \quad  U^j\coloneqq TT_{j-1}^* .%\prod_{r=j}^{\infty}T_r
    \end{equation*}
    Note that
    \begin{equation*}
        \|U^j-\mathbb{I}\|=\|T-T_{j-1}\|\leq 3f(j-1).
    \end{equation*}
    and so,
    \begin{equation*}
        \|U^j-\Pi_{[j-r,j+r]}(U^j)\|\leq 6f(j-1)\leq 6f(r/2), \quad \forall r\leq 2j. 
    \end{equation*}
    Let $r>2j$, then $[j-r, -j+ r]\subset [j-r, j+r]$: 
    \begin{align*}
        \|U^j-T_{r-j}T^*_{j-1}\|\leq 3f(r-j)
    \end{align*}
    and $T_{r-j}T^*_{j-1}\in\mathcal{A}_{[j-r, j+r]}$, therefore, 
    \begin{equation*}
        \|\Pi_{[j-r, j+r]}(U^j)-T_{r-j}T^*_{j-1}\|\leq 3f(r-j).
    \end{equation*}
    We conclude that
    \begin{equation*}
        \|U^j- \Pi_{[j-r, j+r]}(U^j)\| \leq 6f(r-j)\leq 6f(r/2), \quad \forall r>2j. 
    \end{equation*}

\noindent Case (iii): $j<-r_0$. The argument is as above with the choices $\alpha^{L_j}\coloneqq\mathrm{id}$, $\alpha^{R_j}\coloneqq\mathrm{Ad}[T_{|j|}]$ and $U^j= TT_{|j|}^*$.
\end{proof}

\begin{lemma}\label{lemma: interpolation}
    Let $T\in\mathcal{U}$ be $f$-close to $\{0\}$. Then there is a norm-continuous
    family $\{T_s\}_{s\in[0,1]}\subset\mathcal{U}$ with
    \begin{equation*}
        T_0 = \mathbb{I},\qquad T_1 = T,
    \end{equation*}
    such that every $T_s$ is $\hat f$-close to $\{0\}$ for a single $\hat f\in\mathcal{F}$
    depending only on $f$.
\end{lemma}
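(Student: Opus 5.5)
The plan is to interpolate through strictly local unitaries, exploiting the approximations $T_r\in\mathcal{A}_{[-r,r]}$ built in the proof of Lemma~\ref{lemma: inner implies split}. There, $T_r$ is the unitary part of the polar decomposition of $\Pi_{[-r,r]}(T)$, and it satisfies $\|T_r-T\|\leq 3f(r)$ for $r\geq r_0$, where $r_0$ depends only on $f$ through $2f(r_0)<1$. In particular $\|T_{r+1}-T_r\|\leq 6f(r)$, and for $r\geq r_1$ (again depending only on $f$) this is less than $2$, indeed small. The path will be assembled in three pieces: first connect $\mathbb{I}$ to $T_{r_1}$ inside the finite-dimensional unitary group $\mathcal{U}_{[-r_1,r_1]}$, then connect $T_{r_1}$ to $T_{r_1+1}$, then $T_{r_1+1}$ to $T_{r_1+2}$, and so on. The infinite concatenation is reparametrized so that the $k$-th step occupies the interval $[1-2^{-k},1-2^{-k-1}]$, and we set $T_1=T$.

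For the first piece, we write $T_{r_1}=\ep{\iu H}$ with $H=H^*\in\mathcal{A}_{[-r_1,r_1]}$ and $\|H\|\leq\pi$, and take $s\mapsto \ep{\iu sH}$. Every element of this segment lies strictly in $\mathcal{A}_{[-r_1,r_1]}$, so it is trivially $\hat f$-close to $\{0\}$ with $\hat f(r)=2\cdot\mathds{1}_{r<r_1}$ plus zero afterwards. For the step from $T_r$ to $T_{r+1}$, we set $\ep{\iu K_r}=T_{r+1}T_r^*$ using the principal logarithm. This is well-defined since $\|T_{r+1}T_r^*-\mathbb{I}\|\leq 6f(r)<2$. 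It gives $K_r\in\mathcal{A}_{[-r-1,r+1]}$ with $\|K_r\|\leq C f(r)$ for a universal constant $C$, once $6f(r)\leq 1$ say, which follows from $\|\ep{\iu K}-\mathbb{I}\|\geq (2/\pi)\|K\|$ for $\|K\|\leq\pi$. The path on that step is then $\ep{\iu tK_r}T_r$ for $t\in[0,1]$. Each point $S$ on it lies in $\mathcal{A}_{[-r-1,r+1]}$ and satisfies $\|S-T\|\leq \|S-T_r\|+\|T_r-T\|\leq (C+3)f(r)$.

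The locality estimate for a point $S$ on the step $r$ is checked by cases in the radius $\rho$. If $\rho\geq r+1$, then $\Pi_{[-\rho,\rho]}(S)=S$. If $\rho<r+1$, then since $\Pi$ is contractive and fixes $\mathcal{A}_{[-\rho,\rho]}$, we get $\|S-\Pi_{[-\rho,\rho]}(S)\|\leq 2\|S-T\|+\|T-\Pi_{[-\rho,\rho]}(T)\|\leq 2(C+3)f(r)+f(\rho)\leq (2C+7)f(\rho)$, using that $f$ is non-increasing and $\rho\leq r$. The endpoint $T$ itself is $f$-close. So $\hat f(\rho)=(2C+7)f(\rho)$ for $\rho\geq r_1$, and $\hat f=2$ for $\rho<r_1$, works uniformly, and $\hat f\in\mathcal{F}$ depends only on $f$.

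The main obstacle is norm-continuity at $s=1$. It holds because along the step $r$ every point is within $(C+3)f(r)\to0$ of $T$, and $r\to\infty$ as $s\to1$. Continuity elsewhere is clear from the construction. A minor care point is that $r_0$ and $r_1$ must be chosen in terms of $f$ only, which they are.
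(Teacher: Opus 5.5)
Your proposal is correct and follows essentially the same route as the paper: you use the polar-decomposition unitaries $T_r$, a logarithm path from $\mathbb{I}$ to $T_{r_1}$, and principal-logarithm legs between consecutive $T_r$ concatenated towards $s=1$, with the same estimate $\|S-\Pi(S)\|\leq 2\|S-T\|+\|T-\Pi(T)\|$ and continuity at $s=1$ from $\|S-T\|\leq (C+3)f(r)$. One small point: $\mathcal{F}$ requires $\hat f$ to be non-increasing, and $(2C+7)f(r_1)$ may exceed $2$, so you should take $\hat f(\rho)=\min\{2,(2C+7)f(\rho)\}$ for $\rho\geq r_1$, as the paper does.
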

 
\begin{proof}
Let $r_0$ be as in the proof of Lemma~\ref{lemma: inner implies split}, namely
$2f(r_0)<1$, and for $r\geq r_0$ let $T_r\in\mathcal{U}_{[-r,r]}$ be the unitary
obtained there from the polar decomposition of $\Pi_{[-r,r]}(T)$, so that
\begin{equation}\label{eq: Tr close to T}
    \|T_r - T\| \leq 3f(r).
\end{equation}
Set $r_k = r_0+k$ and $S_k = T_{r_k}\in\mathcal{U}_{[-r_k,r_k]}$ for $k\geq 0$. By
\eqref{eq: Tr close to T},
\begin{equation*}
    \|S_k^*S_{k+1}-\mathbb{I}\| = \|S_{k+1}-S_k\| \leq 6f(r_k).
\end{equation*}
Let $k_1$ be the smallest integer with $6f(r_{k_1})<2$. Just like $r_0$, $k_1$ depends only on $f$. For $k\geq k_1$ the unitary $S_k^*S_{k+1}$ lies in the finite-dimensional algebra
$\mathcal{A}_{[-r_{k+1},r_{k+1}]}$, and its spectrum is contained in
$\{\mathrm{e}^{\mathrm{i}\theta} : |\theta|\leq 3\pi f(r_k)\}$, since
$|\mathrm{e}^{\mathrm{i}\theta}-1|\leq\delta<2$ forces
$|\theta|\leq 2\arcsin(\delta/2)\leq \pi\delta/2$. Since $3\pi f(r_k)<\pi$, the principal logarithm is defined by the spectral theorem in that matrix algebra and yields
$K_k = K_k^*\in\mathcal{A}_{[-r_{k+1},r_{k+1}]}$ with
\begin{equation*}
    S_k^* S_{k+1} = \mathrm{e}^{\mathrm{i}K_k},
    \qquad \|K_k\| \leq 3\pi f(r_k).
\end{equation*}
Since $\mathcal{A}_{[-r_{k_1},r_{k_1}]}$ is a full matrix
algebra, we may likewise write $S_{k_1}=\mathrm{e}^{\mathrm{i}K}$ with
$K=K^*\in\mathcal{A}_{[-r_{k_1},r_{k_1}]}$.
 
We now choose an increasing sequence $\tfrac12 = \tau_{k_1+1}<\tau_{k_1+2}<\cdots$
converging to $1$, and define
\begin{equation*}
    T_s :=
    \begin{cases}
        \mathrm{e}^{2\mathrm{i}sK}, & s\in[0,\tfrac12],\\[4pt]
        S_k\,\mathrm{e}^{\mathrm{i}t_k(s)K_k}, & s\in[\tau_{k+1},\tau_{k+2}],\ k\geq k_1,\\[4pt]
        T, & s=1,
    \end{cases}
\end{equation*}
where $t_k$ maps $[\tau_{k+1},\tau_{k+2}]$ affinely onto $[0,1]$. Then $T_0=\mathbb{I}$,
$T_{1/2}=S_{k_1}$, and $T_{\tau_{k+2}} = S_k\mathrm{e}^{\mathrm{i}K_k}=S_{k+1}$, so the
family is well defined and norm-continuous on $[0,1)$. It is continuous at $s=1$ because,
for $s$ in the $k$-th leg,
\begin{equation}\label{eq: leg close to T}
    \|T_s - T\| \leq \|T_s - S_k\| + \|S_k - T\|
    \leq \|K_k\| + 3f(r_k) \leq 3(\pi+1)f(r_k)\xrightarrow[k\to\infty]{}0 .
\end{equation}
 
It remains to exhibit a uniform decay function. For all $s\in[0,1)$, the operator $T_s$ is strictly local, with $T_s\in\mathcal{A}_{[-r_{k+1},r_{k+1}]}$ on the $k$-th leg, so that $\|T_s-\Pi_{[-r,r]}(T_s)\|=0$ as soon as $r\geq r_{k+1}$. Suppose then that $r>r_{k_1}$ and that $s$ lies in the $k$-th leg with
$r<r_{k+1}$, so that $r_k\geq r-1$. Using
$\|A-\Pi_\Lambda(A)\|\leq 2\|A-B\|+\|B-\Pi_\Lambda(B)\|$ with $B=T$, together with
\eqref{eq: leg close to T} and $\|T-\Pi_{[-r,r]}(T)\|\leq f(r)$,
\begin{equation*}
    \|T_s - \Pi_{[-r,r]}(T_s)\| \leq 6(\pi+1)f(r-1) + f(r) \leq (6\pi+7)f(r-1).
\end{equation*}
Hence every $T_s$, including $T_1=T$, is $\hat f$-close to $\{0\}$, where
\begin{equation*}
    \hat f(r) :=
    \begin{cases}
        2, & r\leq r_{k_1},\\
        \min\bigl\{2,\ (6\pi+7)f(r-1)\bigr\}, & r> r_{k_1},
    \end{cases}
\end{equation*}
depends only on $f$, and $\hat f\in\mathcal{F}$
\end{proof}

\begin{lemma}\label{lemma: composition split}
    Let $\alpha,\beta$ be $\ast$-automorphisms of $\mathcal{A}$ satisfying the
    factorization property with decay functions $f_\alpha$ and $f_\beta$. Then
    $\alpha\circ\beta$ satisfies the factorization property, with a decay function
    depending only on $f_\alpha$ and $f_\beta$.
\end{lemma}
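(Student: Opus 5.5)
Fix a site $j$. The factorization property at $j$ gives
\begin{equation*}
\alpha = (\alpha^{L_j}\otimes\alpha^{R_j})\circ\mathrm{Ad}[U^j], \qquad \beta = (\beta^{L_j}\otimes\beta^{R_j})\circ\mathrm{Ad}[V^j],
\end{equation*}
with $U^j$ $f_\alpha$-close and $V^j$ $f_\beta$-close to $\{j\}$. The plan is to move $\mathrm{Ad}[U^j]$ to the right past the product automorphism $\beta^{L_j}\otimes\beta^{R_j}$. Write $\gamma_j:=\beta^{L_j}\otimes\beta^{R_j}$. Since $\mathrm{Ad}[U]\circ\gamma = \gamma\circ\mathrm{Ad}[\gamma^{-1}(U)]$ for any automorphism $\gamma$ and unitary $U$, we get
\begin{equation*}
\alpha\circ\beta = (\alpha^{L_j}\circ\beta^{L_j})\otimes(\alpha^{R_j}\circ\beta^{R_j})\circ\mathrm{Ad}\big[\gamma_j^{-1}(U^j)\,V^j\big].
\end{equation*}
The half-chain automorphisms $\alpha^{L_j}\circ\beta^{L_j}\in\mathrm{Aut}(\mathcal{A}_{L_j})$ and $\alpha^{R_j}\circ\beta^{R_j}\in\mathrm{Aut}(\mathcal{A}_{R_j})$ are legitimate, so the whole question is whether $W^j:=\gamma_j^{-1}(U^j)V^j$ is $g$-close to $\{j\}$ for some $g\in\mathcal F$ depending only on $f_\alpha,f_\beta$, uniformly in $j$.

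To show this, I first need uniform almost-locality of $\gamma_j^{-1}$. Note $\gamma_j^{-1} = \mathrm{Ad}[V^j]\circ\beta^{-1}$. By the Remark after Lemma~\ref{lemma: split implies quasi local}, $\beta^{-1}$ is $f_\beta'$-almost local with $f_\beta'$ depending only on $f_\beta$. Conjugation by $V^j$ only perturbs things near $j$, and this costs at most $2\|V^j-\Pi_{\{j\}_r}(V^j)\|$ in the decomposition. Thus for $A\in\mathcal A_I$ with $I\ni j$ (or $I$ near $j$), $\gamma_j^{-1}(A)$ is close to $I$ with a decay depending only on $f_\beta$.

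Next, I extend this from strictly local $A$ to the almost local $U^j$. Decompose
\begin{equation*}
U^j=\Pi_{\{j\}_1}(U^j)+\sum_{r\ge1}\big(\Pi_{\{j\}_{r+1}}(U^j)-\Pi_{\{j\}_r}(U^j)\big),
\end{equation*}
where the $r$-th term has norm at most $2f_\alpha(r)$ and lies in $\mathcal A_{\{j\}_{r+1}}$. Applying $\gamma_j^{-1}$ term by term and truncating at $\{j\}_R$, the terms with $r\le R/2$ contribute at most $\sum_{r\le R/2} 2f_\alpha(r)\, f'(R/2)$, and the terms with $r>R/2$ contribute at most $\sum_{r>R/2}2f_\alpha(r)$. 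This gives a bound $g_1(R)\lesssim f'(R/2)\sum_r f_\alpha(r)+\sum_{r>R/2}f_\alpha(r)$, which belongs to $\mathcal F$ because $f_\alpha$ is summable and decays faster than any power.

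Finally, the product of two unitaries each close to $\{j\}$ is close to $\{j\}$: $\|XY-\Pi(X)\Pi(Y)\|\le\|X-\Pi X\|+\|Y-\Pi Y\|$, with $\Pi(X)\Pi(Y)$ supported in the same ball, and then one uses $\|Z-\Pi_\Lambda Z\|\le 2\|Z-B\|$ for $B\in\mathcal A_\Lambda$. This yields $g=2(g_1+f_\beta)$.

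The main obstacle is the middle step: obtaining almost-locality of $\gamma_j^{-1}$ that is uniform in $j$ and valid on observables straddling the cut, and handling the tail sum. The paper's $f$-close notion is relative to a fixed interval, so the telescoping decomposition is needed carefully, and summability of functions in $\mathcal F$ makes it work.
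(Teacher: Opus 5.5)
Your proof is correct and follows the paper's argument. You use the same identity $\mathrm{Ad}[U^j]\circ\gamma_j=\gamma_j\circ\mathrm{Ad}[\gamma_j^{-1}(U^j)]$ to produce the factorizing unitary $\gamma_j^{-1}(U^j)V^j$, and you deduce its localization from the almost-locality of $\gamma_j^{-1}=\mathrm{Ad}[V^j]\circ\beta^{-1}$; your telescoping decomposition of $U^j$ and the product estimate simply spell out the step the paper compresses into ``consequently $\gamma^{-1}(U^j)$ is $f'$-close to $\{j\}$, and so is $\gamma^{-1}(U^j)V^j$.''
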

 
\begin{proof}
Fix $j\in\mathbb{Z}$ and write
$\alpha = (\alpha^{L_j}\otimes\alpha^{R_j})\circ\mathrm{Ad}[U^j]$ and
$\beta = (\beta^{L_j}\otimes\beta^{R_j})\circ\mathrm{Ad}[V^j]$ as in
Definition~\ref{def:split automorphism}. Setting
$\gamma := \beta^{L_j}\otimes\beta^{R_j}$ and using
$\mathrm{Ad}[U^j]\circ\gamma = \gamma\circ\mathrm{Ad}[\gamma^{-1}(U^j)]$,
\begin{equation*}
    \alpha\circ\beta
    = \bigl( (\alpha^{L_j}\circ\beta^{L_j}) \otimes (\alpha^{R_j}\circ\beta^{R_j}) \bigr)
      \circ \mathrm{Ad}\bigl[\gamma^{-1}(U^j)\,V^j\bigr].
\end{equation*}
The first factor preserves $\mathcal{A}_{L_j}$ and $\mathcal{A}_{R_j}$. Moreover
$\gamma = \beta\circ\mathrm{Ad}[V^{j*}]$ is almost local by Lemma~\ref{lemma: split implies quasi local}
and the remark following it, hence so is $\gamma^{-1}$; consequently $\gamma^{-1}(U^j)$ is
$f'$-close to $\{j\}$ with $f'$ depending only on $f_\alpha,f_\beta$, and so is
$\gamma^{-1}(U^j)V^j$. As all bounds are uniform in $j$, the claim follows.
\end{proof}

\begin{remark}
    While all post-measurement states $\omega_n$ are indeed SRE, the automorphisms $\alpha_n$ constructed above are less and less local as $n$ increases. What we will show below, which is really the claim of Theorem~\ref{theorem: entanglement growth}, is that it must be so: There is no sequence of factorizable automorphisms $\alpha_n$ that are uniformly local in the sense that the unitaries in~\eqref{eq: split-automorphism} are $f$-local where $f$ can be chosen uniformly in $n$.
\end{remark}

With this lemma, Proposition~\ref{prop: existence of W^R_g} yields unitaries $W^{R_j}_{{g},n}$ for all $g\in G$ such that 
\begin{equation*}
    \omega_n\circ\beta^{R_j}_{{g}}= \omega_n\circ\mathrm{Ad}[W^{R_j}_{{g},n}],
\end{equation*}
where $W^{R_j}_{{g},n}$ is almost local around $\{j\}$. That is, there is some function $f_{W_n}\in \mathcal{F}$ that depends only on $f_{\alpha_n}$ such that for all $r\geq0$,
\begin{equation*}
    \|W^{R_j}_{{g},n} -\Pi_{[j-r,j+r]}(W^{R_j}_{{g},n})\|\leq f_{W_n}(r).
\end{equation*}

These new operators have similar properties as $W^{R_j}_{{g}}$ for the state $\omega$, most importantly
\begin{equation}\label{eq:VWn}
    V^{R_j*}_{{g}} \Omega_n =\pi(W^{R_j*}_{{g},n})\Omega_n ,
\end{equation}
with $V^{R_j}_{{g}}$ the unitary defined in Proposition~\ref{prop: existence of VRg + properties}. Indeed, \eqref{eq: V^R_g properties} implies that
\begin{equation*}
    \langle\Omega_n, \mathrm{Ad}[V^{R_j}_g]( \pi(A) )\Omega_n\rangle
    = \omega_n\circ\beta^R_g(A)
    =\langle\Omega_n, \mathrm{Ad}[\pi(W^{R_j}_{{g},n})](\pi(A))\Omega_n\rangle
\end{equation*}
Once again, this implies~\eqref{eq:VWn} up to a phase, which can be absorbed in the definition of $W^{R_j}_{{g},n}$.

\begin{remark}
As pointed out earlier, $\omega_n$ is a vector state in the same GNS representation as $\omega$. In particular, we may choose to pick $V^{R_j*}_{{g}}$ in~\eqref{eq:VWn} to be those obtained from $\omega$, and the intertwining relation~\eqref{eq: V^R_g properties} is a property of $\pi$ alone. In particular $\sigma_{\omega_n}=\sigma_\omega$: the measurement does not change the index. Incidentally, this shows the need to take the limit $n\to\infty$ to make a precise mathematical statement about the entanglement transition.
\end{remark}

Now, Lemma~\ref{lemma: invariance under V^L V^R} relies only on the invariance of $\omega$ under the group $\bigGroup$. It follows that the lemma continues to hold for the vector $\Omega_n$ with $n$-dependent phases, but only for the post-measurement symmetry group $\tilde\bigGroup$, see Lemma~\ref{lemma: post meas SRE}. Since this, Proposition~\ref{prop: existence of VRg + properties} and Lemma~\ref{lem:translating Vs} are all that enter the proof of Proposition~\ref{prop: string order} in Section~\ref{sec: string order}, that result continues to hold for $\omega_n$. The only  additional fact is that the measurement of $G$ yields a joint eigenstate of all $\{U^{(j)}_g:g\in G\}$ in the measurement region, namely the string reduces to a scalar.

Specifically, let $\tilde{g},\tilde{h}$ be such that $\sigma_\omega(\tilde{g}, \tilde{h})\neq 1$. Similarly to Proposition~\ref{prop: string order}, we define $(W^i_{n}, W^j_{n})\coloneqq(W^{L_i}_{\tilde{g},n},W^{R_j*}_{\tilde{g}^{-1},n})$ whenever $i<j$. Then
\begin{equation*}
    \omega_n(W^{j}_{n})=0,\qquad 
    \vert \omega_n(W^i_{n} U^{[i,j)}_{\tilde{g}}W^{j}_n)\vert = 1,
\end{equation*}
for all $n\in\mathbb{N}$. Furthermore, if $-n\leq i<j\leq n$, then by~\eqref{eq: V and U commutation},
\begin{align*}
    \omega_n(W^i_{n} U^{[i,j)}_{\tilde{g}}W^{j}_n)
    &=\overline{\mu_\omega(\tilde{g},\tilde{g}^{-1})}\langle\Omega_n, V^{L_i}_{\tilde{g}}  V^{R_j}_{\tilde{g}}\pi (U^{[i,j)}_{\tilde{g}}) \Omega_n\rangle \\
    &= \overline{\mu_\omega(\tilde{g},\tilde{g}^{-1})}\prod_{k=i}^{j-1}\overline{\chi_{q_k}}(\tilde g) \langle\Omega_n, V^{L_i}_{\tilde{g}}V^{R_j}_{\tilde{g}}  \Omega_n\rangle \\
    &= \prod_{k=i}^{j-1}\overline{\chi_{q_k}}(\tilde g)\omega_n(W^i_{n} W^{j}_n)
\end{align*}
where we used Lemma~\ref{lemma: U P on site} in the second equality. In particular $\vert \omega_n(W^i_{n} W^{j}_n)\vert = 1$ and thus we have proved the following:
\begin{lemma}\label{lemma:LRO n}
    For any $-n\leq i<j\leq n$
    \begin{equation}\label{eq: length n correlation for measured state}
    |\omega_n(W^i_nW^j_n)- \omega_n(W^i_n)\omega_n(W^j_n)|  =1.
\end{equation}
\end{lemma}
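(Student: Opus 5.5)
The identity \eqref{eq: length n correlation for measured state} follows from two facts about the post-measurement state $\omega_n$: the endpoint expectation vanishes, $\omega_n(W^j_n)=0$, and the correlator has modulus one, $\vert\omega_n(W^i_nW^j_n)\vert=1$. With these, the product $\omega_n(W^i_n)\omega_n(W^j_n)$ drops out and the left-hand side equals $\vert\omega_n(W^i_nW^j_n)\vert=1$. So the plan is to transfer the proof of Proposition~\ref{prop: string order} from $\omega$ to $\omega_n$, and then use the measurement to collapse the string.

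\textbf{Step 1: string order for $\omega_n$.} I will rerun the proof of Proposition~\ref{prop: string order} with $\Omega$ replaced by $\Omega_n$. The ingredients are:
\begin{itemize}
\item the representation $V^{R_j}_{\elementofbigGroup}$ of Proposition~\ref{prop: existence of VRg + properties}, which depends only on $\pi$ and so is unchanged by the measurement;
\item relation \eqref{eq:VWn}, which ties $V^{R_j}_g$ to the new almost local unitaries $W^{R_j}_{g,n}$;
\item Lemma~\ref{lem:translating Vs}, again a statement about $\pi$ alone;
\item Lemma~\ref{lemma: invariance under V^L V^R} applied to $\Omega_n$.
\end{itemize}
The last point is where care is needed. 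Its proof uses only invariance of the state under $\beta_{\elementofbigGroup}$, and by Lemma~\ref{lemma: post meas SRE} the state $\omega_n$ is invariant under $\tilde\bigGroup=Z_\bigGroup(G)$. By (M1) and (M2), this group contains both $\tilde g$ and $\tilde h$. Therefore $V^{L_j}_{\tilde h}V^{R_j}_{\tilde h}\Omega_n=c\,\Omega_n$ for some phase $c$.

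\textbf{Step 2: vanishing of the endpoint expectation.} I will insert $V^{L_j}_{\tilde h}V^{R_j}_{\tilde h}$ and its adjoint around $V^{R_j}_{\tilde g}$ in $\langle\Omega_n,V^{R_j}_{\tilde g}\Omega_n\rangle$. Commuting $V^{L_j}_{\tilde h}$ through by \eqref{eq: V^L and V^R commutation} and using $\sigma_\omega(\tilde g,\tilde h)\neq1$ gives $\omega_n(W^j_n)=0$. The same argument applied to $V^{L_i}_{\tilde g}$ gives $\omega_n(W^i_n)=0$. The same computation as for $\omega$ also gives $\vert\omega_n(W^i_nU^{[i,j)}_{\tilde g}W^j_n)\vert=1$, again using Lemma~\ref{lemma: invariance under V^L V^R} for $\Omega_n$, now for $\tilde g\in\tilde\bigGroup$.

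\textbf{Step 3: collapse of the string.} For $-n\le i<j\le n$, I will write $\pi(W^j_n)\Omega_n$ as a phase times $V^{R_j}_{\tilde g}\Omega_n$ and use \eqref{eq: V and U commutation} (valid since $\tilde g$ commutes with itself) to move $\pi(U^{[i,j)}_{\tilde g})$ next to $\Omega_n$. By Lemma~\ref{lemma: U P on site}, $\Omega_n$ is a joint eigenvector with eigenvalue $\prod_{k=i}^{j-1}\overline{\chi_{q_k}(\tilde g)}$, which has modulus one. Hence $\omega_n(W^i_nU^{[i,j)}_{\tilde g}W^j_n)$ equals this unimodular scalar times $\omega_n(W^i_nW^j_n)$, and so $\vert\omega_n(W^i_nW^j_n)\vert=1$. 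Combined with Step 2, this gives the claim.

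The main obstacle is Step 1, specifically verifying that Lemma~\ref{lemma: invariance under V^L V^R} survives for $\Omega_n$. The symmetry is reduced to $\tilde\bigGroup$, so I must check that every group element used in the argument ($\tilde g$, $\tilde g^{-1}$ and $\tilde h$) lies in $Z_\bigGroup(G)$. I must also check that the $n$-dependent phases, which may differ from those of $\omega$ and can no longer be normalized to one, only ever enter through moduli.
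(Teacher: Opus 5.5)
Your proposal is correct and follows essentially the same route as the paper. You rerun the string-order argument for $\Omega_n$ using the $\pi$-intrinsic $V^{R_j}_{\elementofbigGroup}$, relation~\eqref{eq:VWn} and the $Z_{\bigGroup}(G)$-invariance of Lemma~\ref{lemma: post meas SRE}, then commute $U^{[i,j)}_{\tilde g}$ onto $\Omega_n$ via~\eqref{eq: V and U commutation} and collapse it to $\prod_{k=i}^{j-1}\overline{\chi_{q_k}}(\tilde g)$ by Lemma~\ref{lemma: U P on site}.

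The only superfluous point is your claim $\omega_n(W^i_n)=0$. It is not needed, since $\omega_n(W^j_n)=0$ already kills the product term, and proving it would require the left half-chain analogue of non-degeneracy, which is true (with phase $\overline{\sigma_\omega(\tilde g,\tilde h)}$) but not established in the paper.
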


\subsection{Short-range entanglement bound}
\label{subsec: SRE bound after measurements}

Ideally, one would wish to extend~\eqref{eq: length n correlation for measured state} to an infinite volume state $\nu$ and fixed observables $ O^{i}_{\tilde{g}}, O^{j}_{\tilde{g}}$ that are almost localized near $\{i\}$ respectively $\{j\}$. While we can do that in the case of QCAs, see Section~\ref{sec: long range correlation for QCA} below, our locality estimates for the unitaries $(W^i_n,W^j_n)$ deteriorate too badly for an SRE state obtained from a general factorizable automorphism $\alpha$. We therefore study the locality of the sequence $(\omega_n)_{n\in\mathbb{N}}$ and the operators $(W^i_n)_{n\in\mathbb{N}},(W^j_n)_{n\in\mathbb{N}}$ in more detail.

\begin{lemma}\label{lemma:Decay of corr n}
    Let $(W^i_n,W^j_n)$ be as above with $-n\leq i<j\leq n$ and $|j-i|=n$. Then
\begin{equation}\label{eq: decaying correlation for measured state}
    |\omega_n(W^i_nW^j_n)- \omega_n(W^i_n)\omega_n(W^j_n)| \leq 4f_{\alpha_n}(\frac{n}{3})+4f_{W_n}(\frac{n}{3}). 
\end{equation}
\end{lemma}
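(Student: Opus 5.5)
The plan is to rerun the proof of Lemma~\ref{lemma: decay of correlation} with the post-measurement state $\omega_n=\omega_0\circ\alpha_n$. The only new feature is that $W^i_n,W^j_n$ are not strictly local but only $f_{W_n}$-close to $\{i\}$ and $\{j\}$. So we first replace them by strictly local truncations, and then apply the SRE decay bound to the truncations. The cost of the truncation gives the $f_{W_n}$ term, and the decay of correlations for $\alpha_n$ gives the $f_{\alpha_n}$ term.

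Concretely, I set $l=|j-i|=n$ and choose the truncation radius $r=n/3$. Let
\begin{equation*}
A:=\Pi_{[i-r,i+r]}(W^i_n),\qquad B:=\Pi_{[j-r,j+r]}(W^j_n).
\end{equation*}
By the almost-locality of the $W$'s established above (with $f_{W_n}$ depending only on $f_{\alpha_n}$), we have $\|W^i_n-A\|\le f_{W_n}(n/3)$ and $\|W^j_n-B\|\le f_{W_n}(n/3)$. Since $\Pi$ is contractive, also $\|A\|,\|B\|\le1$. The intervals $I=[i-r,i+r]$ and $J=[j-r,j+r]$ are at distance $n-2r=n/3$.

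Lemma~\ref{lemma: decay of correlation}, applied to the SRE state $\omega_n=\omega_0\circ\alpha_n$ (Lemma~\ref{lemma: post meas SRE}) with $\mathrm{dist}(I,J)=n/3$, gives
\begin{equation*}
|\omega_n(AB)-\omega_n(A)\omega_n(B)|\le 4f_{\alpha_n}(n/9).
\end{equation*}
This is weaker than the $f_{\alpha_n}(n/3)$ claimed. To recover the stated argument I will instead not truncate by $\Pi$. I apply the proof of Lemma~\ref{lemma: decay of correlation} directly to $\alpha_n(W^i_n)$, noting that $\alpha_n(W^i_n)$ is close to $\Pi_{\{i\}_{r}}\alpha_n(W^i_n)$ up to an error controlled by $f_{\alpha_n}(r)+f_{W_n}(r)$. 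This is the composite locality of an almost local operator under an almost local automorphism, and with the constants absorbed it yields the bound at $r=n/3$ with the stated constants. Failing that, I would accept the rescaled argument $n/9$ and use monotonicity, which is harmless for the subsequent use in Theorem~\ref{theorem: entanglement growth}.

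The remaining step is the replacement error. By telescoping,
\begin{equation*}
|\omega_n(W^i_nW^j_n)-\omega_n(AB)|\le 2f_{W_n}(n/3),
\end{equation*}
and likewise
\begin{equation*}
|\omega_n(W^i_n)\omega_n(W^j_n)-\omega_n(A)\omega_n(B)|\le 2f_{W_n}(n/3),
\end{equation*}
using that all norms are at most $1$. Summing these with the decay bound gives~\eqref{eq: decaying correlation for measured state}.

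The main obstacle is bookkeeping the distances so that both decay functions are evaluated at $n/3$ rather than at a smaller fraction. I expect this to need the first option above: treat $\alpha_n(W^i_n)$ as a single almost-local object, so that only one splitting into thirds occurs.
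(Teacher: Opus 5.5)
Your route (truncate $W^i_n,W^j_n$ with $\Pi$ at radius $n/3$, apply Lemma~\ref{lemma: decay of correlation} to the truncations, and pay $2f_{W_n}(n/3)$ for each of the two replacements) is exactly the paper's argument, and your bookkeeping of the replacement errors is correct. You are also right that, since $\mathrm{dist}(I_n,J_n)=n-2n/3=n/3$, Lemma~\ref{lemma: decay of correlation} only yields $4f_{\alpha_n}(n/9)$. The paper's proof writes $4f_{\alpha_n}(n/3)$ at that step without justification, so its argument also establishes only the weaker bound.

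Your proposed repair, however, does not work. Treating $\alpha_n(W^i_n)$ as a single almost-local object still requires composing two localization estimates, and this splits the radius. Writing $W^i_n=\Pi_{[i-s,i+s]}(W^i_n)+E$ with $\|E\|\le f_{W_n}(s)$ gives
\begin{equation*}
\|\alpha_n(W^i_n)-\Pi_{[i-s-t,i+s+t]}(\alpha_n(W^i_n))\|\leq 2f_{W_n}(s)+f_{\alpha_n}(t).
\end{equation*}
The product-state factorization needs the two truncated regions around $i$ and $j$ to be disjoint, i.e.\ $s+t<n/2$. So within this scheme you can never have both arguments equal to $n/3$; the best balanced choice is roughly $n/4$ each. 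Monotonicity cannot rescue the stated bound either. Since $f_{\alpha_n}$ is non-increasing, $f_{\alpha_n}(n/9)\geq f_{\alpha_n}(n/3)$, so the $n/9$ estimate does not imply the $n/3$ one.

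What your fallback (and the paper's argument) actually proves is
\begin{equation*}
|\omega_n(W^i_nW^j_n)-\omega_n(W^i_n)\omega_n(W^j_n)|\leq 4f_{\alpha_n}(n/9)+4f_{W_n}(n/3).
\end{equation*}
This can be modestly sharpened, to roughly $f_{\alpha_n}(n/6)$, by noting that the proof of Lemma~\ref{lemma: decay of correlation} only needs $I_r\cap J_r=\emptyset$, i.e.\ $2r<l$, rather than $r\le l/3$. You should state this version instead of claiming the $n/3$ one. It is all that the proof of Theorem~\ref{theorem: entanglement growth} uses, since $1\leq 4f_{\alpha_n}(n/9)+4f_{W_n}(n/3)$ still contradicts a uniform bound $f_{\alpha_n}\leq f\in\mathcal{F}$.
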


\begin{proof}
    First let us use the fact that $W^i_n$ and $W^j_n$ are almost local around $i$ and $j$ respectively. Let $I_n=[i-\frac{n}{3}, i+\frac{n}{3}]$ and $J_n=[j-\frac{n}{3},j+\frac{n}{3}]$. Following the proof of Proposition~\ref{prop: existence of W^R_g}, we have that
    \begin{equation}\label{eq: ineq1}
       \Vert  W^i_n - \Pi_{I_n}(W^i_n)\Vert\leq f_{W_n}(\frac{n}{3})
    \end{equation}
 and in turn that
    \begin{equation}\label{eq: ineq2}
        \Vert W^i_nW^j_n - \Pi_{I_n}(W^i_n)\Pi_{J_n}(W^j_n)\Vert
        \leq 2f_{W_n}(\frac{n}{3}).
    \end{equation}
    Note that $f_{W_n}\in\mathcal{F}$ depends only on $f_{\alpha_n}$.

Moreover, the state $\omega_n$ is also short-range entangled and the distance between the support of $\Pi_{I_n}(W^i_n)$ and $\Pi_{J_n}(W^j_n)$ is at least $n/3$. Therefore, decay of correlations, Lemma~\ref{lemma: decay of correlation}, yields
\begin{equation}\label{eq: ineq3}
    |\omega_n(\Pi_{I_n}(W^i_n)\Pi_{J_n}(W^j_n)) -\omega_n(\Pi_{I_n}(W^i_n))\omega_n(\Pi_{J_n}(W^j_n))| \leq 4f_{\alpha_n}(\frac{n}{3}).
\end{equation}
We conclude the proof by combining the three inequalities~(\ref{eq: ineq1},\ref{eq: ineq2},\ref{eq: ineq3}).
\end{proof}

While we proved in Lemma~\ref{lemma: post meas SRE} that $\omega_n$ is short-range entangled and noted that our locality upper bound on $\alpha_{n,1}$ deteriorates with $n$, that proof per se does not show that it must be so. In particular, the $\alpha_{n}$ used in the proof above needs not be the one exhibited before: It is any approximately factorizable automorphism such that $\omega_n = \omega_0 \circ\alpha_n$. It is only in combination with Lemma~\ref{lemma:LRO n} that a meaningful statement arises.

\subsection{Proof of Theorem~\ref{theorem: entanglement growth}}\label{sec:Proof of Main 1}

The combination of Lemma~\ref{lemma:LRO n} and Lemma~\ref{lemma:Decay of corr n} yields the bound
\begin{equation}\label{Central inequality}
    1\leq 4f_{\alpha_n}(\frac{n}{3}) + 4f_{W_n}(\frac{n}{3})
\end{equation}
for the fast decaying functions $f_{\alpha_n},f_{W_n}\in\mathcal{F}$, and for all $n\in\mathbb{N}$.

Let us now assume by contradiction that there is $f\in\mathcal{F}$ such that $f_{\alpha_n}\leq f$. Since the map $f_\alpha \mapsto f_W$ is increasing on $\mathcal{F}$, see~\eqref{eq: F of f almost locality of WR}, we conclude that in turn $f_{W_n}\leq g$ for some $g\in\mathcal{F}$, uniformly in $n$. We conclude from this and~\eqref{Central inequality} that there is $F\in\mathcal{F}$ such that
\begin{equation*}
    1\leq F(n)
\end{equation*}
for all $n\in\mathbb{N}$, which is a contradiction. \hfill $\Box$

\section{Strict locality and long-range correlations}
\label{sec: long range correlation for QCA}

A stronger result holds if the unitary factorizing an almost local automorphism~\eqref{eq: split-automorphism} is strictly local uniformly for all sites $j$. It follows that the spread of the support of local observables is strictly finite, equivalently $f_\alpha(r) = 0$ in Lemma~\ref{lemma: split implies quasi local} for $r$ sufficiently large.

In this section, we will show that this implies that the operators $W^{R_j}_\elementofbigGroup$ constructed in Proposition \ref{prop: existence of W^R_g} are also strictly local. We prove Theorem~\ref{theorem: long range correlations with QCA} by showing that arbitrary long-range correlations can be obtained by applying measurements to non-overlapping blocks of spins. 

\subsection{Strict locality}

Proposition~\ref{prop: existence of W^R_g} is strengthened when assuming that the automorphism $\alpha$ has a finite spread, i.e. it is a Quantum Cellular Automaton (QCA), see Definition~\ref{def: QCA}. In that case, the operators $W^{L_i}_\elementofbigGroup$ and $W^{R_i}_\elementofbigGroup$ are strictly local.

\begin{lemma}\label{lma:Local Ws}
   Let $\omega$ be a $\bigGroup$-invariant SRE state and such that $\omega = \omega_0\circ\alpha$, where $\alpha$ is a QCA. Then there is $N\in\mathbb{N}$ such that for each $j\in\mathbb{Z}$ and each $\elementofbigGroup \in \mathcal{G}$, there exists a unitary $W^{R_j}_\elementofbigGroup \in \mathcal{A}_{[j-N,j+N]}$ such that
\begin{equation*}
\omega \circ \beta^{R_j}_\elementofbigGroup  = \omega \circ \mathrm{Ad}[W^{R_j}_\elementofbigGroup].
\end{equation*}
The same holds for $\beta^{L_i}_\elementofbigGroup$ and a unitary $W^{L_i}_\elementofbigGroup \in \mathcal{A}_{[i-1-N, i-1+N]}$.
   \label{lemma: existence and exact locality of W^R for QCA}
\end{lemma}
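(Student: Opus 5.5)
We retrace the proof of Proposition~\ref{prop: existence of W^R_g}, replacing every almost-locality estimate by exact support statements. Take $j=0$ and $R=R_0$; translation covariance makes the radius uniform in $j$. Let $N$ be the QCA radius of both $\alpha$ and $\alpha^{-1}$. The inverse of a QCA is again a QCA; this follows from the standard duality argument, or we may simply take the larger of the two radii.

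\textbf{Step 1.} Consider $\tilde\beta := \alpha\circ\beta^R_{\elementofbigGroup}\circ\alpha^{-1}$. By $\bigGroup$-invariance, as in that proof, $\omega_0\circ\tilde\beta=\omega_0\circ\alpha\circ\beta^L_{\elementofbigGroup^{-1}}\circ\alpha^{-1}$.

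Take $A$ supported in $(N,\infty)$. Then $\alpha^{-1}(A)$ is supported in $R$, so $\beta^L_{\elementofbigGroup^{-1}}$ fixes it, and hence $\omega_0(\tilde\beta(A))=\omega_0(A)$. Now take $A$ supported in $(-\infty,-N)$. Then $\alpha^{-1}(A)$ is supported in $L$, so $\tilde\beta(A)=A$.

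\textbf{Step 2.} Conclude that $\tilde\beta$ acts trivially away from $[-2N,2N]$. For $A$ supported in $(2N,\infty)$, we have $\alpha^{-1}(A)\in\mathcal{A}_{(N,\infty)}\subset\mathcal{A}_R$. Therefore $\tilde\beta(A)=\alpha(\beta^R_{\elementofbigGroup}(\alpha^{-1}(A)))$ is supported in $(0,\infty)$, and also in $\alpha(\mathcal{A}_{(N,\infty)})\subset\mathcal{A}_{(0,\infty)}$.

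This does not make $\tilde\beta(A)=A$ on the right. Instead, it shows that $\omega_0\circ\tilde\beta$ and $\omega_0$ agree on $\mathcal{A}_{(-\infty,-N)}$ and on $\mathcal{A}_{(N,\infty)}$. Both are pure states: $\omega_0$ is a pure product state, and $\omega_0\circ\tilde\beta$ is the image of a pure state under an automorphism.

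\textbf{Step 3.} Produce a unitary supported in $\Lambda=[-N',N']$ with $\omega_0\circ\tilde\beta=\omega_0\circ\mathrm{Ad}[\tilde W]$. This is the main obstacle, because Proposition~A.2 of the cited reference only gives almost locality.

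We use the product structure of $\omega_0$. Write $\omega_0=\omega_0^{\Lambda^c}\otimes\omega_0^{\Lambda}$, where $\omega_0^{\Lambda^c}$ is pure on $\mathcal{A}_{\Lambda^c}$. Let $\psi:=\omega_0\circ\tilde\beta$. Its restriction to $\mathcal{A}_{\Lambda^c}=\mathcal{A}_{(-\infty,-N)}\otimes\mathcal{A}_{(N,\infty)}$ must be checked to equal the pure product $\omega_0^{\Lambda^c}$. The previous step controls the two half-lines separately. For the joint restriction, we compute $\omega_0(\tilde\beta(AB))$ with $A$ on the far left and $B$ on the far right. Here $\tilde\beta(A)=A$, and $\tilde\beta(B)$ is supported in $(0,\infty)$. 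Since $\omega_0$ is a product state, this equals $\omega_0(A)\omega_0(\tilde\beta(B))=\omega_0(A)\omega_0(B)$.

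A state whose restriction to $\mathcal{A}_{\Lambda^c}$ is pure must factorize: $\psi=\omega_0^{\Lambda^c}\otimes\psi_\Lambda$. Since $\psi$ is pure, $\psi_\Lambda$ is a pure state on the full matrix algebra $\mathcal{A}_\Lambda$, hence a vector state. Any two unit vectors in $\mathcal{A}_\Lambda$'s Hilbert space are related by a unitary. This gives $\tilde W\in\mathcal{A}_\Lambda$ with $\psi_\Lambda=\omega_0^\Lambda\circ\mathrm{Ad}[\tilde W]$, and therefore $\psi=\omega_0\circ\mathrm{Ad}[\tilde W]$, with the adjoint convention fixed as in the proposition.

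\textbf{Step 4.} Pull back as before by setting $W^R_{\elementofbigGroup}=\alpha^{-1}(\tilde W)$. It is supported in $[-N'-N,N'+N]$. Renaming the radius gives the claim with a single $N$ independent of $j$ and $\elementofbigGroup$. The left version is identical with the cut placed between $i-1$ and $i$, which gives support centered at $i-1$.
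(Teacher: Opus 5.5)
Your proof is correct, but it follows a different route from the paper. The paper's proof is a one-line corollary of Proposition~\ref{prop: existence of W^R_g}. It takes the explicit bound \eqref{eq: F of f almost locality of WR}, $f_W=F(f_\alpha)$, and notes that for a QCA $f_\alpha(r)=0$ beyond a finite range. It then reads off that $f_W$ also vanishes beyond a finite range, so the almost local $W^{R_j}_{\elementofbigGroup}$ already constructed is in fact strictly local.

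You instead redo the construction with exact supports. Your computation of $\psi$ on products $AB$, with $A$ far left and $B$ far right, is the exact counterpart of the step in the proof of Proposition~\ref{prop: existence of W^R_g} that invokes Lemma~A.1 of the cited reference. In place of the approximate Proposition~A.2 you use two elementary facts: a state whose restriction to $\mathcal{A}_{\Lambda^c}$ is pure is a product state, and two pure states on the matrix algebra $\mathcal{A}_\Lambda$ differ by a unitary in $\mathcal{A}_\Lambda$.

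The paper's route is shorter and shows strict locality as the degenerate case of the quantitative estimate. However, it rests on the external bound returning exactly zero whenever its input does. Your route is self-contained and gives an explicit radius in terms of the QCA radius. It also makes visible the hypotheses actually used: that $\alpha^{-1}$ is again a QCA, and that $\omega_0$ is pure.

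A few details need tidying.
\begin{itemize}
\item Purity of $\omega_0$ is not part of the lemma as stated. It does hold wherever the lemma is applied, since $\omega$ is pure there and $\omega_0=\omega\circ\alpha^{-1}$, so say this rather than assert it.
\item There is no translation covariance in this setting: $d_j$, $U^{(j)}$ and $\alpha$ may all vary with the site. None is needed, though, since the argument at an arbitrary cut $j$ only uses the uniform QCA radius.
\item The title of Step~2 is false as written. On the right, $\tilde\beta$ acts as $\alpha\circ\beta_{\elementofbigGroup}\circ\alpha^{-1}$, not trivially.
\item The claim that $\tilde\beta(B)$ is supported in $(0,\infty)$ was only shown for $B$ supported in $(2N,\infty)$. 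For $B$ supported in $(N,\infty)$ you still get support in $[1-N,\infty)$, which is disjoint from that of $A$. So $\Lambda=[-N,N]$ already works and gives $W^{R_0}_{\elementofbigGroup}\in\mathcal{A}_{[-2N,2N]}$.
\end{itemize}
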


\begin{proof}
    Proposition~\ref{prop: existence of W^R_g} states that there is some $W^{R_j}_\elementofbigGroup$ that is $f_W$-localized for some function $f_W\in \mathcal{F}$. Moreover,  \eqref{eq: F of f almost locality of WR} gives  an analytical expression for $f_W(r)=F(f_\alpha)(r)$. Because $\alpha$ is a QCA, there is some $N\in\mathbb{N}$ such that for all $r> \frac{N}{2}$, $f_\alpha(r)=0$. It follows that for all $r > N$, we have $F(f_\alpha)(r)=0$, proving that $W^{R_j}_\elementofbigGroup$ is strictly local. 
\end{proof}

\subsection{Block measurements}
\label{subsec: Gauging the G charge on blocks}
We perform $G$-charge measurements on a block of spins $[k-N, k+N]$, which corresponds to the support of $W^{R_k}_{\tilde{g}}$, see Lemma~\ref{lemma: existence and exact locality of W^R for QCA}. The action of the symmetry on a block is given by
\begin{equation*}
        \tilde{U}^{(k)}_g = \bigotimes_{l=k-N}^{k+N} U^{(l)}_g,
\end{equation*}
and the measurement corresponds to the projector 
\begin{equation}\label{eq: definition of P on blocks}
    \tilde P_q^{(k)} = \int_G\chi_{q}(g)  \tilde{U}^{(k)}_gd\mu(g). 
\end{equation}
Let us define the set of sites of the form
\begin{equation*}
    K_n = \{(2N+1)l,~ -n\leq l \leq n\}
\end{equation*}
For $k\neq k'\in K_n$, $\tilde P_q^{(k)}$ and $\tilde P_q^{(k')}$ have disjoint support. Moreover, the support of $\tilde P_{n,\bf q} = \bigotimes_{k\in K_n}\tilde P_{q_k}^{(k)}$ spans all the sites between $-(2N+1)n -N$ and $(2N+1)n+N$. Every site in this interval undergoes exactly one measurement, but they are measured in finite blocks of length $2N+1$.

As in Lemma~\ref{lemma: U P on site},
\begin{equation}\label{eq: U P block}
    \tilde{U}^{(k)}_g\tilde P_q^{(k)}= \overline{\chi}_{q}(g)\tilde P_q^{(k)}, \quad \forall g\in G, ~ k\in K_n. 
\end{equation}
The post-measurement state is now given by 
\begin{equation}\label{eq: def of w_n on blocks}
    \tilde \omega_{n,\bf q}(A)=\frac{1}{\tilde{\mathcal{N}}_{n,\bf q}}\omega\left(\tilde P_{n,\bf q} A \tilde P_{n,\bf q} \right)
\end{equation}
with $\tilde{\mathcal{N}}_{n,\bf q}=\omega(\tilde P_{n,\bf q})$, see~\eqref{eq: def of w_n}. Its GNS representation $(\mathcal{H}, \pi, \tilde \Omega_{n,\bf q})$ is given as in~\eqref{eq: definition Omega_n}, but with $\tilde\cdot$.

\subsection{Action of the local operator on the measured state}
The fact that measurements have been blocked according to the QCA's range allows for one significant simplification in the construction of Section~\ref{subsec: computation of correlation after measurements}: We shall now show that the $n$-independent operators $W^{R_j}_{g}$ defined from the state $\omega$ before measurement continue to play their role for all the post-measurement state $\tilde \omega_{n,\bf q}$. As we did there, we now drop the index $\bf q$.

\begin{lemma}
\label{lemma: W^R acts the same on measured state}
Let $(\mathcal{H}, \pi ,\tilde \Omega_n)$ be the GNS representation of $\tilde \omega_n$. Let $W^{R_j*}_{\elementofbigGroup}$ be as in Lemma~\ref{lma:Local Ws}. Assume that $\sigma(\tilde g,g) = 1$, for all $g\in G$, see~\eqref{eq: M3}. Then for all $j\in K_n$ and $g,\tilde g$ in $G$, we have
\begin{equation}\label{eq: UWU Omega=W Omega}
\pi( \tilde{U}_{g}^{(j)*}W^{R_j*}_{\tilde{g}}\tilde{U}_{g}^{(j)} ) \Omega = \pi(W^{R_j*}_{\tilde{g}}) \Omega.
\end{equation}
Furthermore,
\begin{equation}
   \pi(W_{\tilde{g}}^{R_j*}) \tilde\Omega_n = V_{\tilde{g}}^{R_j*} \tilde \Omega_n,
   \label{eq: W^R Omega_n = V^R Omega_n}
\end{equation}
for any $j \in K_n$, where $V_{\tilde{g}}^{R_j}$ are the operators defined in Proposition~\ref{prop: existence of VRg + properties}.
\end{lemma}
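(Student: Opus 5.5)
We prove the two claims in turn. The first one compares how the symmetry acts on the half chain $R_j$ with how it acts on $R_{j+N+1}$. The second one then follows by commuting the block projectors through $V^{R_j}_{\tilde g}$.

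\textbf{Step 1: proof of \eqref{eq: UWU Omega=W Omega}.} By Lemma~\ref{lma:Local Ws}, $W^{R_j}_{\tilde g}\in\mathcal{A}_{[j-N,j+N]}$, so $\mathrm{Ad}[\tilde U^{(j)*}_g]$ acts on it exactly as the global symmetry $\beta_{g^{-1}}$ does. Using \eqref{eq: V^R_g properties2} and the decomposition of the global symmetry, we would write
\[
\pi(\tilde U^{(j)*}_g W^{R_j*}_{\tilde g}\tilde U^{(j)}_g)\Omega=\pi(\beta_{g^{-1}}(W^{R_j*}_{\tilde g}))\Omega .
\]
Next we implement $\beta_{g^{-1}}$ by $V^{L_j}_{g^{-1}}V^{R_j}_{g^{-1}}$ via Proposition~\ref{prop: existence of VRg + properties}; this product leaves $\Omega$ invariant by Lemma~\ref{lemma: invariance under V^L V^R}. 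The vector therefore becomes
\[
V^{L_j}_{g^{-1}}V^{R_j}_{g^{-1}}\,V^{R_j*}_{\tilde g}\,\Omega .
\]
We then move the $V^{L_j}_{g^{-1}}$ factor through using \eqref{eq: V^L and V^R commutation}. We commute $V^{R_j}_{g^{-1}}$ past $V^{R_j*}_{\tilde g}$ at the cost of the factor $\sigma_\omega(g^{-1},\tilde g)$; this factor equals $1$ by \eqref{eq: M3} together with Lemma~\ref{lemma: sigma bicharacter}. The product $V^{L_j}_{g^{-1}}V^{R_j}_{g^{-1}}$ then reassembles and is absorbed into $\Omega$ again, leaving $V^{R_j*}_{\tilde g}\Omega=\pi(W^{R_j*}_{\tilde g})\Omega$. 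The main care here is bookkeeping of the phase conventions $c^i_g=1$.

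\textbf{Step 2: commuting the projectors.} We aim to show that $\pi(W^{R_j*}_{\tilde g})$ and $V^{R_j*}_{\tilde g}$ both commute with $\pi(\tilde P_{n})$ when acting on $\Omega$. For $V^{R_j*}_{\tilde g}$, this follows from \eqref{eq: V and U commutation}, since $\tilde g$ commutes with every element of $G$ (as $G$ is Abelian). Writing each $\tilde P^{(k)}_q$ as a Haar integral of $\tilde U^{(k)}_g$, the operator $V^{R_j*}_{\tilde g}$ commutes with all the block projectors, so $V^{R_j*}_{\tilde g}\tilde\Omega_n=\tilde{\mathcal N}_n^{-1/2}\pi(\tilde P_n)V^{R_j*}_{\tilde g}\Omega$.

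For $W^{R_j*}_{\tilde g}$, the blocks $k\neq j$ have disjoint support from $[j-N,j+N]$, so those projectors commute with it trivially. The only block that overlaps is $k=j$, whose support coincides with that of $W^{R_j}_{\tilde g}$; this is exactly why the blocking was chosen to match the QCA range. For this block we need $\pi(\tilde P^{(j)}_q W^{R_j*}_{\tilde g})\,X=\pi(W^{R_j*}_{\tilde g}\tilde P^{(j)}_q)\,X$ on suitable vectors $X$. We intend to use Step 1 in the integrated form
\[
\int_G\chi_q(g)\,\pi\big(\tilde U^{(j)}_g\,\tilde U^{(j)*}_g W^{R_j*}_{\tilde g}\tilde U^{(j)}_g\big)\,\mathrm d\mu(g).
\]

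\textbf{Main obstacle.} Step 1 only gives the identity on $\Omega$, while here it must act on $\pi(\prod_{k\neq j}\tilde P^{(k)})\Omega$. To handle this, we would first move the other projectors to the left, since they commute with everything supported in block $j$. This reduces the claim to $\pi(\tilde P^{(j)}_q W^{R_j*}_{\tilde g})\Omega=\pi(W^{R_j*}_{\tilde g}\tilde P^{(j)}_q)\Omega$. We then expand $\tilde P^{(j)}_q$ as a Haar integral and, for each $g$, insert $\tilde U^{(j)}_g\tilde U^{(j)*}_g$ next to $W^{R_j*}_{\tilde g}$. Applying Step 1 requires $\Omega$ to be invariant under $\tilde U^{(j)}_g$, which does not hold.

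So instead we argue at the level of $V$: the relation $W^{R_j*}_{\tilde g}\Omega=V^{R_j*}_{\tilde g}\Omega$, combined with the identity $\pi(A W^{R_j*}_{\tilde g})\Omega=V^{R_j*}_{\tilde g}\pi(\beta^{R_j}_{\tilde g}(A))\Omega$ from \eqref{eq: definition of V^R_g}, lets us move any $A$ through. Taking $A=\tilde P_n$ and using $\beta^{R_j}_{\tilde g}(\tilde P_n)$, which differs from $\tilde P_n$ only on block $j$, we then use Step 1 and \eqref{eq: U P block} to show that this discrepancy does not affect the vector. This yields \eqref{eq: W^R Omega_n = V^R Omega_n}. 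Closing this last identification rigorously is the step we expect to be hardest.
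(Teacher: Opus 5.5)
Your Step 1 is correct, and it takes a slightly different route from the paper. The paper uses \eqref{eq: from V^R_i to V^R_j} to rewrite $V^{L_j}_gV^{R_j}_g$ as a phase times $\pi(\tilde U^{(j)}_g)V^{L_{j-N}}_gV^{R_{j+N+1}}_g$, and then commutes the two outer factors past $W^{R_j}_{\tilde g}$ using \eqref{eq: V and op commutating}. You instead use strict locality once, to identify $\mathrm{Ad}[\tilde U^{(j)*}_g]$ with $\beta_{g^{-1}}$ on $W^{R_j}_{\tilde g}$, and then implement $\beta_{g^{-1}}$ directly by $V^{L_j}_{g^{-1}}V^{R_j}_{g^{-1}}$. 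Both rest on the same inputs, namely Lemma~\ref{lemma: invariance under V^L V^R} and $[V^{R_j}_{g},V^{R_j}_{\tilde g}]=0$ from \eqref{eq: M3}, and yours is a bit shorter. Your handling of the $V$-side in Step 2 is also correct: commuting $V^{R_j*}_{\tilde g}$ with $\pi(\tilde P_n)$ via \eqref{eq: V and U commutation} makes explicit a step the paper leaves implicit.

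The gap is the last step of \eqref{eq: W^R Omega_n = V^R Omega_n}, and it comes from a misdiagnosis. You claim that deducing $\pi(W^{R_j*}_{\tilde g}\tilde P^{(j)}_q)\Omega=\pi(\tilde P^{(j)}_qW^{R_j*}_{\tilde g})\Omega$ from Step 1 would require $\Omega$ to be invariant under $\tilde U^{(j)}_g$. It does not. Left-multiplying \eqref{eq: UWU Omega=W Omega} by the unitary $\pi(\tilde U^{(j)}_g)$ gives
\[
\pi\big(W^{R_j*}_{\tilde g}\tilde U^{(j)}_g\big)\Omega=\pi\big(\tilde U^{(j)}_gW^{R_j*}_{\tilde g}\big)\Omega\qquad\text{for every } g\in G .
\]
Multiplying by $\chi_q(g)$ and integrating against the Haar measure then yields exactly the commutation on $\Omega$ you need. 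This is the paper's argument. Combined with your reduction (the blocks $k\neq j$ commute with $W^{R_j}_{\tilde g}$ by disjointness of supports) and with the $V$-side, it completes the proof.

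The workaround you propose instead does not close the gap, for two reasons. First, there is no discrepancy to control: since $G$ is Abelian, each $U^{(l)}_{\tilde g}$ commutes with each $U^{(l)}_g$, so $\beta^{R_j}_{\tilde g}(\tilde P_n)=\tilde P_n$ exactly. Second, \eqref{eq: definition of V^R_g} with $A=\tilde P_n$ then gives only $V^{R_j*}_{\tilde g}\pi(\tilde P_n)\Omega=\pi(\tilde P_nW^{R_j*}_{\tilde g})\Omega$, with $\tilde P_n$ to the left of $W^{R_j*}_{\tilde g}$. By contrast, $\pi(W^{R_j*}_{\tilde g})\tilde\Omega_n$ is proportional to $\pi(W^{R_j*}_{\tilde g}\tilde P_n)\Omega$. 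Passing from one order to the other is precisely the integrated form of Step 1 that you set aside.
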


\begin{proof}

Let $(\mathcal{H}, \pi, \Omega)$ be the GNS representation of $\omega$. We note first that~\eqref{eq: M3} and the Abelianness of $G$ imply that $[V^{R_j}_{ g}, V^{R_j}_{\tilde g}] = 0$ for all $g,\tilde g\in G$.
Since the support of $W^{R_j}_{\tilde{g}}$ lays outside of $L_{j-N}$ and $R_{j+N+1}$, by \eqref{eq: V and op commutating}, 
\begin{equation*}
    [V^{L_{j-N}}_{g}, \pi(W^{R_j}_{\tilde{g}})] = 0 \quad\mathrm{and}\quad [V^{R_{j+N+1}}_{g}, \pi(W^{R_j}_{\tilde{g}})] = 0.
\end{equation*}
Therefore, using Lemma~\ref{lem:translating Vs} and Lemma~\ref{lemma: invariance under V^L V^R},
\begin{align*}
\pi(W^{R_j*}_{\tilde{g}}) \Omega 
&= V^{R_j*}_{\tilde{g}} \Omega 
= \overline{c^j_g} V^{R_j*}_{\tilde{g}} V^{L_j*}_{g} V^{R_j*}_{g} \Omega
= \overline{c^j_g} V^{L_j*}_{g} V^{R_j*}_{g} V^{R_j*}_{\tilde{g}} \Omega \\
&= \overline{c^j_g} V^{L_j*}_{g} V^{R_j*}_{g} \pi(W^{R_j*}_{\tilde{g}}) \Omega.
\end{align*}
Then, the identity~\eqref{eq: from V^R_i to V^R_j} yields a $c\in U(1)$ depending on $g,i,j$, such that
\begin{equation*}
V_{g}^{L_j*} V_{g}^{R_j*} \pi(W^{R_j*}_{\tilde{g}}) \Omega 
= c V^{L_{j-N}*}_{g} V^{R_{j+N+1}*}_{g} \pi( U_{g}^{[j-N,j+N]*} )  \pi(W^{R_j*}_{\tilde{g}}) \Omega 
\end{equation*}
By~\eqref{eq: V and U commutation}, all factors now commute with each other, and so
\begin{equation*}
V_{g}^{L_j*} V_{g}^{R_j*} \pi(W^{R_j*}_{\tilde{g}}) \Omega 
= c  \pi( U_{g}^{[j-N,j+N]*} )  \pi(W^{R_j*}_{\tilde{g}}) V^{L_{j-N}*}_{g} V^{R_{j+N+1}*}_{g} \Omega 
\end{equation*}
It remains to use~\eqref{eq: from V^R_i to V^R_j} again to conclude that
\begin{equation*}
V_{g}^{L_j*} V_{g}^{R_j*} \pi(W^{R_j*}_{\tilde{g}}) \Omega
=c^j_g\pi( U_{g}^{[j-N,j+N]*} )  \pi(W^{R_j*}_{\tilde{g}}) \pi( U_{g}^{[j-N,j+N]} ) \Omega,
\end{equation*}
which is~\eqref{eq: UWU Omega=W Omega}.

With the definition~\eqref{eq: definition of P on blocks}, this implies that $\pi(W^{R_j*}_{\tilde{g}})\pi(\tilde P^{(j)})\Omega= \pi( \tilde P^{(j)} ) \pi(W^{R_j*}_{\tilde{g}})\Omega$. Since $P^{(k)}$ and $W^{R_j}_{\tilde{g}}$ commute for all $k\neq j \in K_n$ because they have disjoint supports, we conclude that
\begin{equation*}
\pi(W_{\tilde{g}}^{R_j*}) \pi( \tilde P_n )\Omega 
= \pi( \tilde P_n ) \pi(W_{\tilde{g}}^{R_j*})\Omega
= \pi( \tilde P_n ) V_{\tilde{g}}^{R_j*}\Omega
\end{equation*}
for all $j\in K_n$. Up to the normalization, this is~\eqref{eq: W^R Omega_n = V^R Omega_n}.
\end{proof}

\begin{remark}
    (i) One may wish to reproduce the proof above in the case where $W^{R_j}_{\tilde{g}}$ are only $f_W$-localized. Then \eqref{eq: UWU Omega=W Omega} and \eqref{eq: W^R Omega_n = V^R Omega_n} are no longer exact equalities but rather almost equalities with a bound that can be calculated. For example,
    \begin{equation*}
        \| \pi(W^{R_j*}_{\tilde{g}})\Omega - \pi(\tilde{U}^{(j)}_gW^{R_j*}_{\tilde{g}}\tilde{U}^{(j)*}_g)\Omega \| \leq 2f_W(N)
    \end{equation*}
    where $N$ is the radius of the measurement blocks. This follows by approximating $W^{R_j}_{\tilde{g}}$ with a strictly local operator supported on $[j-N,j+N]$. Similarly, \eqref{eq: W^R Omega_n = V^R Omega_n} becomes
    \begin{equation*}
        \| V^{R_j*}_{\tilde{g}}\tilde \Omega_n - \pi(W^{R_j*}_{\tilde{g}})\tilde \Omega_n \|\leq 4\frac{f_W(N)}{(\tilde{\mathcal{N}}_n)^\frac{1}{2}}
    \end{equation*}
    where $\tilde{\mathcal{N}}_n$ comes from the renormalization of the vector $\Omega_n$. As $\tilde{\mathcal{N}}_n$ decays in general exponentially in $n$, this bound is of little value since $N$ is fixed. This `small denominator' problem is the technical hurdle to prove Theorem~\ref{theorem: long range correlations with QCA} for the general short-range entanglement definition \ref{def: short range entangled states}.

\noindent (ii) The result of Theorem~\ref{theorem: long range correlations with QCA} is fundamentally related to the commutation of the measurement and the edge operators $W^{R_j}_g$. Indeed, with the additional assumption that 
    \begin{equation*}
    \omega(P_{n,\textbf{q} }\mathrm{Ad}[W^{R_j}_{{g}}](A)P_{n,\textbf{q} })= \omega\circ \mathrm{Ad}[W^{R_j}_{{g}}](P_{n,\textbf{q} }AP_{n,\textbf{q} }), \quad \forall A \in \mathcal{A},
\end{equation*}
    Lemma~\ref{lemma: W^R acts the same on measured state} holds for any factorizable automorphism $\alpha$. In that case, Lemma~\ref{lemma: finite correlation measured QCA} below and Theorem~\ref{theorem: long range correlations with QCA} apply without additional assumption. 
\end{remark}

\subsection{Arbitrary long-range correlations}\label{sec:QCA post correlations}

In the previous section, \eqref{eq: W^R Omega_n = V^R Omega_n} shows that the strictly local operator originally defined with respect to $\omega$ has the same action on $\tilde \omega_n$, namely
\begin{align*}
    \tilde \omega_n \circ\beta^{R_j}_{{g}}(A) 
    &=\langle\tilde\Omega_n,V_{g}^{R_j} \pi(A) V_{g}^{R_j*}\tilde\Omega_n\rangle \\
    &= \langle\tilde\Omega_n,\pi(W_{g}^{R_j}) \pi(A) \pi(W_{g}^{R_j*})\tilde\Omega_n\rangle
    = \tilde \omega_n\circ\mathrm{Ad}[W^{R_j}_{{g}}](A).
\end{align*}
We continue to follow the reasoning of Section~\ref{subsec: computation of correlation after measurements} and prove that $W^{R_j}_{\tilde{g}}$ are correlated over arbitrary long distance after the block measurements. For any $i<j\in K_n$, let the pair $(\widetilde{W}^{i}_{\tilde{g}},\widetilde{W}^{j}_{\tilde{g}})$ be as
\begin{equation}\label{eq: W^R tilde definition} 
\begin{split}
    &\widetilde{W}^{i}_{\tilde{g}}= {W}^{L_i}_{\tilde{g}}U^{[i-N-1,i)*}_{\tilde{g}}, \\
    &\widetilde{W}^{j}_{\tilde{g}}=U^{[j,j+N]*}_{\tilde{g}}W^{R_j*}_{\tilde{g}^{-1}} ,
\end{split}
\end{equation}
where $\tilde{g}$ is the element of $G$ in~\eqref{eq: sigma mixed}. Note that
\begin{align*}
    \pi(\widetilde{W}^{i*}_{\tilde{g}})\tilde{\Omega}_n &=U^{[i-N-1,i)}_{\tilde{g}} V^{L_i*}_{\tilde{g}} \tilde{\Omega}_n,\\
    \pi(\widetilde{W}^{j}_{\tilde{g}})\tilde{\Omega}_n &= U^{[j,j+N]*}_{\tilde{g}} V^{R_j}_{\tilde{g}} \tilde{\Omega}_n,
\end{align*}
by \eqref{eq: W^R Omega_n = V^R Omega_n}.

\begin{lemma}\label{lemma: finite correlation measured QCA}
    For all $i<j$ with $i-1,j\in K_n$,
    \begin{equation}\label{eq: finite correlation measured QCA}
    \left| \tilde \omega_n(\widetilde{W}^{i}_{\tilde{g}} \widetilde{W}^{j}_{\tilde{g}} ) -\tilde\omega_n(\widetilde{W}^{i}_{\tilde{g}} )\tilde \omega_n(\widetilde{W}^{j}_{\tilde{g}} )  \right| =1. 
\end{equation}
\end{lemma}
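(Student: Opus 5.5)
We mimic the argument of Section~\ref{subsec: computation of correlation after measurements} leading to Lemma~\ref{lemma:LRO n}, but now with the $n$-independent strictly local operators of Lemma~\ref{lma:Local Ws} and the identity \eqref{eq: W^R Omega_n = V^R Omega_n} in place of \eqref{eq:VWn}. The proof has two parts: the one-point functions vanish, and the two-point function has modulus one.

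\textbf{Vanishing of the one-point functions.} We first show $\tilde\omega_n(\widetilde W^j_{\tilde g})=0$. By \eqref{eq: W^R Omega_n = V^R Omega_n} and the displayed relation following \eqref{eq: W^R tilde definition}, we have
\begin{equation*}
\tilde\omega_n(\widetilde W^j_{\tilde g}) = \langle \tilde\Omega_n, \pi(U^{[j,j+N]*}_{\tilde g}) V^{R_j}_{\tilde g}\tilde\Omega_n\rangle ,
\end{equation*}
up to the phase $\mu_\omega(\tilde g,\tilde g^{-1})$. We then insert $V^{L_j}_{\tilde h}V^{R_j}_{\tilde h}$ and its adjoint around this vector. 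This is legitimate because $\tilde\omega_n$ is $\tilde h$-invariant (Lemma~\ref{lemma: post meas SRE}, using (M2)), so Lemma~\ref{lemma: invariance under V^L V^R} holds for $\tilde\Omega_n$ up to a phase $c$ that cancels between the two insertions. The operator $V^{L_j}_{\tilde h}$ commutes with $V^{R_j}_{\tilde g}$ by \eqref{eq: V^L and V^R commutation}, and with $\pi(U^{[j,j+N]}_{\tilde g})$ by \eqref{eq: V and U commutation}. The operator $V^{R_j}_{\tilde h}$ commutes with $\pi(U^{[j,j+N]*}_{\tilde g})$ by \eqref{eq: V and U commutation}, since $\tilde g\tilde h=\tilde h\tilde g$, and it produces $\sigma_\omega(\tilde g,\tilde h)^{\pm1}\neq 1$ when passed through $V^{R_j}_{\tilde g}$. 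Hence the expectation equals a nontrivial multiple of itself and therefore vanishes. The same argument applied to $\widetilde W^{i}_{\tilde g}$, using $V^{L_i}$, gives $\tilde\omega_n(\widetilde W^i_{\tilde g})=0$. It then suffices to show $|\tilde\omega_n(\widetilde W^i_{\tilde g}\widetilde W^j_{\tilde g})|=1$.

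\textbf{The two-point function.} We write
\begin{equation*}
\tilde\omega_n(\widetilde W^i_{\tilde g}\widetilde W^j_{\tilde g})=\langle \pi(\widetilde W^{i*}_{\tilde g})\tilde\Omega_n, \pi(\widetilde W^{j}_{\tilde g})\tilde\Omega_n\rangle
\end{equation*}
and express both vectors through the $V$'s, which gives
\begin{equation*}
\langle\tilde\Omega_n, V^{L_i}_{\tilde g}\,\pi(U^{[i-N-1,i)*}_{\tilde g}U^{[j,j+N]*}_{\tilde g})\,V^{R_j}_{\tilde g}\tilde\Omega_n\rangle
\end{equation*}
up to a phase. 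Using \eqref{eq: from V^R_i to V^R_j}, we rewrite $V^{R_j}_{\tilde g}$ as a phase times $\pi(U^{[i,j)*}_{\tilde g})V^{R_i}_{\tilde g}$. By \eqref{eq: V and U commutation}, all the $U$-factors commute with the $V$'s, so they can be moved to act directly on $\tilde\Omega_n$. Their union is the string $U^{[i-N-1,j+N]*}_{\tilde g}$. Because $i-1,j\in K_n$ and the blocks are adjacent, this interval is exactly a union of whole measurement blocks $[k-N,k+N]$ with $k\in K_n$. This is the step requiring care: matching the block endpoints with the shifts built into \eqref{eq: W^R tilde definition}.

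\textbf{Conclusion.} Once the string is identified as a union of whole blocks, \eqref{eq: U P block} turns it into the scalar $\prod_k \chi_{q_k}(\tilde g)$, which has modulus one. What remains is $\langle\tilde\Omega_n, V^{L_i}_{\tilde g}V^{R_i}_{\tilde g}\tilde\Omega_n\rangle$, which has modulus one by Lemma~\ref{lemma: invariance under V^L V^R} applied to the $G$-invariant state $\tilde\omega_n$; this invariance follows as in Lemma~\ref{lemma: post meas SRE}, since $\beta_g(\tilde P_n)=\tilde P_n$. Combining this with the vanishing one-point functions yields \eqref{eq: finite correlation measured QCA}.
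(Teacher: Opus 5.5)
Your proposal is correct and follows the paper's proof essentially step for step. The one-point functions vanish by inserting $V^{L_j}_{\tilde h}V^{R_j}_{\tilde h}$, which is legitimate by the $\tilde h$-invariance of $\tilde\omega_n$ coming from (M2), and then invoking $\sigma_\omega(\tilde g,\tilde h)\neq 1$; the two-point function is reduced via \eqref{eq: from V^R_i to V^R_j} and \eqref{eq: V and U commutation} to the string $U^{[i-N-1,j+N]*}_{\tilde g}$, which is a union of whole blocks and hence a phase by \eqref{eq: U P block}, times $\langle\tilde\Omega_n,V^{L_i}_{\tilde g}V^{R_i}_{\tilde g}\tilde\Omega_n\rangle$ of modulus one, with your explicit phase and block-alignment bookkeeping only making precise what the paper leaves implicit.
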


\begin{proof} 
We first show that $\widetilde{W}^{i}_{\tilde{g}}$ and $\widetilde{W}^{j}_{\tilde{g}}$ have a zero expectation value in the state $\tilde \omega_n$. First of all,
\begin{equation*}
    \tilde \omega_n(\widetilde{W}^j_{\tilde{g}} )=\langle\tilde\Omega_n,\pi(U^{[j,j+N]*}_{\tilde{g}})V^{R_j}_{\tilde{g}} \tilde\Omega_n\rangle \qquad 
\end{equation*}
As in Lemma~\ref{lemma: post meas SRE}, and since $\tilde h$ belongs to the centralizer of $G$, the state $\tilde\Omega_n$ is invariant under $V^{L_j}_{\tilde{h}}V^{R_j}_{\tilde{h}}$, and using non-degeneracy~\eqref{eq: sigma mixed}, we conclude that 
    \begin{align*}
    \tilde \omega_n(\widetilde{W}^j_{\tilde{g}})
    &=\overline{\tilde c_{\tilde h,n}^j}\langle\tilde \Omega_n,\pi(U^{[j,j+N]*}_{\tilde{g}})V^{R_j}_{\tilde{g}} V^{L_j}_{\tilde{h}}V^{R_j}_{\tilde{h}}\tilde\Omega_n\rangle\\
    &= \overline{\tilde c_{\tilde h,n}^j}\sigma_\omega(\tilde{g},\tilde{h})\langle\tilde\Omega_n,V^{L_j}_{\tilde{h}}V^{R_j}_{\tilde{h}}\pi(U^{[j,j+N]*}_{\tilde{g}})V^{R_j}_{\tilde{g}} \tilde\Omega_n\rangle\\
    &= \sigma_\omega(\tilde{g},\tilde{h})\tilde\omega_n(\widetilde{W}^j_{\tilde{g}}).
\end{align*}
Hence
\begin{equation*}
\tilde \omega_n(\widetilde{W}^j_{\tilde{g}})=0,
\end{equation*}
and similarly $\tilde \omega_n(\widetilde{W}^{i}_{\tilde{g}}  )=0$.

It remains to compute the expectation value of their product,
\begin{align*}
\tilde \omega_n( \widetilde{W}^{i}_{\tilde g}\widetilde{W}^j_{\tilde g} )
&= \langle \tilde \Omega_n, V_{\tilde g}^{L_i} \pi(U^{[i-N-1,i)*}_{\tilde{g}} 
U^{[j,j+N]*}_{\tilde{g}}) V_{\tilde g}^{R_j} \tilde\Omega_n \rangle.
\end{align*}
Shifting $V_{\tilde g}^{R_j}$ to $V_{\tilde g}^{R_i}$ using \eqref{eq: from V^R_i to V^R_j} and using the commutation of $V_{\tilde g}^{L_i}$ with every $U^{(k)*}_{\tilde g}$, we obtain
\begin{align*}
\tilde \omega_n( \widetilde{W}^i_{\tilde g}\widetilde{W}^j_{\tilde g} )
&{=} e^{i \gamma({\tilde g},i,j)} \langle \tilde\Omega_n, \pi(
U^{[i-N-1,j+N]*}_{\tilde{g}}
) V_{\tilde g}^{L_i} V_{\tilde g}^{R_i} \tilde\Omega_n \rangle\\
& = e^{i \gamma({\tilde g},i,j)} \langle \tilde\Omega_n, \pi( 
\bigotimes_{\substack{l\in K_n \\ i-1\leq l \leq j}} \widetilde{U}_{{\tilde g}}^{(l)*} 
) \tilde\Omega_n \rangle.
\end{align*}
Now, by  \eqref{eq: U P block}, we have that
\[
\pi ( \bigotimes_{\substack{l\in K_n \\ i-1\leq l \leq j}} \widetilde{U}_{\tilde g}^{(l)*}  ) \tilde\Omega_n 
=  \prod_{\substack{l\in K_n \\ i-1\leq l \leq j}} \chi_{q_l}(\tilde g)  \tilde\Omega_n,
\]
and since the characters are $U(1)$-valued, $\vert \tilde \omega_n (\widetilde{W}^i_{\tilde g}\widetilde{W}^j_{\tilde g}  ) \vert = 1$ as we had set to prove.
\end{proof}

\subsection{Proof of Theorem~\ref{theorem: long range correlations with QCA}}

The set of states on $\mathcal{A}$ is compact with respect to the weak-$*$ topology. Therefore, there exists a subsequence $(\tilde\omega_{n_m})_{m\in \mathbb{N}}$ and a state $\nu$ such that for all $A\in\mathcal{A}$, $\tilde\omega_{n_m}(A)\rightarrow \nu(A)$. Thanks to Lemma~\ref{lemma: finite correlation measured QCA}, we show that the state $\nu$ has arbitrarily long correlations. 
\begin{proof}
Let $r>0$, and let $m_0$ such that there is $i<j\in K_{n_{m_0}}$ and $|j-i|>r$. Then, for all $m\geq m_0$, $i,j\in K_{n_m}$.   
    
    \begin{equation*}
        \left| \tilde\omega_{n_m}(\widetilde{W}^i_{\tilde{g}} \widetilde{W}^j_{\tilde{g}} ) -\tilde\omega_{n_m}(\widetilde{W}^i_{\tilde{g}} )\tilde\omega_{n_m}(\widetilde{W}^j_{\tilde{g}} )  \right| =1.
    \end{equation*}
    Since the operators $\widetilde{W}^{i}_{\tilde{g}}$ and $\widetilde{W}^{j}_{\tilde{g}}$ do not depend on $m$ and are almost-localized, the weak convergence yields that
    \begin{equation}\label{eq:nu clustering}
        \left| \nu(\widetilde{W}^i_{\tilde{g}} \widetilde{W}^j_{\tilde{g}} ) -\nu(\widetilde{W}^i_{\tilde{g}} )\nu(\widetilde{W}^j_{\tilde{g}} )  \right| =1, \quad \forall i<j. 
    \end{equation}
    This is in contradiction with Lemma~\ref{lemma: decay of correlation}, thus the state $\nu$ is LRE. 
\end{proof}

\subsection{On the purity of the limiting state}\label{Section: nu is mixed}

To conclude, we recall the fact that pure states on the infinite chain must be clustering. In particular, (\ref{eq:nu clustering}) implies that, although the states $\tilde \omega_n$ that have seen a finite number of measurements are pure, their limit $\nu$ is not pure. 
\begin{lemma}\label{lem:pure-clustering}
Let $\nu$ be a pure state on the quasi-local algebra $\mathcal{A}$. Let $(B_m)_{m\in\mathbb{N}}\subset\mathcal{A}$ be a sequence with $\sup_m\Vert B_m\Vert<\infty$ and such that for every strictly local $A\in\mathcal{A}_{\rm loc}$ there is $m_A\in\mathbb{N}$ with
\begin{equation*}
    [B_m, A] = 0, \qquad \forall m\geq m_A.
\end{equation*}
Then
\begin{equation*}
    \lim_{m\to\infty}\big( \nu(AB_m) - \nu(A)\nu(B_m) \big) = 0, \qquad \forall A\in\mathcal{A}.
\end{equation*}
\end{lemma}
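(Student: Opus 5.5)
The plan is to pass to the GNS representation $(\mathcal{H}_\nu,\pi_\nu,\Omega_\nu)$ of $\nu$. Purity of $\nu$ means $\pi_\nu$ is irreducible, so $\pi_\nu(\mathcal{A})'' = \mathcal{B}(\mathcal{H}_\nu)$. We may assume $\Vert A\Vert\le 1$ and that $A$ is strictly local. Indeed, both sides are linear in $A$, bounded by $2\Vert A\Vert\sup_m\Vert B_m\Vert$, and $\mathcal{A}_{\rm loc}$ is dense, so the general case follows by an $\varepsilon/3$ argument. Setting $b:=\sup_m\Vert B_m\Vert$, it then suffices to show that the vectors
\begin{equation*}
    \xi_m:=\pi_\nu(B_m)\Omega_\nu-\nu(B_m)\Omega_\nu
\end{equation*}
satisfy $\langle \pi_\nu(A^*)\Omega_\nu,\xi_m\rangle\to 0$ for every local $A$. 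Equivalently, $\xi_m\rightharpoonup 0$ weakly along the whole sequence. This is clearly enough, since $\nu(AB_m)-\nu(A)\nu(B_m)=\langle\pi_\nu(A^*)\Omega_\nu,\xi_m\rangle$.

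To prove weak convergence, I take any weak-$*$ (operator) cluster point $T$ of the bounded sequence $\pi_\nu(B_m)$ in $\mathcal{B}(\mathcal{H}_\nu)$. For every local $C$ we have $[\pi_\nu(B_m),\pi_\nu(C)]=0$ for $m$ large, and commutation with a fixed operator is preserved under weak operator limits. Hence $T$ commutes with $\pi_\nu(\mathcal{A}_{\rm loc})$, and by norm density with $\pi_\nu(\mathcal{A})$. Irreducibility then forces $T=t\,\mathbb{I}$ for some $t\in\mathbb{C}$. Along the subnet converging to $T$, we get $\nu(B_m)=\langle\Omega_\nu,\pi_\nu(B_m)\Omega_\nu\rangle\to t$, and also $\langle\eta,\pi_\nu(B_m)\zeta\rangle\to t\langle\eta,\zeta\rangle$ for all $\eta,\zeta$. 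Consequently $\langle\eta,\xi_m\rangle\to t\langle\eta,\Omega_\nu\rangle-t\langle\eta,\Omega_\nu\rangle=0$ along that subnet.

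The main obstacle is turning this subnet statement into convergence of the full sequence, and handling the fact that the weak operator topology on bounded sets of a possibly nonseparable space is compact but not sequentially compact. I would argue by contradiction. Suppose that for some local $A$ and some $\varepsilon>0$ we have $|\langle\pi_\nu(A^*)\Omega_\nu,\xi_m\rangle|\ge\varepsilon$ along a subsequence $m_k$. The ball of radius $b$ in $\mathcal{B}(\mathcal{H}_\nu)$ is WOT-compact, so $(\pi_\nu(B_{m_k}))_k$ has a WOT-convergent subnet. The argument above applies verbatim to this subnet: the commutation hypothesis holds eventually along any subnet of a sequence with $m\to\infty$. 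This yields $\langle\pi_\nu(A^*)\Omega_\nu,\xi_{m}\rangle\to 0$ along the subnet, contradicting the lower bound $\varepsilon$. Using nets instead of sequences is what makes the argument rigorous without any separability assumption. A bonus is that the limit $t$ never needs to be identified, since it cancels in $\xi_m$.

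Finally, I would remark how this is used. In \eqref{eq:nu clustering} one takes $A=\widetilde W^{i}_{\tilde g}$ and $B_m=\widetilde W^{j_m}_{\tilde g}$ with $j_m\to\infty$. These are strictly local by Lemma~\ref{lma:Local Ws}, so they eventually commute with any local observable, and the lemma would give vanishing correlations. This contradicts the value $1$ in \eqref{eq:nu clustering}, so $\nu$ is not pure.
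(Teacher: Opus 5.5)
Your proof is correct and follows the paper's argument. You pass to the GNS representation, show that any weak-operator cluster point of $\pi_\nu(B_m)$ lies in $\pi_\nu(\mathcal{A})'$ and is therefore a scalar by irreducibility, and conclude because that scalar cancels in $\nu(AB_m)-\nu(A)\nu(B_m)$; your subnet-plus-contradiction step makes rigorous the paper's closing remark (``this holds for any limiting point''), and it avoids the paper's implicit use of separability of $\mathcal{H}_\nu$ (true here since $\mathcal{A}$ is separable) to extract a convergent subsequence.
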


\begin{proof}
Let $(\mathcal{H}_\nu, \pi_\nu, \Omega_\nu)$ be a GNS triple for $\nu$. Since $\sup_m\Vert \pi(B_m)\Vert<\infty$ and the weak-* and weak operator topology agree on bounded sets, Banach-Alagoglu implies that $(\pi(B_m))_{m\in\mathbb{N}}$ has a weakly convergent subsequence $\pi(B_{m_k})\to B\in\mathcal{B}(\mathcal{H})$. For any $A\in\mathcal{A}_{\rm loc}$ and any $\xi,\eta\in\mathcal{H}_\nu$,
\begin{equation*}
    \langle \xi, [B,\pi_\nu(A)]\eta\rangle = \lim_{k\to\infty} \langle \xi, [\pi(B_{m_k}),\pi(A)]\eta\rangle = 0,
\end{equation*}
because the commutator vanishes for $m_k\geq m_{A}$. For $A\in\mathcal{A}$, $\Vert[B,\pi(A)]\Vert \leq \Vert[B,\pi(A_k)]\Vert + 2\Vert B\Vert\,\Vert A-A_k\Vert$, namely $[B,\pi(A)] = 0$. Hence $B\in\pi(\mathcal{A})'$. Since $\nu$ is pure, $\pi$ is irreducible and so $B=\lambda\mathbb{I}$ for some $\lambda\in\mathbb{C}$.

With this,
\begin{equation*}
    \begin{cases}
        &\nu(AB_{m_k}) = \langle\Omega_\nu, \pi_\nu(A)\pi_\nu(B_{m_k})\Omega\rangle \to \lambda\,\nu(A) \\
        &\nu(B_{m_k}) = \langle\Omega_\nu, \pi(B_{m_k})\Omega_\nu\rangle \to \lambda
    \end{cases}
\end{equation*}
and therefore $\nu(AB_{m_k})-\nu(A)\nu(B_{m_k})\to 0$. It remains to note that this holds for any limiting point of $(\pi(B_m))_{m\in\mathbb{N}}$ to conclude.
\end{proof}

\noindent Note that while any pure state is clustering, an SRE state exhibits fast decay of correlations, which is a much stronger condition.

\bigskip\bigskip
\noindent \textbf{Acknowledgments.} GT and SB would like to thank Robert Raussendorf and Andrew Potter for many inspiring discussions on measurements and entanglement. This work was supported by NSERC and the European Commission under the Grant \emph{Foundations of Quantum Computational Advantage}. SB and MR acknowledge further financial support from NSERC of Canada. 

\bigskip
\noindent \textbf{Conflict of Interest.} The authors declare no conflict of interest. 

\noindent \textbf{Data availability.} No data were used for the research described in the article.

\bibliographystyle{unsrt}
\bibliography{biblio}
\end{document}